\tikzset{>={Latex[width=2mm,length=2mm]}}
\tikzstyle{vertex}=[circle, draw, inner sep=0pt, minimum size=6pt]
\newcommand{\IGNORE}[1]{}
\newcommand{\noi}{{\noindent}}
\newcommand{\wt}{\ensuremath{\widetilde}}
\newcommand{\opt}{\textsl{opt}}
\newtheorem{theorem}{Theorem}[section]
\newtheorem{proposition}[theorem]{Proposition}
\newtheorem{lemma}[theorem]{Lemma}
\newtheorem{claim}[theorem]{Claim}
\newtheorem{defn}{Definition}[section]
\newcommand{\authremark}[1]{\smallskip\noi\textbf{Remark}:{#1}\newline\smallskip}
\newcommand{\DTWO}{{\textup{D2}\/}}
\newcommand{\cost}{\textup{cost}}
\newcommand{\credit}{\textup{credit}}
\newcommand{\bcredit}{\textup{b-credit}}
\newcommand{\ccredit}{\textup{c-credit}}
\newcommand{\ncredit}{\textup{n-credit}}
\bmdefine{\bstar}{\star}
\newcommand\wsMAP{MAP$\bstar$}
\newcommand{\zsplit}{\textup{zero-cost~S2}\/}
\newcommand{\usplit}{\textup{unit-cost~S2}\/}
\newcommand{\csplit}{\textup{S$\{3,4\}$}\/}
\newcommand{\redcycle}{\textup{R4}\/}
\newcommand{\redgadget}{\textup{R8}\/}
\newcommand{\wth}{\widetilde{H}}
\newcommand{\wtg}{\widetilde{G}}
\newcommand{\wtc}{\widetilde{C}}
\newcommand{\wtp}{\widetilde{P}}
\newcommand{\wtr}{\widetilde{R}}
\newcommand{\wtv}{\widetilde{v}}
\newcommand\augC{\hat{Q}}
\newcommand{\sba}{\ensuremath{\mathcal{A}}}
\newcommand{\att}{\ensuremath{\mathcal{T}}}
\newcommand{\Nmin}{12}
\newcommand{\hv}{\hat{v}}
\newcommand{\hB}{\hat{B}}
\begin{document}

\title
{
An Improved Approximation Algorithm for
the Matching Augmentation Problem
}
\author{
J.Cheriyan	\thanks{C\&O~Dept., University of Waterloo, Canada}
\and
R.Cummings	\thanks{C\&O~Dept., University of Waterloo, Canada}
\and
J.Dippel	\thanks{McGill University, Montreal, Canada}
\and
J.Zhu		\thanks{C\&O~Dept., University of Waterloo, Canada}
}

  \date{\today}

\maketitle

\vspace{-0.25in}

\begin{abstract}
We present a $\frac53$-approximation algorithm for the matching augmentation
problem (MAP): given a multi-graph with edges of cost either zero or one
such that the edges of cost~zero form a matching, find a 2-edge connected
spanning subgraph (2-ECSS) of minimum cost.

A $\frac74$-approximation algorithm for the same problem was presented
recently, see Cheriyan, et~al.,
``The matching augmentation problem: a $\frac{7}{4}$-approximation
algorithm," {\em Math. Program.}, 182(1):315--354, 2020.

Our improvement is based on new algorithmic techniques, and some
of these may lead to advances on related problems.
\medskip

\noindent
{\bf Keywords}:
2-edge~connected graph,
2-edge~covers,
approximation algorithms,
connectivity augmentation,
forest   augmentation problem,
matching augmentation problem,
network design.
\end{abstract}

\clearpage

\section{ \label{s:intro}  Introduction}

The design and analysis of algorithms for problems in network design is
a core topic in Theoretical Computer Science and
Combinatorial Optimization.  Algorithmic research on problems such
as the minimum spanning tree problem and the Traveling Salesman
Problem (TSP) started decades ago and is a thriving area even today.
One of the key problems in this area is the minimum-cost 2-ECSS (2-edge
connected spanning subgraph) problem:
Given an undirected graph $G=(V,E)$ and a nonnegative cost for each
edge $e\in{E}$, denoted $\cost(e)$, find a minimum-cost spanning
subgraph $H=(V,F),\; F\subseteq{E}$, that is 2-edge connected.
Throughout, we use $n:=|V|$ to denote the number of nodes of $G$.
(Recall that a graph is 2-edge connected if it is connected and has
no ``cut edges", or equivalently, each of its nontrivial cuts has 
$\geq2$ edges.) This problem is NP-hard, and the best approximation
guarantee known, due to \cite{KV94}, is~2.

On the other hand, the best ``hardness of approximation threshold"
known is much smaller; for example, it is $(1 + \frac{\rho_{VC3}}{104})$
for the unweighted problem, where $1+\rho_{VC3}$ is the ``hardness
of approximation threshold" for the minimum vertex cover problem
on a graph with maximum degree~3, \cite[Theorem~5.2]{GGTW09}.
Also, the best lower~bound known on the integrality ratio of the
standard LP~relaxation (for minimum-cost 2-ECSS) is around 1.5 (thus,
well below~2), see \cite{CKKK08}.

\subsection{FAP, TAP and MAP}

Given this significant gap between the lower~bounds and the
upper~bounds, research in this area over the last two decades has
focused on the case of zero-one cost functions (every edge has a
cost of zero or one).
Let us call an edge $e\in E$ with $\cost(e)=0$ a zero-edge, and
let us call an edge $e\in E$ with $\cost(e)=1$ a unit-edge.
Intuitively, the zero-edges define some existing network that we
wish to augment (with unit-edges) such that the augmented
network is resilient to the failure of any one edge.
We may assume that the zero-edges form a forest; otherwise, there
is at least one cycle $C$ formed by the zero~edges, and in that
case, we may contract $C$, solve the problem on the resulting graph
$G/C$, find a solution (edge~set) $F$, and return
$F\cup{C}$ as a solution of the original problem.
Consequently, the minimum-cost 2-ECSS problem with a zero-one cost
function is called the \textit{Forest Augmentation Problem} or FAP.
The challenge is to design an approximation algorithm with guarantee
strictly less than~2 for FAP.

A well~known special case of FAP is TAP, the \textit{Tree Augmentation
Problem}: the set of zero-edges forms a spanning tree.
The first publication to break the ``2-approximation barrier" for
TAP is \cite{Na03} (2003), and since then there have been several
important advances, including recent work, see
\cite{EFKN09,KN:talg16,Adj17,N:esa17,CG:tap18,FGKS:soda18,GKZ:stoc18}.
Starting with the results of \cite{Adj17} (2017),
the improved approximation guarantees hold also
for a weighted version of TAP
where the edge-costs are bounded by a constant, that is,
the edge-costs are in the interval $[1,M]$, where $M=O(1)$.

Recently, see \cite{cdgkn:map}, there has been progress on another
important (in our opinion) special case of FAP called the
\textit{Matching Augmentation Problem} or MAP:
Given a multi-graph with edges of cost either zero or one such that
the zero-edges form a matching, find a 2-ECSS of minimum cost.
From the view-point of approximation algorithms,
MAP is ``complementary'' to TAP,
in the sense that the forest formed on $V(G)$ by the zero-edges has
many connected components, each with one node or two nodes,
whereas this forest has only one connected~component in TAP.

\subsection{Previous literature and possible approaches for attacking MAP}

Given the large body of work on network design and the design of
algorithms (for finding optimal solutions, as well as for finding
approximately optimal solutions), see the books in the area
\cite{Schrijver,WS:book,LRS:book}, one would expect some way of
breaking the ``2-approximation barrier" for FAP.  Unfortunately,
no such method is known (to the best of our knowledge).

Powerful and versatile methods such as the \textit{primal-dual
method} (see \cite{WS:book,GW95}) and the \textit{iterative rounding
method} (see \cite{LRS:book,J01}) have been developed for problems
in network design, but the proveable approximation guarantees for
these methods are $\ge2$.
\big(These methods work by rounding LP~relaxations, and informally
speaking, the approximation guarantee is proved via an upper bound
of~2 per~iteration on the ``integral cost incurred" versus the
``chargeable LP cost", and it is plausible that the factor of~2
cannot be improved for this type of analysis.\big)

Another important sequence of recent advances due to
\cite{Adj17,N:esa17,FGKS:soda18,GKZ:stoc18}
proves approximation guarantees (well) below~2 for TAP, based on a
new family of LP~relaxations that have so-called \textit{bundle
constraints}; these constraints are defined by a set of paths of
zero-edges.
These methods rely on the fact that the set of zero-edges forms
a connected graph that spans all the nodes, and unfortunately, this
property does not hold for MAP.

Combinatorial methods that may also exploit lower-bounds from
LP~relaxations have been developed for approximation algorithms for
unweighted minimum-cost 2-ECSS,
e.g., $\frac43$-approximation algorithms are presented in \cite{VV00,SV:cca,HVV:talg19}.
For the unweighted problem, there is a key lower bound of $n$ on $\opt$
(since any solution must have $\ge n$ edges, each of cost~one).
This fails to holds for MAP; indeed, the analogous lower bound on
$\opt$ is $\frac12 n$ for MAP. This rules out any direct extension
of these combinatorial methods (for the unweighted problem) to prove
approximation guarantees below~2 for MAP.

\subsection{Our results and techniques}

Our main contribution is a $\frac53$-approximation algorithm for
MAP, improving on the $\frac74$ approximation guarantee of
\cite{cdgkn:map}, see Theorems~\ref{thm:approxbydtwo},~\ref{thm:preproc}.

At a high level (hiding many important points), our algorithm is
based on a ``discharging scheme" where we compute a lower~bound on
$\opt$ (the optimal value) and fix a ``budget" of $\alpha$ times
this lower~bound (where $\alpha>1$ is a constant), ``scatter" this
budget over the graph $G$, use the budget to buy some edges to
obtain a ``base graph", then traverse the ``base graph" and buy
more edges to augment the ``base graph", so that (eventually) we
have a 2-ECSS whose cost is within the budget of $\alpha$ times our
lower~bound.
We mention that several of the results cited above are based on
discharging schemes, e.g.,
\cite{VV00,EFKN09,KN:talg16,HVV:talg19,cdgkn:map}.
In some more detail, but still at a high level, we follow the
method of \cite{cdgkn:map}.  We first pre-process the input instance
$G$, with the goal of removing all ``obstructions"
(e.g., cut~nodes), and we decompose $G$ into a list of ``well
structured" sub-instances $G_1, G_2, \dots$ that are pairwise
edge-disjoint.  Now, consider one of these sub-instances $G_i$ (it
has none of the ``obstructions").
We compute a subgraph $H_i$ whose cost is a lower bound on
$\opt(G_i)$.  Finally, we augment $H_i$ to make it 2-edge connected,
and use a credit-based analysis to prove an approximation guarantee.

Although our algorithm may appear to be similar to
the algorithm of \cite{cdgkn:map}, most of the details of the
algorithm and the analysis have been ``streamlined," and we have
``bypassed" the most difficult parts of the previous algorithm and
analysis.  Indeed, our presentation in this paper can be read
independently of \cite{cdgkn:map}.  (We have repeated a few definitions
and statements of results from \cite{cdgkn:map}.)

A 2-edge~cover is a subgraph that has at least two edges incident
to every node.  The minimum-cost 2-edge~cover is the key subgraph
used as a lower bound in our algorithm; we refer to it as \DTWO.
(\DTWO\ can be computed in polynomial time via extensions of Edmonds'
algorithm for computing a minimum-cost perfect matching.) Since
every 2-ECSS is a 2-edge~cover, we have $\cost(\DTWO)\leq\opt$.
So, by transforming \DTWO\ to a 2-ECSS of cost 
$\leq\frac53 \cost(\DTWO)$, we achieve our claimed approximation
guarantee.

Our pre-processing includes several new ideas, and moreover, it is
essential to handle new ``obstructions" that are not handled in
\cite{cdgkn:map}; indeed, \cite{cdgkn:map} has tight examples
such that $\opt/\cost(\DTWO) \ge \frac74 - \epsilon$ (for some $\epsilon>0$).
Although our algorithm handles several new ``obstructions", our
analysis and proofs for the pre-processing are simple.
One of our key tools (for our pre-processing analysis) is to prove
a stronger guarantee of $\max(\opt,\frac53\opt-2)$ rather than just
$\frac53\opt$. When we analyze our decomposition of an instance
into sub-instance(s), then this additive term of $-2$ is useful in
combining solutions back together at the end of the algorithm (when
we ``undo" the decomposition of $G$ into sub-instances $G_1,G_2,\dots$).

Our main algorithm (following \cite{cdgkn:map}) has two key subroutines
for transforming a \DTWO\ of a ``well structured" sub-instance
$G_i$ to a 2-ECSS of $G_i$ while ensuring that the total cost is
$\leq\frac53\cost(\DTWO)$.
\begin{enumerate}[(i)]
\item \textbf{Bridge covering step}:
The goal is to augment edges such that each connected component of
our ``current solution graph" $H_i$ is 2-edge-connected;
we start with $H_i:=\DTWO(G_i)$.
Our analysis is a based on a new and simple credit scheme that
bypasses some difficulties in the credit scheme of \cite{cdgkn:map}.
The most difficult part of the bridge covering subroutine of
\cite{cdgkn:map} handles a particular ``obstruction" that we call
a \usplit, see \cite[Lemma~24]{cdgkn:map} and see Section~\ref{s:prelims};
we ``eliminate" \usplit{}s during our pre-processing, thus, we
bypass the most difficult part of \cite{cdgkn:map}.

\item \textbf{Gluing step}:
Finally, this step merges the (already 2-edge connected) connected
components of $H_i$ to form a 2-ECSS of the sub-instance $G_i$.
A key part of this step handles so-called ``small 2ec-blocks";
these are cycles of cost~2 that occur as connected components of
$\DTWO(G_i)$
and stay unchanged through the bridge~covering step.
Observe that a ``small 2ec-block"
has only $\frac43$ credits (it has a ``budget" of $\frac{5}{3}(2)$,
and after paying for its two unit-edges, there is only $\frac43$
credits available).  Our gluing~step applies a careful swapping of
unit-edges for the ``small 2ec-blocks" while it merges the connected
components of $H_i$ into a 2-ECSS, and ensures that the net
augmentation cost does not exceed the available credit.

\end{enumerate}

\section{ \label{s:prelims}  Preliminaries}

This section has definitions and preliminary results.
Our notation and terms are consistent with \cite{Diestel},
and readers are referred to that text for further information.

Let $G=(V,E)$ be a (loop-free) multi-graph with edges of cost either zero or one
such that the edges of cost~zero form a matching.
We take $G$ to be the input graph, and
we use $n$ to denote $|V(G)|$.
Let $M$ denote the set of edges of cost~zero.
Throughout, the reader should keep in mind that $M$ is a matching;
this fact is used in many of our proofs without explicit reminders.
We call an edge of $M$ a \textit{zero-edge}
and we call an edge of $E-M$ a \textit{unit-edge}.

We denote the cost of an edge $e$ of $G$ by $\cost(e)$.
For a set of edges $F\subseteq E(G)$, $\cost(F):=\sum_{e\in F}\cost(e)$,
and for a subgraph $G'$ of $G$, $\cost(G'):=\sum_{e\in E(G')}\cost(e)$.

For ease of exposition, we often denote an instance $G,M$ by $G$;
then, we do not have explicit notation for the edge~costs of the instance,
but the edge~costs are given implicitly by $\cost:E(G)\rightarrow\{0,1\}$,
and $M$ is given implicitly by $\{e\in{E(G)}:\cost(e)=0\}$.

For a positive integer $k$, we use $[k]$ to denote the set $\{1,\dots,k\}$.

We use the standard notion of contraction of an edge, see \cite[p.25]{Schrijver}:
Given a multi-graph $H$ and an edge $e=vw$,
the contraction of $e$ results in the multi-graph $H/(vw)$ obtained from $H$
by deleting $e$ and its parallel copies and identifying the nodes $v$ and $w$.
(Thus every edge of $H$ except for $vw$ and its parallel copies
is present in $H/(vw)$; we disallow loops in $H/(vw)$.)

For a graph $H$ and a set of its nodes $S$,
$\Gamma_H(S):=\{w\in{V(H)-S}\,:\,v\in{S},vw\in{E(H)}\}$, thus,
$\Gamma_H(S)$ denotes the set of neighbours of $S$.

For a graph $H$ and a set of nodes $S\subseteq V(H)$,
$\delta_H(S)$ denotes the set of edges that have one end node in
$S$ and one end node in $V(H)-S$;
moreover,
$H[S]$ denotes the subgraph of $H$ induced by $S$, and
$H-S$  denotes the subgraph of $H$ induced by $V(H)-S$.
For a graph $H$ and a set of edges $F\subseteq E(H)$,
$H-F$ denotes the graph $(V(H),~E(H)-F)$.
We may use relaxed notation for singleton sets, e.g.,
we may use $\delta_H(v)$ instead of $\delta_H(\{v\})$, and
we may use $H-v$ instead of $H-\{v\}$, etc.

For any subgraph $K$ of a graph $H$ with $V(K)\subsetneq{V(H)}$,
an \textit{attachment} of $K$ is a node of $K$ that has a neighbour
in $V(H)-V(K)$.

We may not distinguish between a subgraph and its node~set;
for example, given a graph $H$ and a set $S$ of its nodes, we use
$E(S)$ to denote the edge~set of the subgraph of $H$ induced by $S$.

\subsection{2EC, 2NC, bridges and \DTWO}

A multi-graph $H$ is called $k$-edge connected if $|V(H)|\ge2$ and for
every $F\subseteq E(H)$ of size $<k$, $H-F$ is connected.
Thus, $H$ is 2-edge connected if it has $\ge2$ nodes and the deletion
of any one edge results in a connected graph.
A multi-graph $H$ is called $k$-node connected if $|V(H)|>k$ and for
every $S\subseteq V(H)$ of size $<k$, $H-S$ is connected.
We use the abbreviations \textit{2EC} for ``2-edge connected," and
\textit{2NC} for ``2-node connected."

We assume w.l.o.g.\ that the input $G$ is 2EC.
Moreover, for some (but not all) of our discussions, we assume that
there are $\leq 2$ copies of each edge (in the multi-graph under consideration);
this is justified since an edge-minimal 2-ECSS cannot have
three or more copies of any edge (see Proposition~\ref{propo:2ecdiscard} below).

For any instance $H$, let $\opt(H)$ denote the minimum cost of a
2-ECSS of $H$.  When there is no danger of ambiguity, we use $\opt$
rather than $\opt(H)$.

By a \textit{bridge} we mean
an edge of a connected (sub)graph whose removal results in two
connected~components, and by a \textit{cut~node} we mean a node of a
connected (sub)graph whose deletion results in
two or more connected~components.
We call a bridge of cost~zero a \textit{zero-bridge} and
we call a bridge of cost~one a \textit{unit-bridge}.

By a \textit{2ec-block} we mean a
maximal connected subgraph with two or more nodes that has no bridges.
(Observe that each 2ec-block of a graph $H$ corresponds to a
connected~component of order $\ge2$ of the graph obtained from $H$
by deleting all bridges.)
We call a 2ec-block \textit{pendant} if it is incident to exactly
one bridge.
We call a 2ec-block \textit{small} if it has $\le2$ unit-edges, and
we call it \textit{large} otherwise.

For a 2EC graph $G$ and a cut~node $v$ of $G$, a 2ec-$v$-block means
the subgraph of $G$ induced by $\{v\}\cup{V(C)}$ where $C$ is one
of the connected components of $G-v$.

The next result characterizes edges that are not essential for 2-edge~connectivity.

\begin{proposition} \label{propo:2ecdiscard}
Let $H$ be a 2EC graph and let $e=vw$ be an edge of $H$.
If $H-e$ has two edge-disjoint $v,w$~paths, then $H-e$ is 2EC.
\end{proposition}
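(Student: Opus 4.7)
The plan is to argue by contradiction via a cut analysis. Assume $H-e$ is not 2EC. Since $V(H-e)=V(H)$ and $|V(H)|\ge 2$ (because $H$ is 2EC), the only way $H-e$ can fail to be 2EC is if it has a bridge $f$; equivalently, there is a partition $(S,\overline{S})$ of $V(H)$ with $\delta_{H-e}(S)=\{f\}$.

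Next, I would use the 2-edge-connectivity of $H$ to pin down where $e$ lies. Since $H$ is 2EC, $|\delta_H(S)|\ge 2$. But $\delta_H(S)\subseteq\delta_{H-e}(S)\cup\{e\}=\{f,e\}$, so the only possibility is $\delta_H(S)=\{f,e\}$; in particular $e\in\delta_H(S)$, so (up to relabeling) $v\in S$ and $w\in\overline{S}$.

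Now I would invoke the hypothesis that $H-e$ contains two edge-disjoint $v,w$ paths $P_1,P_2$. Any $v,w$ path must contain at least one edge of $\delta_{H-e}(S)$ (since it crosses from $S$ to $\overline{S}$). Because $P_1$ and $P_2$ are edge-disjoint, together they use at least two distinct edges of $\delta_{H-e}(S)$, contradicting $\delta_{H-e}(S)=\{f\}$. This contradiction shows $H-e$ has no bridge, so $H-e$ is 2EC.

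There is no serious obstacle here; the only thing to be careful about is to rule out the degenerate case $|V(H-e)|<2$, which is immediate from $|V(H)|\ge 2$, and to correctly observe that $\delta_H(S)\setminus\{e\}=\delta_{H-e}(S)$ so that the hypothetical bridge $f$ of $H-e$ forces $e$ to cross the same cut in $H$.
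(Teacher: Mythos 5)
Your proof is correct: since $H$ is 2EC, $H-e$ is connected, so a failure of 2-edge-connectivity means a bridge, i.e.\ a nontrivial cut $\delta_{H-e}(S)$ of size one; then $\delta_H(S)\subseteq\{e,f\}$ forces $e$ to cross this cut, and the two edge-disjoint $v,w$~paths in $H-e$ would have to use two distinct edges of $\delta_{H-e}(S)$, a contradiction. The paper states this proposition without proof, treating it as a standard fact, and your cut-counting argument is exactly the kind of justification it implicitly relies on, so there is nothing to add.
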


The next lemma partially characterizes the cuts of size $\leq2$
in a graph obtained by ``uncontracting" a set of nodes of a 2EC graph.
It is our main tool for the analysis of our pre-processing steps.

\begin{lemma} \label{lem:uncontract}
Let $H$ be a 2EC graph and let $C\subsetneq V(H)$ be a set of
nodes such that the induced subgraph $H[C]$ is connected.
Suppose that $H^*$ is a 2-ECSS of $H/C$.
Let $H'$ be the spanning subgraph of $H$ with edge~set $E(C)\cup{E(H^*)}$.
Then $H'$ is a connected graph such that each of its bridges (if
any) is in $E(C)$.
\end{lemma}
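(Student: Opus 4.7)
The plan is to split the claim into two parts: first verify that $H'$ is connected, then show that no edge of $E(H^*)$ can be a bridge of $H'$, leaving $E(C)$ as the only possible location for bridges.

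I would start with a bookkeeping observation. Because $H/C$ is obtained from $H$ by identifying all of $C$ into a single vertex $c$ and deleting the edges of $H[C]$, each edge of $H^*$ (viewed as an edge of $H$) has at most one endpoint in $C$; in particular $E(C)$ and $E(H^*)$ are disjoint and $E(H')=E(C)\sqcup E(H^*)$. To show $H'$ is connected, I would use the hypothesis that $H[C]$ is connected on $C$, together with the fact that $H^*$ is a spanning connected subgraph of $H/C$: for any node $u\in V(H)\setminus C$ there is a $u$-to-$c$ path in $H^*$, and this path lifts to a walk in $H'$ from $u$ to some node of $C$. Combined with the connectedness of $H[C]$ this yields a path between any two nodes of $H'$.

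For the bridge part, I would pick an arbitrary edge $e\in E(H^*)$ and argue it is not a bridge of $H'$. Since $H^*$ is 2-edge connected, $H^*-e$ is still connected in $H/C$, so in $H/C$ every node has a path to $c$ avoiding $e$. Lifting, every node of $V(H)\setminus C$ has a walk in $(H^*-e)\subseteq H'-e$ ending at some node of $C$, and then $H[C]\subseteq H'-e$ connects all of $C$. Hence $H'-e$ is connected and $e$ is not a bridge. Therefore any bridge of $H'$ must lie in $E(H')\setminus E(H^*)=E(C)$, as required.

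There is no substantive obstacle here; the only thing to be careful about is the identification of edges of $H^*$ with their preimages in $H$, after which both parts reduce to essentially the same ``lift a path through $c$ and then use $H[C]$'' argument. A minor subtlety worth spelling out is that even though $H^*-e$ may be only connected and not 2EC, connectedness is all that is needed to extend any path through $c$ into a path in $H'-e$ touching $C$.
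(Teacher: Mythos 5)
Your proof is correct, and it takes a (mildly) different route from the paper's. The paper proves the lemma by classifying all cuts of $H'$ at once: for any $S$ with $\emptyset\neq S\neq V(H)$, if $S\subseteq V(H)-C$ or $C\subseteq S$ then $\delta_{H'}(S)=\delta_{H^*}(S)$ has size $\ge 2$ by 2-edge-connectivity of $H^*$, and otherwise $\delta_{H'}(S)$ contains an edge of $H[C]$ since $H[C]$ is connected; hence no cut is empty and every cut of size one consists of an edge of $E(C)$. You instead argue on the ``path side'' of the same duality: you lift $u$--$c$ paths of $H^*$ (and of $H^*-e$ for each $e\in E(H^*)$) to $H'$, note that such a path meets $c$ only at its end so the lift lands on some node of $C$, and then use the connectivity of $H[C]$ to finish; deleting any $e\in E(H^*)$ therefore leaves $H'$ connected, so bridges can only lie in $E(C)$. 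Both arguments rest on the same two facts (cuts of $H'$ not separating $C$ are cuts of $H^*$; cuts separating $C$ are hit by $H[C]$), but the paper's cut-counting version gives the conclusion for all edges simultaneously in one sweep, while your edge-by-edge path-lifting version is slightly more constructive and makes the bookkeeping about which edges of $H$ survive contraction (at most one endpoint in $C$, so $E(C)\cap E(H^*)=\emptyset$) explicit, which you handle correctly.
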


\begin{proof}
In the graph $H'$, observe that for every node~set $S$ such that
$\emptyset\neq{S}\subseteq{V(H)-C}$,
we have $|\delta_{H'}(S)|\ge2$, because
$\delta_{H'}(S) = \delta_{H'/C}(S) = \delta_{H^*}(S)$ 
and $|\delta_{H^*}(S)|\geq2$ since $H^*$ is 2EC.
Similarly, for every node~set $S$ such that
$C\subseteq{S}\neq{V(H)}$,
we have $|\delta_{H'}(S)|\ge2$.
For any other set of nodes $S$ of $H'$, with $\emptyset\neq{S}\neq{V(H)}$,
we have $|\delta_{H'}(S)|\ge1$, because
both $S\cap{C}$ and $(V(H)-S)\cap{C}$ are nonempty, hence,
$\delta_{H'}(S) \supseteq \delta_{H[C]}(S\cap{C}) = \delta_{H[C]}(C-S)$
and $|\delta_{H[C]}(S\cap{C})|\geq1$ since $H[C]$ is connected.

In other words,
every cut $\delta(S)$ of $H'$, with $\emptyset\neq{S}\neq{V(H)}$,
has size $\ge2$ except the cuts that consist of a single edge of $H[C]$;
$H'$ is connected since none of these cuts is empty, and if $H'$
is not 2EC, then each of its bridges is an edge of $H(C)$.
\end{proof}

By a \textit{2-edge~cover} (of $G$) we mean
a set of edges $F$ of $G$ such that
each node $v$ is incident to at least two edges of $F$
(i.e., $F\subseteq E(G): |\delta_F(v)|\ge2, \forall v\in{V(G)}$).
By $\DTWO(G)$ we mean any minimum-cost 2-edge~cover of $G$
($G$ may have several minimum-cost 2-edge~covers, and $\DTWO(G)$
may refer to any one of them);
when there is no danger of ambiguity, we use \DTWO\ rather than $\DTWO(G)$.

By a \textit{bridgeless 2-edge~cover} (of $G$) we mean a 2-edge~cover
(of $G$) that has no bridges.

The next result follows from Theorem~34.15 in \cite[Chapter~34]{Schrijver}.

\begin{proposition} \label{thm:computeD2}
There is a polynomial-time algorithm for computing \DTWO.
\end{proposition}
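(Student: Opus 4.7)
The plan is to exhibit the minimum cost 2-edge cover problem as a special case of a polynomially solvable matching-type problem covered by Chapter~34 of \cite{Schrijver}. The key observation is that a 2-edge cover of $G$ is exactly a \emph{b-edge cover} with $b(v)=2$ at every $v\in V(G)$; that is, an edge subset $F\subseteq E(G)$ with $|\delta_F(v)|\ge b(v)$ for all $v$. Since our edge costs lie in $\{0,1\}$ (in particular they are nonnegative), the instance falls directly into the scope of the weighted b-edge cover framework treated by Schrijver's Theorem~34.15.

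Next, I would invoke that theorem, which provides a (strongly) polynomial time algorithm for minimum weight b-edge cover via an Edmonds-style blossom algorithm. A reader who wants to see the reduction spelled out can verify it as follows: minimum weight b-edge cover reduces to minimum weight perfect b-matching by a standard complementation/slack trick (augment the graph with zero-cost auxiliary edges that can ``absorb'' excess capacity above $b(v)$), and minimum weight perfect b-matching is then solved by the classical weighted matching algorithm after a second, textbook reduction to minimum weight perfect matching on a blown-up graph (replacing each vertex $v$ by $b(v)=2$ copies and splitting each edge into an ``edge vertex'' gadget with appropriate zero-cost connectors). The size of the blown-up graph is polynomial in $|V(G)|+|E(G)|$, and Edmonds' matching algorithm runs in polynomial time, so the overall algorithm is polynomial.

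There is essentially no difficulty here beyond citing the right theorem: the main point to check — and the only mildly non-routine step — is that the reduction encodes the \emph{inequality} $|\delta_F(v)|\ge 2$ rather than an equality, which is exactly what the zero-cost slack edges accomplish. Consequently, the proof is a one-line citation of \cite[Theorem~34.15]{Schrijver} once the observation that $\DTWO$ is a $b$-edge cover with $b\equiv 2$ is recorded.
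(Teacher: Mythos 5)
Your proposal is correct and matches the paper's approach: the paper proves this proposition simply by citing Theorem~34.15 of \cite{Schrijver} (the b-edge cover result in Chapter~34), which is exactly the observation that \DTWO\ is a minimum-cost b-edge cover with $b\equiv 2$. Your additional sketch of the reduction chain to Edmonds' weighted matching is a correct elaboration of what underlies that citation, but no new idea beyond the paper's one-line justification is needed.
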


The next result states the key lower~bound used by our approximation algorithm.

\begin{lemma} \label{lem:dtwolb}
Let $H$ be any 2EC graph. Then we have $opt(H) \geq \cost(\DTWO(H))$.
\end{lemma}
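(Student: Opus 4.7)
The plan is to observe that the lemma follows almost immediately from the definitions: every 2-ECSS is a 2-edge~cover, and $\DTWO(H)$ is the \emph{minimum}-cost 2-edge~cover, so its cost can only be smaller.

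In more detail, I would first fix an optimal 2-ECSS of $H$, call it $F^{\bstar}\subseteq E(H)$ with $\cost(F^{\bstar})=\opt(H)$. The key step is to show that $F^{\bstar}$ is a 2-edge~cover, i.e., $|\delta_{F^{\bstar}}(v)|\geq 2$ for every $v\in V(H)$. Suppose for contradiction that some node $v$ has $|\delta_{F^{\bstar}}(v)|\leq 1$. If $|\delta_{F^{\bstar}}(v)|=0$, then the spanning subgraph $(V(H),F^{\bstar})$ is not even connected (since $|V(H)|\geq 2$ as $H$ is 2EC), contradicting 2-edge connectivity. If $|\delta_{F^{\bstar}}(v)|=1$, then the unique edge incident to $v$ is a bridge in $(V(H),F^{\bstar})$, again contradicting 2-edge connectivity. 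Hence $F^{\bstar}$ is a 2-edge~cover of $H$.

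By definition of $\DTWO(H)$ as a minimum-cost 2-edge~cover, we have $\cost(\DTWO(H))\leq \cost(F^{\bstar}) = \opt(H)$, which is the claimed inequality.

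There is essentially no obstacle here; the only thing to be careful about is the degenerate case $|V(H)|=1$, which is excluded by the definition of 2-edge connectivity ($|V(H)|\geq 2$) recalled in Section~\ref{s:prelims}. The lemma is really just packaging the observation already foreshadowed in the introduction (``Since every 2-ECSS is a 2-edge~cover, we have $\cost(\DTWO)\leq\opt$'') so that it can be cited cleanly in later analysis.
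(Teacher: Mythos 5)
Your proof is correct and follows exactly the reasoning the paper itself relies on (the lemma is stated without proof there, with the observation ``every 2-ECSS is a 2-edge~cover, hence $\cost(\DTWO)\leq\opt$'' already made in the introduction). Your verification that an optimal 2-ECSS has minimum degree~$\ge2$ is a fine, slightly more explicit write-up of that same one-line argument.
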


For any fixed positive integer $z$ (thus, $z=O(1)$) and any instance
of MAP, in time $O(1)$, we can determine whether the instance
has $\opt > {z}$, and if not, then we can find an optimal 2-ECSS
of the instance.

\begin{lemma} \label{lem:computeopt}
Let $H$ be an instance of MAP, and let $z$ be a fixed positive integer.
There is an $O(1)$-time algorithm to determine whether $\opt(H)
\geq z$.  Moreover, if $\opt(H) \leq z$, then a minimum-cost 2-ECSS
of $H$ can be found in $O(1)$ time.
\end{lemma}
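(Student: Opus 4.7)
My plan is to reduce the problem to brute-force enumeration on a graph of constant size by first showing that $\opt(H) \leq z$ forces $n := |V(H)|$ to be bounded by a function of $z$. Given such a bound, both parts of the lemma follow by exhaustive search on a constant-size instance, while the instances where the bound fails are handled by a trivial size check.

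The main ingredient I would establish first is the structural bound: \emph{if $\opt(H) \leq z$, then $n \leq 2z$}. To prove it, take an optimal 2-ECSS $H^*$ of $H$; since $H^*$ is 2EC, every node has $H^*$-degree at least $2$. Because the set $M$ of zero-edges is a matching, each node is incident to at most one zero-edge of $H^*$, so at least one unit-edge of $H^*$ is incident to every node. Thus the unit-edges of $H^*$ form an edge cover of $V(H)$, which has size at least $\lceil n/2 \rceil$, and combining with $\cost(H^*) \leq z$ yields $n \leq 2z$.

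With this in hand the algorithm is straightforward. In $O(1)$ time, check whether $n > 2z$; if so, output $\opt(H) > z$, which in particular settles the first claim with $\opt(H) \geq z$, and leaves the second claim vacuous. Otherwise $n \leq 2z$ is a constant, and by Proposition~\ref{propo:2ecdiscard} an edge-minimal 2-ECSS contains at most two parallel copies of any edge, while the matching property caps the number of zero-edge copies at one per pair of nodes. This leaves a candidate pool of $O(n^2) = O(1)$ edges; I would enumerate all $2^{O(1)}$ subsets of this pool, test each for 2-edge-connectivity by any standard linear-time routine (also $O(1)$ here), and output the minimum-cost 2-ECSS encountered. Comparing this minimum with $z$ decides $\opt(H) \geq z$ exactly, completing both claims.

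I do not anticipate a real obstacle: the only step requiring any actual argument is the bound $n \leq 2z$, which rests on the single observation that the matching structure of $M$ together with the minimum-degree condition of a 2-ECSS forces the unit-edges to cover the vertex set. Everything else reduces to bookkeeping on a constant-size graph and is routine.
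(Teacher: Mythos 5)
Your proposal is correct and follows essentially the same route as the paper: both establish the lower bound $\opt(H)\geq |V(H)|/2$ from the matching structure of the zero-edges (you via an edge-cover count, the paper via counting edges of a 2-ECSS minus the $\leq |V(H)|/2$ zero-edges), and then either answer immediately when $|V(H)|$ exceeds the resulting threshold or brute-force the constant-size instance. The minor differences (your explicit handling of parallel copies via Proposition~\ref{propo:2ecdiscard}, the paper enumerating unit-edge sets $F$ and testing $F\cup M$) are cosmetic.
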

\begin{proof}
Observe that $\opt(H)\geq |V(H)|/2$, because every 2-ECSS of $H$
has $\geq |V(H)|/2$ unit-edges; to see this, note that every 2-ECSS
of $H$ has $\geq |V(H)|$ edges and $H$ has $\leq |V(H)|/2$ zero-edges.

Our algorithm starts by checking whether $|V(H)|/2 \geq z$, and if
that holds, then clearly $\opt(H)\ge z$.
Otherwise, $|V(H)| < 2z$ (where $z=O(1)$), and our algorithm computes
$\opt(H)$; see the following discussion.

Suppose $|V(H)| < 2z$.
Note that the maximum size of an edge-minimal 2-ECSS of $H$ is $\le2|V(H)|-2$.
For each $k=1,\dots,2|V(H)|-2,$
the algorithm examines each set of unit-edges $F\subseteq E(H)$ of size $k$,
and checks whether $F\cup{M}$ is a 2-ECSS of $H$;
recall that $M$ denotes the set of zero-edges of $H$.
Clearly, $\opt(H)$ is given by the smallest $k=|F|$ such that
$F\cup{M}$ is a 2-ECSS of $H$, and the corresponding $F\cup{M}$ is
an optimal 2-ECSS of $H$.

The algorithm runs in time $O(2^{|E(H)|} |E(H)|) = O(1)$
since $|E(H)| \le |V(H)|^2 < 4z^2 = O(1)$.
\end{proof}

\subsection{Obstructions for the approximation guarantee}

There are several obstructions (e.g., cut~nodes)
that prevent our algorithm (and analysis)
from achieving our target approximation factor of $\frac53$.
We eliminate all such obstructions in a pre-processing step
that takes the given instance $G$ of MAP (the input)
and replaces it by a list of sub-instances $G_1,G_2,\dots,$ such that
(a)~none of the obstructions occurs in a sub-instance $G_i$,
(b)~the edge-sets of the sub-instances are pairwise-disjoint,
and
(c)~given a 2-ECSS of each sub-instance $G_i$ of approximately optimal cost,
we can construct a 2-ECSS of $G$ of cost $\le\frac53\opt(G)$.
(Precise statements are given later.)
The obstructions for our algorithm are:
\\
 \begin{enumerate}[(i)]
 \item  cut nodes,
 \item  parallel edges,
 \item  \zsplit,
 \item  \usplit,
 \item  \csplit,
 \item  \redcycle,
 \item  \redgadget.
 \end{enumerate}

Below, we formally define each of these obstructions.
Four of these obstructions were introduced in \cite{cdgkn:map}, and
readers interested in a deeper understanding may refer to that
paper, in particular, see the remark after \cite[Theorem~6]{cdgkn:map}
and see \cite[Figure~2]{cdgkn:map} for instances $G$ of MAP that
contain cut~nodes, parallel~edges, \zsplit{}s, or \redcycle{}s such
that $\opt(G)/\cost(\DTWO(G))\approx2$; informally speaking, an
approximation algorithm based on the lower~bound $\cost(\DTWO(G))$
on $\opt(G)$ fails to beat the approximation threshold of $2$ in
the presence of any of these four obstructions.
In an appendix (see Section~\ref{s:appendix}), we present instances
$G$ of MAP that contain either the \csplit\ obstruction or the
\redgadget\ obstruction (and none of the other six obstructions)
such that $\opt(G)/\cost(\DTWO(G))\approx\frac74$;
informally speaking, an approximation algorithm based on the
lower~bound $\cost(\DTWO(G))$ on $\opt(G)$ fails to beat the
approximation threshold of $7/4$ in the presence of any of these
two obstructions.
The remaining obstruction, \usplit, is relevant for our bridge~covering
step and its analysis; see the proof of
Proposition~\ref{propo:find-pseudo-ear}.  As mentioned before, by
``eliminating" \usplit{}s during our pre-processing, we bypass a
difficult part of \cite{cdgkn:map}, see \cite[Lemma~24]{cdgkn:map}.

\begin{defn}
By a \zsplit\ (also called a {bad-pair}),
we mean a zero-edge $e$ and its end~nodes, $u,v$,
such that $G-\{u,v\}$ has $\ge2$ connected components.
\end{defn}

\begin{defn}
By a \usplit,
we mean a unit-edge $e$ and its end~nodes, $u,v$,
such that $G-\{u,v\}$ has $\ge2$ connected components;
moreover, in the graph $G/\{u,v\}$,
there exist two distinct 2ec-$\hv$-blocks $B_1,B_2$ incident to the
contracted node $\hv$ such that
$\opt(B_i)\ge3$ and
$B_i$ has a zero-edge incident to the contracted node, $\forall{i}\in[2]$.
\end{defn}

\begin{defn} \label{d:csplit}
By an \csplit,
we mean an induced 2NC subgraph $C$ of $G$ with $|V(C)|\in\{3,4\}$
that has a spanning cycle of cost~two such that
$G-V(C)$ has $\ge2$ connected components, and
the cut $\delta(V(C))$ has no zero-edges;
moreover, in the graph $G/C$,
there exist two distinct 2ec-$\hv$-blocks $B_1,B_2$ incident to the
contracted node $\hv$ that have $\opt(B_1)\ge3$ and $\opt(B_2)\ge3$.
\end{defn}

\authremark{~The absence of \csplit{}s in instances of \wsMAP{} is
used only in Section~\ref{s:gluing}, see the proofs of
Lemmas~\ref{lem:smallblocks:a},~\ref{lem:Daux}.
Informally speaking, the presence of an \csplit\ $C$ in an instance
$G$ of MAP implies that there is a corresponding cycle $\hat{C}$
of cost~two with $|V(\hat{C})|=3$ or $|V(\hat{C})|=4$ such that
$G-V(\hat{C})$ is disconnected; our algorithm \& analysis for the
gluing~step could fail in the presence of an \csplit{},
see the appendix, Section~\ref{s:appendix}.
Moreover, there exist instances $G$ of MAP that contain \csplit{}s
and contain none of the other six obstructions such that
$\opt(G)/\cost(\DTWO(G))\approx\frac74$,
see the appendix, Section~\ref{s:appendix}, and
see \cite[Section~7.1]{cdgkn:map}.}

\begin{defn}
By an \redcycle\ (also called a {redundant 4-cycle}),
we mean an induced subgraph $C$ of $G$ with four nodes such that
$V(C)\not=V(G)$, $C$ contains a 4-cycle of cost~two, and $C$ contains
a pair of nonadjacent nodes that each have degree~two in $G$.
\end{defn}

\begin{defn} \label{d:redgadget}
By an \redgadget,
we mean an induced subgraph $C$ of $G$ with eight nodes such that
$V(C)\not=V(G)$,
$C$ contains two disjoint 4-cycles $C_1$, $C_2$ with $\cost(C_i)=2,\forall{i}\in[2]$,
$C$ has exactly two attachments $a_1,a_2$ where $a_i\in{C_i},\forall{i}\in[2]$, and
both end~nodes of the (unique) unit-edge of $C_i-a_i$ are adjacent
to $C_{3-i}$, $\forall{i}\in[2]$.
\end{defn}

\authremark{~The absence of \redgadget{}s in instances of \wsMAP{}
is used only in Section~\ref{s:gluing}, see the proof of
Lemma~\ref{lem:Daux}.
Moreover, there exist instances $G$ of MAP that contain \redgadget{}s
and contain none of the other six obstructions such that
$\opt(G)/\cost(\DTWO(G))\approx\frac74$,
see the appendix, Section~\ref{s:appendix}.}

See Figures~\ref{f:0split-Ex}, \ref{f:1split-Ex}, \ref{f:4split-Ex},
and~\ref{f:R8Ex} for illustrations of \zsplit{}s, \usplit{}s,
\csplit{}s, and \redgadget{}s, respectively.

\input{Figures/figures-4obstructions.tex}

\subsection{Polynomial-time computations}

There are well-known polynomial~time algorithms for implementing
all of the basic computations in this paper, see \cite{Schrijver}.
We state this explicitly in all relevant results
(e.g., Theorem~\ref{thm:approxbydtwo}),
but we do not elaborate on this elsewhere.

\section{ \label{s:algo} Outline of the algorithm}

This section has an outline of our algorithm.
We start by defining an instance of \wsMAP{}.

\begin{defn}
An instance of \wsMAP{} is an instance of MAP with $\ge\Nmin$ nodes that contains
\begin{multicols}{2}
 \begin{itemize}
 \item[-] no cut nodes,
 \item[-] no parallel edges,
 \item[-] no \zsplit,
 \item[-] no \usplit,
 \item[-] no \csplit,
 \item[-] no \redcycle,
and
 \item[-] no \redgadget.
 \item[\vspace{\fill}]
 \end{itemize}
\end{multicols}
\end{defn}

In this section and
Section~\ref{s:pre-proc}, we explain
how to ``decompose'' any instance of MAP $G$ with $|V(G)|\ge\Nmin$
into a collection of instances $G_1,\dots,G_k$ of MAP such that
(a)~either $|V(G_i)| < \Nmin$ or $G_i$ is an instance of \wsMAP{}, $\forall i \in [k]$,
(b)~the edge~sets $E(G_1),\dots,E(G_k)$ are pairwise disjoint
(thus $E(G_1),\dots,E(G_k)$ forms a subpartition of $E(G)$),
and
(c)~a 2-ECSS $H$ of $G$ can be obtained
by computing 2-ECSSes $H_1,\dots,H_k$ of $G_1,\dots,G_k$.
Moreover,
the approximation guarantee is preserved,
meaning that $\cost(H) \leq \frac53 \opt(G)-2$
provided $\cost(H_i) \leq \max(\opt(G_i),\,\frac53\opt(G_i)-2), \forall{i}\in[k]$.

\medskip
\noi
\fbox{ \begin{minipage}{0.9\textwidth}

\textbf{Algorithm (outline)}:
{
\setlength{\itemsep}{0pt}
\begin{itemize}
\item[(0)] apply the pre-processing steps
	(see below and see Section~\ref{s:pre-proc})
	to obtain a collection of instances {$G_1,\dots,G_k$}
		such that either $|V(G_i)| < \Nmin$ or
		$G_i$ is an instance of \wsMAP{}, $\forall i \in [k]$;
\item[] \textbf{for} each $G_i$ ($i=1,\dots,k$), 
\item[] \textbf{if} $|V(G_i)| < \Nmin$
\item[(1)] exhaustively compute an optimum 2-ECSS $H_i$ of $G_i$
	via Lemma~\ref{lem:computeopt};
\item[] \textbf{else}
\item[(2.1)] compute \DTWO($G_i$) in polynomial time
	(w.l.o.g.\ assume \DTWO($G_i$) contains all zero-edges of~$G_i$);
\item[(2.2)] then apply ``bridge~covering" from Section~\ref{s:bridge-cover}
	to \DTWO($G_i$) to obtain a bridgeless 2-edge~cover $\wt{H}_i$ of $G_i$;
\item[(2.3)] then apply the ``gluing~step" from Section~\ref{s:gluing}
	to $\wt{H}_i$ to obtain a 2-ECSS $H_i$ of $G_i$;
\item[] \textbf{endif};
\item[] \textbf{endfor};
\item[(3)] finally, output a 2-ECSS $H$ of $G$ from the union of
	$H_1,\dots,H_k$ by undoing the transformations applied in step~(0).
\end{itemize}
}
\end{minipage}
}
\medskip

The pre-processing of step~(0) consists of several reductions;
most of these reductions are straightforward, but we have to prove that
the approximation guarantee is preserved when we ``undo" each of these reductions.
These proofs are given in Section~\ref{s:pre-proc}.

\medskip
\noi
\fbox{ \begin{minipage}{0.9\textwidth}

\textbf{Pre-processing -- Step~(0) of Algorithm}:
\\[0.25ex]

\textbf{While} the current list of sub-instances $G_1,G_2,\dots$ has a
sub-instance $G_i$ that has $\ge\Nmin$ nodes and is not an instance
of \wsMAP{} (assume that $G_i$ is 2EC):

{
\setlength{\itemsep}{0pt}
\begin{itemize}

\item[] {\textbf{if} $G_i$ is not 2NC:}

\item[(i)] (handle a cut-node) \\
let $v$ be a cut~node of $G_i$, and let 
$B_1,\dots,B_k$ be the 2ec-$v$-blocks of $G_i$;
replace $G_i$ by $B_1,\dots,B_k$ in the current list;

\item[] {\textbf{else} apply exactly one of the following steps to $G_i$:}

\item[(ii)] (handle a pair of parallel edges) \\
let $\{e,f\}$ be a pair of parallel edges of $G_i$
(one of the edges in $\{e,f\}$ is a unit-edge);
discard a unit-edge of $\{e,f\}$ from $G_i$;

\item[(iii)] (handle an ``S obstruction")
\begin{enumerate}[(a)]
\item (handle a \usplit) 
\item (handle a \zsplit)
\item (handle an \csplit) 
\end{enumerate}

let $C$ denote a subgraph of $G_i$ that is, respectively,
(a)~a \usplit, (b)~a \zsplit, or (c)~an \csplit;

contract $C$ to obtain $G_i/C$ and let $\hv$ denote the contracted node;
let $B_1,\dots,B_k$ be the 2ec-$\hv$-blocks of $G_i/C$;
replace $G_i$ by $B_1,\dots,B_k$ in the current list;

\item[(iv)] (handle an ``R obstruction")
\begin{enumerate}[(a)]
\item (handle an \redcycle)
\item (handle an \redgadget)
\end{enumerate}

let $C$ denote a subgraph of $G_i$ that is, respectively,
(a)~an \redcycle, or~(b) an \redgadget;

contract $C$ to obtain $G_i/C$, and
replace $G_i$ by $G_i/C$ in the current list;

\end{itemize}
}
\end{minipage}
}
\medskip

Our $\frac53$ approximation algorithm for MAP follows from 
the following theorem;
our proof is given in Section~\ref{s:gluing} (see
page~\pageref{prf:approxbydtwo}).

\begin{theorem}
\label{thm:approxbydtwo}
Given an instance of \wsMAP{} $G'$,
  there is a polynomial-time algorithm that
  obtains a 2-ECSS $H'$
  such that $\cost(H') \leq \max( \opt(G'),\; \frac{5}{3} \opt(G') - 2 )$.
\end{theorem}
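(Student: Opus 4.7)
The plan is to follow the algorithm outline from Section~\ref{s:algo} and bound the cost of each step via a credit scheme so that the total augmentation cost above $\cost(\DTWO(G'))$ is at most $\tfrac{2}{3}\cost(\DTWO(G'))-2$. First I would compute $D := \DTWO(G')$ in polynomial time (Proposition~\ref{thm:computeD2}), and argue that we may assume $D$ contains every zero-edge of $G'$ (a zero-edge can always be swapped into a minimum-cost 2-edge cover). By Lemma~\ref{lem:dtwolb}, $\cost(D) \le \opt(G')$. I then assign a budget of $\tfrac{5}{3}\cost(D)$: paying $\cost(D)$ for the edges of $D$ leaves a \emph{credit} pool of $\tfrac{2}{3}\cost(D)$, which I would distribute so that every unit-edge of $D$ carries $\tfrac{2}{3}$ units of credit (equivalently, every 2-edge-cover component of $D$ with $k$ unit-edges carries $\tfrac{2k}{3}$ credit, so a ``small'' 2ec-block carries $\tfrac{4}{3}$).

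Next, apply the \emph{bridge covering} step of Section~\ref{s:bridge-cover} to $D$, producing a bridgeless 2-edge cover $\wth$ of $G'$. The key invariant to prove (by induction on the number of bridges) is that each added unit-edge is paid for using credits sitting on the two ``sides'' of a bridge that is being eliminated; because $G'$ is an instance of \wsMAP{}, there are no \usplit{}s, which is exactly the hypothesis I need for Proposition~\ref{propo:find-pseudo-ear} to guarantee that a suitable pseudo-ear covering a bridge can always be found within the credit budget. At the end of this step I want the invariant that $\cost(\wth) + \text{(remaining credit)} \le \tfrac{5}{3}\cost(D)$ and that each 2ec-block of $\wth$ still retains at least $\tfrac{4}{3}$ credit if it is small and correspondingly more if it is large.

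Then apply the \emph{gluing} step of Section~\ref{s:gluing} to merge the (now 2EC) connected components of $\wth$ into a single 2-ECSS $H'$ of $G'$. The analysis treats the auxiliary ``block graph'': each block contributes its leftover credit and each bridge-like merging edge must be paid out of those credits. For large blocks this is easy because they carry $\geq 2$ credit; the delicate case is when many small 2ec-blocks need to be strung together, where one has only $\tfrac{4}{3}$ credit per block. Here I would use the absence of \csplit{}s and \redgadget{}s in \wsMAP{} (invoked via Lemmas~\ref{lem:smallblocks:a} and \ref{lem:Daux}) to perform a careful edge swap: for each small block one can trade away one of its unit-edges for a unit-edge that both covers the block and reaches into a neighbouring block, effectively charging the merging cost to the freed edge. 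The absence of \redcycle{}s prevents the degenerate 4-cycle configurations that would otherwise block this swap, and the assumption $|V(G')|\ge \Nmin$ ensures there is enough structure to ``save'' $2$ extra credit globally, yielding the $-2$ slack.

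Putting the two bounds together gives $\cost(H') \le \tfrac{5}{3}\cost(D)-2 \le \tfrac{5}{3}\opt(G')-2$, so whenever $\opt(G') \ge 3$ (which holds on any \wsMAP{} instance since $\opt(G') \ge |V(G')|/2 \ge 6$) the bound $\max(\opt(G'), \tfrac{5}{3}\opt(G')-2) = \tfrac{5}{3}\opt(G')-2$ is achieved. I expect the main obstacle to be the credit accounting for the gluing step on small 2ec-blocks: ruling out the bad configurations relies crucially on \wsMAP{} forbidding \csplit{}s and \redgadget{}s, and the edge-swap argument must be carried out carefully so that no block is ever charged more than $\tfrac{4}{3}$, while globally a surplus of at least $2$ credits remains to yield the additive $-2$.
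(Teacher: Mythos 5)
Your proposal is correct and follows essentially the same route as the paper: it reduces the theorem to the bridge-covering step (Proposition~\ref{propo:bridgecover}, whose analysis uses the absence of \usplit{}s via Proposition~\ref{propo:find-pseudo-ear}) and the gluing step (Proposition~\ref{propo:gluing}, whose small-block analysis uses the absence of \csplit{}s, \redcycle{}s, and \redgadget{}s), with the same $\frac23$-credit-per-unit-edge accounting giving $\cost(H')\le\frac53\cost(\DTWO(G'))-2\le\frac53\opt(G')-2$. The only cosmetic difference is that the paper's proof simply cites the two propositions and chains the inequalities, leaving the credit-invariant details (including property~$(*)$ of \DTWO\ and the $\ge2$ credits retained by the final large 2ec-block, which is where the $-2$ comes from) inside their proofs rather than re-sketching them.
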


We use a credit~scheme to prove this theorem; the details are
presented in Sections~\ref{s:bridge-cover} and~\ref{s:gluing}.
The algorithm starts with $\DTWO(G')$ as the current graph, and
assigns $\frac53$ tokens to each unit-edge of $\DTWO(G')$; each
such edge keeps one unit to pay for itself and the other $\frac23$
is taken to be credit of the edge; thus, the algorithm has $\frac23
\cost(\DTWO(G'))$ credits at the start; the algorithm uses the credits
to pay for the augmenting edges ``bought" in steps~(2.2) or~(2.3)
(see the outline); also, the algorithm may ``sell" unit-edges of
the current~graph (i.e., such an edge is permanently discarded and
is not contained in the 2-ECSS output by the algorithm).

The factor $\frac53$ in our approximation guarantee is tight
in the sense that there exists an instance $G$ of \wsMAP{}
such that $\opt(G)/\cost(\DTWO(G)) \ge \frac53-\epsilon$,
for any small positive number $\epsilon$.
The instance $G$ consists of a root 2ec-block $B_0$, say a 6-cycle
of cost~6, $v_1,\dots,v_6,v_1$, and $\ell\gg1$ copies of the following
gadget that are attached to $B_0$. The gadget consists of a 6-cycle
$C=u_1,\dots,u_6,u_1$ of cost~3 that has alternating zero-edges and
unit-edges; moreover, there are three unit-edges between $C$ and
$B_0$: $v_1u_1$, $v_3u_3$, $v_5u_5$. Observe that a (feasible)
2-edge~cover of this instance consists of $B_0$ and the 6-cycle $C$
of each copy of the gadget, and it has cost $6+3\ell$.  Observe
that for any 2-ECSS and for each copy of the gadget, the six edges
of $C$ as well as (at least) two of the edges between $C$ and $B_0$
are contained in the 2-ECSS. Thus, $\opt(G)\geq6+5\ell$, whereas
$\cost(\DTWO(G))\leq6+3\ell$.

\section{ \label{s:pre-proc} Pre-processing}
{
This section presents the proofs and analysis for the pre-processing
step of our algorithm.

We use $\alpha\ge\frac53$ to denote a positive real number
that is used in the analysis of our approximation guarantee;
we take $\alpha$ to be~$\frac53$ for our main result.
Informally speaking, most of the results in this section
prove an approximation guarantee of the form $(\alpha\;\opt-2)$.
The additive term of $-2$ is critical,
because when we undo the transformations applied in step~(0)
(see the outline of the algorithm in Section~\ref{s:algo}),
then we incur an additional cost of $+1$ or $+2$
(for example, when we undo the transformation for an \csplit,
then we incur the additional cost of $2$ for a spanning cycle of that \csplit);
in spite of this additional cost, we derive an 
approximation guarantee of the form $(\alpha\;\opt-2)$
by using the $-2$ term to compensate for the additional cost.
But note that $\alpha\;\opt-2$ is an invalid approximation guarantee
whenever $\opt\leq2$ (since $\alpha\;\opt-2 < \opt$ for $\opt\leq2$).
In fact, our approximation guarantees have the form $\max(\opt,\;\alpha\;\opt-2)$.

\begin{lemma} \label{lem:findobs}
Every occurrence of each of the seven types of obstructions
(i.e., cut~nodes,
parallel edges,
\zsplit, \usplit, \csplit, \redcycle, \redgadget)
can be computed in polynomial time.
\end{lemma}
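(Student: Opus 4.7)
The plan is to split the seven obstruction types into three groups based on the algorithmic technique needed. Cut nodes and parallel edges are handled by standard routines: cut nodes are produced by Tarjan's biconnected-components algorithm in $O(n+m)$ time, and the set of pairs of parallel edges is obtained by sorting the multiset of edges of $G$ lexicographically by endpoint pairs and scanning the sorted list.

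For the three obstructions defined by an induced subgraph of constant size --- \csplit\ (on $3$ or $4$ nodes), \redcycle\ (on $4$ nodes), and \redgadget\ (on $8$ nodes) --- I would enumerate every subset $S \subseteq V(G)$ of the appropriate cardinality. There are at most $O(n^8)$ such subsets. For each one, I check in constant time whether $G[S]$ has the required internal structure (a spanning cycle of cost two and $2$-node-connectivity for \csplit; a $4$-cycle of cost two with two nonadjacent degree-two nodes for \redcycle; a decomposition into two $4$-cycles of cost two with the prescribed attachment pattern for \redgadget), and in $O(n+m)$ time whether the relevant global conditions on $\delta_G(S)$ hold (such as $G-V(C)$ being disconnected, or $C$ having exactly two attachments). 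The extra condition on $\opt(B_i) \ge 3$ that appears in the definition of \csplit\ is handled together with the analogous condition for \usplit\ below.

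The two remaining obstructions, \zsplit\ and \usplit, are associated with a specific edge $uv$ whose endpoints form a vertex cut. For each candidate edge I would test in $O(n+m)$ time whether $G-\{u,v\}$ is disconnected; for \usplit\ I would additionally form $G/\{u,v\}$, compute its 2ec-blocks incident to the contracted node $\hv$ via a block-cut-tree computation, and for each unordered pair $(B_1,B_2)$ of such blocks check (a) that each $B_i$ carries a zero-edge incident to $\hv$ and (b) that $\opt(B_i) \ge 3$. The last check is the only nonroutine sub-task of the lemma and therefore the main obstacle; I would resolve it using the bound $\opt(B) \ge |V(B)|/2$ established in the proof of Lemma~\ref{lem:computeopt}. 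When $|V(B_i)| \ge 6$ the inequality $\opt(B_i) \ge 3$ holds automatically, and when $|V(B_i)| < 6$ Lemma~\ref{lem:computeopt} applied with $z = 3$ decides whether $\opt(B_i) \ge 3$ in $O(1)$ time. Summing the running times across the three groups yields a polynomial-time procedure that enumerates every occurrence of each of the seven obstructions.
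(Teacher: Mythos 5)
Your proposal is correct and takes essentially the same route as the paper, whose proof simply observes that each obstruction is a subgraph on $O(1)$ nodes and can be found by exhaustively checking every node subset of the appropriate cardinality (noting faster standard routines, e.g.\ for cut nodes). Your explicit handling of the conditions $\opt(B_i)\ge3$ for \usplit{} and \csplit{} via the bound $\opt\ge|V|/2$ and Lemma~\ref{lem:computeopt} merely fills in a verification detail the paper leaves implicit.
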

\begin{proof}
Each type of obstruction is a subgraph on $O(1)$ nodes.
A simple method is to exhaustively check each subset of nodes $S$
of the appropriate cardinality and decide whether or not
the relevant properties hold for the subgraph induced by $S$.

There are better algorithms for some types of obstructions, e.g.,
there is a linear-time algorithm for computing all the cut~nodes.
\end{proof}

The following lemmas address the pre-processing and post-processing
(that is, steps~(0) and~(3) of the outline)
of each of the seven types of obstructions.
\big(Some of the proofs use the following observation: Suppose that
a 2EC graph $M$ has a cut node $v$ and has 2ec-$v$-blocks $M_1,\dots,M_k$.
Any 2-ECSS $M'$ of $M$ induces a 2-ECSS on each of
$V(M_1),\dots,V(M_k)$, hence, $\opt(M)=\sum_{i=1}^k\opt(M_i)$.\big)

\begin{lemma} \label{lem:pp:cutnodes}
Let $v$ be a cut~node of $G$, and let 
$B_1,\dots,B_k$ be the 2ec-$v$-blocks of $G$.
Let $B'_1,\dots,B'_k$ be 2-ECSSs of $B_1,\dots,B_k$
such that $\cost(B'_i) \leq \max( \opt(B_i),\;
	\alpha\;\opt(B_i)-2),\;\forall{i}\in[k]$.
Then $B'_1\cup\dots\cup{B'_k}$ is a 2-ECSS of $G$ of cost
$\leq \max( \opt(G),\; \alpha\;\opt(G)-2)$.  
\end{lemma}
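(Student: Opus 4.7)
The plan is to establish two facts in sequence: (i) the union $B' := B'_1 \cup \cdots \cup B'_k$ is a 2-ECSS of $G$, and (ii) its cost obeys the required bound. For (ii) I will use the observation (stated parenthetically just before the lemma) that $\opt(G) = \sum_{i=1}^k \opt(B_i)$, combined with a short case analysis on the $\max$ expression.

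For feasibility, I would exploit the fact that the 2ec-$v$-blocks $B_1,\dots,B_k$ pairwise share only the cut node $v$ while their node sets jointly cover $V(G)$. Since each $B'_i$ is connected and contains $v$, the union $B'$ is a connected spanning subgraph of $G$. To verify 2-edge-connectivity, I would pick any edge $e \in E(B')$; it belongs to a unique $B'_{i_0}$, and because $B'_{i_0}$ is 2EC the graph $B'_{i_0} - e$ remains connected and still contains $v$. Every other $B'_j$ ($j \neq i_0$) is 2EC and contains $v$, so $B' - e$ is a union of connected subgraphs pairwise meeting at $v$, hence connected. Thus $B'$ is 2EC, so a 2-ECSS of $G$.

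For the cost bound, set $\tau := 2/(\alpha - 1)$, the threshold above which $\alpha a - 2 \ge a$. I would split into two cases. \emph{Case~1}: some index $i^*$ satisfies $\opt(B_{i^*}) \ge \tau$. For that index $\cost(B'_{i^*}) \le \alpha\,\opt(B_{i^*}) - 2$; for every other $j$ we have the weaker bound $\cost(B'_j) \le \max(\opt(B_j),\alpha\,\opt(B_j) - 2) \le \alpha\,\opt(B_j)$ (using $\alpha \ge 1$ and $\opt(B_j) \ge 0$). Summing and invoking $\opt(G) = \sum_i \opt(B_i)$ yields $\cost(B') \le \alpha\,\opt(G) - 2$; moreover $\opt(G) \ge \opt(B_{i^*}) \ge \tau$ forces $\alpha\,\opt(G) - 2 \ge \opt(G)$, so $\alpha\,\opt(G) - 2$ is the binding branch of the max. \emph{Case~2}: every $\opt(B_i) < \tau$, so $\max(\opt(B_i),\alpha\,\opt(B_i)-2) = \opt(B_i)$ and the hypothesis gives $\cost(B'_i) \le \opt(B_i)$; summing, $\cost(B') \le \opt(G)$, which trivially lies below the max.

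The only mild obstacle is threading the single additive $-2$ saving through the sum: just one block in Case~1 needs to supply the $-2$, and we must check that when this happens the $\alpha\,\opt(G) - 2$ branch of the max is simultaneously dominant, while when no block can supply the $-2$ the $\opt(G)$ branch automatically takes over. Once this dichotomy (governed by the threshold $\tau$) is identified, both cases reduce to elementary arithmetic.
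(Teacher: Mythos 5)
Your proof is correct and follows essentially the same route as the paper: the same identity $\opt(G)=\sum_{i=1}^k\opt(B_i)$ and the same dichotomy (either every block satisfies $\cost(B'_i)\le\opt(B_i)$, or some block has $\alpha\,\opt(B_j)-2\ge\opt(B_j)$ and supplies the single $-2$, with all others bounded by $\alpha\,\opt(B_i)$). The only cosmetic difference is that the paper obtains feasibility of the union by citing Lemma~\ref{lem:uncontract}, whereas you argue it directly via connectivity of the pieces through the shared cut node $v$; both arguments are fine.
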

\begin{proof}
By {Lemma}~\ref{lem:uncontract}, $B'_1\cup\dots\cup{B'_k}$ is a 2-ECSS of $G$.

We have $\opt(G)=\sum_{i=1}^k\opt(B_i)$.
If $\cost(B'_i) \leq \opt(B_i),\;\forall{i}\in[k]$, then
$\cost(B'_1\cup\dots\cup{B'_k})\leq \sum_{i=1}^k\opt(B_i)=
	\opt(G)\leq\max(\opt(G),\; \alpha\;\opt(G)-2)$.
Otherwise, there is a $j\in[k]$ with $\opt(B_j)<\alpha\;\opt(B_j)-2$,
then $\cost(B'_i)\leq\alpha\;\opt(B_i),\;\forall{i}\in[k],i\neq{j},$
and $\cost(B'_j)\leq\alpha\;\opt(B_j)-2$, hence,
$\cost(B'_1\cup\dots\cup{B'_k})\leq\alpha\;\opt(G)-2$.
\end{proof}

\begin{lemma} \label{lem:pp:paredges}
Let $e,f$ be a pair of parallel edges of a 2NC graph $G$, and let
$f$ be a unit-edge.
Let $B'$ be a 2-ECSS of $G-f$ of cost $\leq \max(\opt(G-f),\; \alpha\;\opt(G-f)-2)$.
Then $B'$ is a 2-ECSS of $G$  of cost $\leq \max(\opt(G),\; \alpha\;\opt(G)-2)$.
\end{lemma}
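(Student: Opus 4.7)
The plan is to reduce the lemma to the identity $\opt(G-f) = \opt(G)$ and then substitute directly into the given bound on $\cost(B')$.

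First I would observe that $B'$ is automatically a 2-ECSS of $G$: since $V(G-f) = V(G)$ and $E(G-f) \subseteq E(G)$, any 2-edge-connected spanning subgraph of $G-f$ is also a 2-edge-connected spanning subgraph of $G$. So the whole task reduces to bounding $\cost(B')$ in terms of $\opt(G)$.

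Next I would establish $\opt(G-f) = \opt(G)$. The inequality $\opt(G) \leq \opt(G-f)$ is immediate. For the reverse, take any optimal 2-ECSS $H^*$ of $G$. If $f \notin E(H^*)$, then $H^*$ itself witnesses $\opt(G-f) \leq \cost(H^*)$. If $f \in E(H^*)$ but $e \notin E(H^*)$, then $(H^* - f) + e$ is isomorphic to $H^*$ as a multigraph (since $e$ and $f$ are parallel), hence 2EC; it lies in $G-f$ and has cost at most $\cost(H^*)$ because $\cost(e) \leq 1 = \cost(f)$.

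The delicate case is when both $e$ and $f$ lie in $H^*$. By optimality of $H^*$ and $\cost(f) = 1$, the subgraph $H^* - f$ cannot be 2EC, so Proposition~\ref{propo:2ecdiscard} (applied with the edge $e$ still present in $H^* - f$ as a $v,w$-path) forces $\{e,f\}$ to be a 2-edge cut of $H^*$, separating $V(H^*)$ into components $V_v \ni v$ and $V_w \ni w$. My plan is to locate an alternative edge $g \in E(G) \setminus \{e, f\}$ crossing this cut and use $H^{**} := (H^* - f) + g$. Existence of $g$ follows from the 2-node-connectivity of $G$: the definition forces $|V(G)| \geq 3$, so at least one of $V_v, V_w$ has size $\geq 2$; assuming $|V_v| \geq 2$ and picking $u \in V_v \setminus \{v\}$, connectivity of $G - v$ provides a $u,w$-path avoiding both $e$ and $f$, which must cross from $V_v$ to $V_w$ at some edge $g$. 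A cut-by-cut verification, using that both $H^*[V_v]$ and $H^*[V_w]$ are 2EC (an easy consequence of $H^*$ being 2EC with $\{e,f\}$ as its 2-edge cut), confirms that $H^{**}$ is 2EC; since $\cost(g) \leq 1 = \cost(f)$, we get $\cost(H^{**}) \leq \cost(H^*) = \opt(G)$, establishing $\opt(G-f) \leq \opt(G)$.

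Chaining the pieces yields $\cost(B') \leq \max(\opt(G-f),\;\alpha\;\opt(G-f) - 2) = \max(\opt(G),\;\alpha\;\opt(G) - 2)$, as required. The main obstacle is the last case, where 2-edge-connectivity of $G$ alone would not suffice to guarantee a replacement edge — the full 2NC hypothesis is exactly what enables finding $g$.
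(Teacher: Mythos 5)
Your proposal is correct. The paper disposes of this lemma in two lines: it asserts that a 2NC graph has an optimal 2-ECSS containing no parallel edges (deferring the argument to Fact~8 of \cite{cdgkn:map}), concludes $\opt(G)=\opt(G-f)$, and is done. You reach the same pivotal identity $\opt(G-f)=\opt(G)$, but you prove it from scratch by an exchange argument: if an optimal $H^*$ uses $f$ but not $e$, swap $f$ for its parallel copy $e$; if it uses both, note that optimality forces $H^*-f$ not to be 2EC, deduce via Proposition~\ref{propo:2ecdiscard} (or Menger) that $\{e,f\}$ is a 2-edge cut of $H^*$ with shores $V_v,V_w$, and use 2-node-connectivity of $G$ to find a third edge $g\notin\{e,f\}$ crossing that cut, so that $H^*-f+g$ is a 2-ECSS of $G-f$ of no greater cost. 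This is essentially a self-contained reconstruction of the fact the paper imports by citation, which makes your write-up longer but independent of the earlier paper; the key crossing-edge step is exactly where the 2NC hypothesis (rather than mere 2-edge-connectivity) is needed, as you observe. Two small polish points: since $e$ and $f$ are parallel, any edge cut of $H^*$ containing $f$ and of size two must be exactly $\{e,f\}$, and its shores are uniquely the components of $H^*-\{e,f\}$, which is the cleanest way to run your ``cut-by-cut verification'' that $H^*-f+g$ is 2EC; and the auxiliary claim that $H^*[V_v]$ and $H^*[V_w]$ are both 2EC needs the caveat that one shore may be the single vertex $v$ (or $w$), in which case it is trivial rather than 2EC in the paper's sense --- neither point affects the validity of your argument.
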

\begin{proof}
The result holds because a 2NC graph has an optimal 2-ECSS that
contains no parallel edges; this can be proved using the arguments
used to prove \cite[Fact~8]{cdgkn:map}.
Hence, we have $\opt(G)=\opt(G-f)$.
\end{proof}

\begin{lemma} \label{lem:pp:zsplit}
Let $e=uv$ be a \zsplit\ of a 2NC graph $G$, and let 
$B_1,\dots,B_k$ be the 2ec-$\hv$-blocks of $G/e$, where
$\hv$ denotes the contracted node of $G/e$.
Let $B'_1,\dots,B'_k$ be 2-ECSSs of $B_1,\dots,B_k$
such that $\cost(B'_i) \leq \max( \opt(B_i),\;
	\alpha\;\opt(B_i)-2),\;\forall{i}\in[k]$.
Then there exist an index $i\in[k]$
and $F'_i\subseteq E(B_i)$ of cost $\leq \cost(B'_i)+1$ such that
$\{e\}\cup{E(B'_1)}\cup\dots\cup{E(B'_{i-1})}\cup{F'_i}\cup{E(B'_{i+1})}\cup\dots\cup{E(B'_k)}$
is (the edge~set of) a 2-ECSS of $G$ of cost $\leq \max( \opt(G),\; \alpha\;\opt(G)-2)$;
moreover, $F'_i$ can be computed from $E(B'_i)$ in $O(|V(G)|)$ time.
\end{lemma}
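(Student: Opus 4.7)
The plan is to use the fact that $G/e$ is 2-EC with $\hv$ as a cut node (from the zero-split hypothesis), so $\opt(G)=\sum_{j=1}^{k}\opt(B_j)$: the direction $\le$ comes from gluing 2-ECSSes of the $B_j$ at $\hv$ and adding the cost-zero edge $e$, while $\ge$ comes from restricting any 2-ECSS of $G$ to each block after contracting $e$. Since $k\ge 2$ and every $\opt(B_j)\ge 2$, we have $\opt(G)\ge 4$, so $\max(\opt(G),\tfrac{5}{3}\opt(G)-2)=\tfrac{5}{3}\opt(G)-2$.

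For each $j$, let $\widetilde{B'_j}$ denote the uncontraction of $B'_j$ (the node $\hv$ is split back into $u,v$, each $\hv$-edge routed to its true endpoint in $G$). Call $B'_j$ \emph{good} if $\widetilde{B'_j}\cup\{e\}$ is 2-EC; applying Lemma~\ref{lem:uncontract} with $H=\widetilde{B'_j}\cup\{e\}$ and $C=\{u,v\}$, this is equivalent to (i)~both $u$ and $v$ receive some $\hv$-edge in $B'_j$, and (ii)~some such $u$-neighbor and $v$-neighbor lie in the same component of $B'_j-\hv$. Three structural facts: because $G$ is 2NC, $B_j$ is \emph{structurally split} (each $B_j$ has $\hv$-edges to both $u$ and $v$); because $B_j$ is a 2ec-$\hv$-block, $B_j-\hv$ is connected; and because $M$ is a matching containing $e$, every $\hv$-edge of any $B_j$ is a unit-edge. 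A short check also shows that any block with $\opt(B_j)\le 2$ has $|V(B_j)|\le 3$ and its unique 2-ECSS is automatically good.

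I would then proceed as follows. If some $B'_j$ is already good, set $i=j$ and $F'_i=E(B'_j)$. Otherwise, pick any $i$ and produce a good 2-ECSS $F'_i\subseteq E(B_i)$ with $\cost(F'_i)\le\cost(B'_i)+1$ by case analysis: (a)~if $B'_i$ is split but $B'_i-\hv$ is disconnected, add an edge of $E(B_i-\hv)\setminus E(B'_i)$ merging the two relevant components (one exists because $B_i-\hv$ is connected); (b)~if $B'_i$ is unsplit with $B'_i-\hv$ connected, add a $\hv$-edge to the missing side (exists by structural split); (c)~if $B'_i$ is unsplit with $B'_i-\hv$ disconnected, substitute a minimum 2-ECSS $B^*_i$ of $B_i$ and augment by~(b), using the observation that any minimum 2-ECSS has $B^*_i-\hv$ connected (otherwise an internal merging-edge would let us delete two $\hv$-edges required by the component cut-conditions, contradicting minimality). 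Then the displayed union $U$ is 2-EC: any cut $S$ with $u,v$ on the same side has $\ge 2$ edges from a properly-cut $B'_j$, while any cut with $u,v$ on opposite sides has $e$ plus at least one further edge from the good block (since $\widetilde{F'_i}\cup\{e\}$ is 2-EC).

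For the cost bound, set $\ell:=|\{j:\opt(B_j)\ge 3\}|$ and $S:=\sum_{\opt(B_j)\le 2}\opt(B_j)$, so $\sum_j\cost(B'_j)\le\tfrac{5}{3}\opt(G)-\tfrac{2}{3}S-2\ell$. Since small blocks are automatically good, any modification involves a large block with $\ell\ge 1$; combined with $k\ge 2$ this yields $\tfrac{2}{3}S+2\ell\ge 3$, absorbing the extra~$+1$ to give total cost $\le\tfrac{5}{3}\opt(G)-2$. When no modification is needed, $\sum_j\cost(B'_j)$ is already $\le\tfrac{5}{3}\opt(G)-2$. The main obstacle I expect is case~(c): a naive single-edge modification of $B'_i$ need not suffice, so the argument substitutes a minimum 2-ECSS and invokes the minimality-based connectedness observation, absorbing the extra cost via the slack $\cost(B'_i)\ge\opt(B_i)$.
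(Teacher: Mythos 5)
The decisive gap is your claim that a block with $\opt(B_j)\le 2$ has a ``unique 2-ECSS that is automatically good.'' Such a block indeed has at most three nodes, but its minimum-cost 2-ECSS need not be unique, and the lemma hands you an \emph{arbitrary} $B'_j$ meeting the cost bound; it may attach to only one of $u,v$. Concretely, take $V(G)=\{u,v,a,b,c,d\}$ with zero-edges $uv,ab,cd$ and unit-edges $ua,va,ub,vb,uc,vc,ud,vd$. Then $G$ is 2NC, $e=uv$ is a \zsplit\ with $k=2$, $\opt(B_1)=\opt(B_2)=2$, and $B'_1=\{ua,ub,ab\}$, $B'_2=\{uc,ud,cd\}$ satisfy the hypotheses, yet neither block is good (each attaches only to $u$). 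Your procedure then enters the modification branch and adds a unit-edge, outputting cost $2+2+1=5$, while $\opt(G)=4$ and $\max(\opt(G),\tfrac53\opt(G)-2)=\tfrac{14}{3}<5$; equally, your accounting needs $\ell\ge1$ to absorb the $+1$, and here $\ell=0$. So the step fails, not just its analysis. The paper's proof avoids exactly this trap by treating a block with $\opt(B_i)=2$ separately: it does not augment $B'_i$ at all, but \emph{replaces} it by $E(\hat{C}_i)-\{uv\}$, where $\hat{C}_i$ is a cost-two spanning cycle through $e$ of the subgraph of $G$ induced by $\{u,v\}\cup(V(B_i)-\hv)$ (a swap at zero extra cost, using 2NC-ness of that induced subgraph); in the example, $F'_1=\{ua,ab,vb\}$ gives total cost $4$. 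The $+1$ augmentation is reserved for the case $\opt(B_i)\ge3$ for all $i$, where it is absorbed because $\cost(B'_i)+1\le\alpha\,\opt(B_i)$. This swap idea is what your argument is missing, and without it the lemma's bound cannot be met when all blocks are small.

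Two secondary points. First, your justification of $\opt(G)=\sum_j\opt(B_j)$ (``glue the block solutions and add $e$'') is circular: whether that glued graph is 2EC is precisely the difficulty the lemma addresses; fortunately you only use the inequality $\opt(G)\ge\sum_j\opt(B_j)$, which follows by contracting $e$ in an optimal 2-ECSS of $G$ and restricting to the blocks, as in the paper. Second, your cases~(a) and~(c) are not watertight as written: connectivity of $B_i-\hv$ only yields a path, not a single edge directly joining a component of $B'_i-\hv$ containing a $u$-neighbour to one containing a $v$-neighbour, and the assertion that every minimum 2-ECSS $B^*_i$ has $B^*_i-\hv$ connected is unproved (and false in general). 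Your underlying concern is legitimate---$B'_i$ is only 2-edge-connected, so $\hv$ may be a cut vertex of it, a point the paper passes over quickly---but for an unsplit block the repair is simpler than what you propose: every component of $B'_i-\hv$ already has $\ge2$ edges to $u$, so adding any single edge of $G$ from $v$ into $V(B_i)-\hv$ makes the block good. None of this, however, rescues the small-block case above, which is the genuine gap.
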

\begin{proof}
For all $i\in[k]$, observe that $\opt(B_i)\ge2$
because all edges incident to $\hv$ in $B_i$ are unit-edges.
First, suppose that there is an index $i\in[k]$, say $i=1$,
with $\opt(B_i)=2$.
Then $B_1$ has $\leq3$ nodes
(since $B_1$ has an optimal 2-ECSS that has cost zero on $E(B_1-\hv)$).
Let ${B_1^{\oplus}}$ be the subgraph of $G$ induced by $\{u,v\}\cup (V(B_1)-\hv)$.
Then ${B_1^{\oplus}}$ is a 2NC graph (see \cite[Fact~14]{cdgkn:map}) and it has
$3$ nodes or $4$ nodes; moreover, ${B_1^{\oplus}}$ has a spanning cycle $\hat{C_1}$ of cost~two.
We replace $E(B'_1)$ by $E(\hat{C_1})-\{uv\}$.
Then the spanning subgraph $H'$ with edge~set
${E(\hat{C_1})}\cup{E(B'_2)}\cup\dots\cup{E(B'_k)}$ is a 2-ECSS of $G$ 
(by {Lemma}~\ref{lem:uncontract} and the fact that $\hat{C_1}$ contains $e$),
and $\cost(H') \leq \max( \opt(G),\; \alpha\;\opt(G)-2)$.

Now, suppose that $\opt(B_i)\ge3$ for all $i\in[k]$.
By {Lemma}~\ref{lem:uncontract},
the spanning subgraph $H'$ with edge~set
$\{e\}\cup{E(B'_1)}\cup\dots\cup{E(B'_k)}$
has at most one bridge, namely, $e$.
If $e$ is a bridge of $H'$, then no edge of $E(B'_1)$ is incident
to one of the end~nodes of $e$, say $u$.  Pick $f$ to be any edge of
$G$ between $V(B'_1)-\hv$ and $u$.
($G$ has such an edge, otherwise, $v$ would be a cut~node of $G$.)
Clearly, adding $f$ to $H'$ results in a 2-ECSS of $G$.

We have $\opt(G)\geq\sum_{i=1}^k\opt(B_i)$.
Then, $\cost(H'\cup\{f\}) = 1+\sum_{i=1}^k\cost(B'_i) \leq
\max(\opt(G),\; \alpha\;\opt(G)-2)$, because either 
$\cost(B'_i)\leq\alpha\;\opt(B_i)-2$ holds for two indices in $[k]$ or
there is an index $i\in[k]$ with $\cost(B'_i)=\opt(B_i)\geq3$
and so $\cost(B'_i)+1\leq\alpha\;\opt(B_i)$.
\end{proof}

\begin{lemma} \label{lem:pp:usplit}
Let $e=uv$ be a \usplit\ of a 2NC graph $G$, and let 
$B_1,\dots,B_k$ be the 2ec-$\hv$-blocks of $G/e$, where
$\hv$ denotes the contracted node of $G/e$.
Let $B'_1,\dots,B'_k$ be 2-ECSSs of $B_1,\dots,B_k$
such that $\cost(B'_i) \leq \max( \opt(B_i),\;
	\alpha\;\opt(B_i)-2),\;\forall{i}\in[k]$.
Then there exists an edge $f$ of $G$ such that
$\{e,f\}\cup{E(B'_1)}\cup\dots\cup{E(B'_k)}$ is (the edge~set of)
a 2-ECSS of $G$ of cost $\leq \max( \opt(G),\; \alpha\;\opt(G)-2)$.
\end{lemma}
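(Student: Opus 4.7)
The plan is to mimic the structure of Lemma~\ref{lem:pp:zsplit} and exploit the \usplit\ hypothesis $\opt(B_1),\opt(B_2)\ge 3$ to get the slack needed to pay for the two extra unit-edges we will commit to. Set $H' := \{e\} \cup \bigcup_{i=1}^{k} E(B'_i)$, viewed as a spanning subgraph of $G$. Since the 2ec-$\hv$-blocks $B_1,\dots,B_k$ of $G/e$ cover $V(G/e)$ and pairwise intersect only at the cut node $\hv$, their union $\bigcup_i B'_i$ is a 2-ECSS of $G/e$ (this uses the observation stated just before Lemma~\ref{lem:pp:cutnodes}). Applying Lemma~\ref{lem:uncontract} with $H=G$, $C=\{u,v\}$ (so that $G[C]$ consists of the single edge $e$ and is connected) and $H^*=\bigcup_i B'_i$ then shows that $H'$ is a connected spanning subgraph of $G$ whose only possible bridge is $e$ itself.

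Next, choose $f$. If $H'$ is already 2EC, any edge of $H'$ serves as $f$, and $\{e,f\}\cup\bigcup_i E(B'_i)=H'$ is a 2-ECSS of $G$. Otherwise $e$ is a bridge, and $H'-e$ splits into two components $V_1\ni u$ and $V_2\ni v$. Because $G$ is 2EC and $e\in\delta_G(V_1)$, the cut $\delta_G(V_1)$ contains at least one edge $f\ne e$. Adding such an $f$ raises $|\delta(V_1)|$ to $2$, while every other cut of $H'$ already has $\ge 2$ edges by the same application of Lemma~\ref{lem:uncontract}, so $\{e,f\}\cup\bigcup_i E(B'_i)$ is a 2-ECSS of $G$.

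For the cost bound, the \usplit\ hypothesis gives $\opt(B_1),\opt(B_2)\ge 3$. For $\alpha\ge \frac53$ and $x\ge 3$ one has $\alpha x-2\ge x$, so $\max(\opt(B_i),\alpha\opt(B_i)-2)=\alpha\opt(B_i)-2$ for $i=1,2$; for $i\ge 3$ I use the trivial $\max(\opt(B_i),\alpha\opt(B_i)-2)\le \alpha\opt(B_i)$. Summing gives $\sum_i \cost(B'_i)\le \alpha\sum_i \opt(B_i)-4$, and the two new edges $e,f$ contribute at most $2$, giving a total of at most $\alpha\sum_i \opt(B_i)-2$. Finally, any 2-ECSS of $G$ contracts to a 2-ECSS of $G/e$ of no larger cost, so $\opt(G)\ge \opt(G/e)=\sum_i \opt(B_i)$ (the equality is the cut-node observation applied to $G/e$ at $\hv$), and hence $\cost(\{e,f\}\cup\bigcup_i E(B'_i))\le \alpha\opt(G)-2 \le \max(\opt(G),\alpha\opt(G)-2)$.

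The main obstacle will be the bridge-patching step: verifying that a \emph{single} edge $f$ suffices to kill the only possible bridge $e$. This relies on the strong form of Lemma~\ref{lem:uncontract}, which guarantees not just that the sole bridge of $H'$ is $e$ but that every cut of $H'$ that fails to separate $u$ from $v$ already has $\ge 2$ edges; consequently, a single new edge across $(V_1,V_2)$ repairs the only deficient cut. Once that is in hand, the arithmetic is essentially forced: the two hypotheses $\opt(B_1),\opt(B_2)\ge 3$ provide exactly the $-4$ slack that absorbs the two new unit-edges and still preserves the $-2$ term in the target $\max(\opt,\alpha\opt-2)$ guarantee.
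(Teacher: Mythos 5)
Your proposal is correct and follows essentially the same route as the paper: form $H'=\{e\}\cup E(B'_1)\cup\dots\cup E(B'_k)$, use Lemma~\ref{lem:uncontract} to see that the only possible bridge is $e$, repair it with a single edge $f$, and then use $\opt(B_1),\opt(B_2)\ge3$ (from the \usplit\ definition) together with $\opt(G)\ge\sum_i\opt(B_i)$ for the cost bound. The only cosmetic difference is how $f$ is chosen: you take any edge of $\delta_G(V_1)\setminus\{e\}$ guaranteed by 2-edge-connectivity of $G$, whereas the paper picks an edge between $u$ and $V(B'_1)-\hv$, justified by the absence of a cut node; both work.
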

\begin{proof}
By Lemma~\ref{lem:uncontract},
the spanning subgraph $H'$ with edge~set
$\{e\}\cup{E(B'_1)}\cup\dots\cup{E(B'_k)}$
has at most one bridge, namely, $e$.
If $e$ is a bridge of $H'$, then no edge of $E(B'_1)$ is incident
to one of the end~nodes of $e$, say $u$.  Pick $f$ to be any edge of
$G$ between $V(B'_1)-\hv$ and $u$.  
($G$ has such an edge, otherwise, $v$ would be a cut~node of $G$.)
Clearly, adding $f$ to $H'$
results in a 2-ECSS of $G$.

We have $\opt(G)\geq\sum_{i=1}^k\opt(B_i)$.
Then, $\cost(E(H')\cup\{f\}) = \cost(\{e,f\})+\sum_{i=1}^k\cost(B'_i)
	= 2+\sum_{i=1}^k\cost(B'_i)$ $\leq$ $\max(\opt(G),\;$ $\alpha\;\opt(G)-2)$,
because $3\leq\opt(B_i)\leq\cost(B'_i)\leq\alpha\;\opt(B_i)-2$ holds
for two indices in $[k]$, by definition of a \usplit.
\end{proof}

\begin{lemma} \label{lem:pp:csplit}
Let $C$ be an \csplit\ of a 2NC graph $G$, and let 
$B_1,\dots,B_k$ be the 2ec-$\hv$-blocks of $G/C$, where
$\hv$ denotes the contracted node of $G/C$.
Let $B'_1,\dots,B'_k$ be 2-ECSSs of $B_1,\dots,B_k$
such that $\cost(B'_i) \leq \max( \opt(B_i),\;
	\alpha\;\opt(B_i)-2),\;\forall{i}\in[k]$.
Let $\hat{C}$ be a spanning cycle of $C$ of cost~two.
Then
$E(\hat{C})\cup{E(B'_1)}\cup\dots\cup{E(B'_k)}$ is (the edge~set of)
a 2-ECSS of $G$ of cost $\leq \max( \opt(G),\; \alpha\;\opt(G)-2)$.
\end{lemma}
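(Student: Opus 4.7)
The plan is to prove the claim in two stages.

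For the 2-ECSS verification, Lemma~\ref{lem:uncontract} allows all of $E(C)$ rather than only the spanning cycle $E(\hat{C})$, so I would instead verify 2-edge connectivity of $H':=E(\hat{C})\cup E(B'_1)\cup\cdots\cup E(B'_k)$ by a direct cut analysis. Fix any nonempty $S\subsetneq V(G)$. If $S$ is disjoint from $V(C)$ or contains $V(C)$, no edge of $\hat{C}$ lies in $\delta_{H'}(S)$, and the cut coincides with the analogous cut in $\bigcup_i E(B'_i)$ viewed as a 2-ECSS of $G/C$; hence $|\delta_{H'}(S)|\geq 2$. Otherwise $T:=S\cap V(C)$ is a nonempty proper subset of $V(C)$, and $\delta_{H'}(S)\supseteq\delta_{\hat{C}}(T)$, a cut of size exactly $2$ since $\hat{C}$ is a cycle on $3$ or $4$ vertices.

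For the cost bound, the key inequality is $\opt(G)\geq\opt(G/C)=\sum_{i=1}^k\opt(B_i)$: the first step holds because contracting $V(C)$ in any 2-ECSS of $G$ yields a 2-ECSS of $G/C$; the equality uses the parenthetical observation preceding Lemma~\ref{lem:pp:cutnodes} applied at the cut node $\hv$ of $G/C$, which is a cut node because Definition~\ref{d:csplit} forces $G-V(C)$ to have $\geq 2$ connected components. Moreover $\opt(B_i)\geq 2$ for every $i$, since Definition~\ref{d:csplit} ensures that every edge incident to $\hv$ in $B_i$ is a unit-edge. For any index $i$ with $\opt(B_i)\geq 3$ we have $\max(\opt(B_i),\alpha\,\opt(B_i)-2)=\alpha\,\opt(B_i)-2$ (taking $\alpha=\tfrac{5}{3}$), hence $\cost(B'_i)\leq\alpha\,\opt(B_i)-2$; Definition~\ref{d:csplit} guarantees at least two such indices. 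For any remaining index (with $\opt(B_i)=2$) the hypothesis gives $\cost(B'_i)\leq 2\leq\alpha\,\opt(B_i)$. Summing these per-block bounds yields
\[
\sum_{i=1}^k\cost(B'_i)\;\leq\;\alpha\sum_{i=1}^k\opt(B_i)-4,
\]
whence $\cost(H')=2+\sum_i\cost(B'_i)\leq\alpha\,\opt(G)-2\leq\max(\opt(G),\alpha\,\opt(G)-2)$.

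The delicate point is the cost accounting: the $+2$ incurred by adding the spanning cycle $\hat{C}$ must be absorbed using additive $-2$ slack drawn from the per-block bounds. Definition~\ref{d:csplit} is calibrated so that two 2ec-$\hv$-blocks are guaranteed to have $\opt\geq 3$, providing two independent $-2$ slacks---one pays for $\hat{C}$ and the other re-establishes the $-2$ term in the final approximation guarantee. If only a single such block were guaranteed, this accounting would fail.
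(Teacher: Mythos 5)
Your proposal is correct and follows essentially the same route as the paper: the paper certifies 2-edge connectivity by invoking Lemma~\ref{lem:uncontract} (using that $\hat{C}$ is 2EC), which your direct cut analysis simply inlines (and in fact handles slightly more carefully, since that lemma as stated adds all of $E(C)$ rather than just $E(\hat{C})$), and your cost bound via $\opt(G)\ge\sum_{i=1}^k\opt(B_i)$ together with two blocks having $\opt(B_i)\ge3$ supplying the two $-2$ slacks is exactly the paper's accounting. The only nit is your parenthetical ``with $\opt(B_i)=2$'' for the remaining indices---such an index may well have $\opt(B_i)\ge3$---but the bound $\cost(B'_i)\le\alpha\,\opt(B_i)$ you apply to it holds in either case, so the argument is unaffected.
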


\begin{proof}
Note that $\hat{C}$ is 2EC, so 
by Lemma~\ref{lem:uncontract},
the spanning subgraph $H'$ with edge~set
$E(\hat{C})\cup{E(B'_1)}\cup\dots\cup{E(B'_k)}$
is a 2-ECSS of $G$. 

We have $\opt(G)\geq\sum_{i=1}^k\opt(B_i)$.
Then, $\cost(H') = \cost(\hat{C})+\sum_{i=1}^k\cost(B'_i) = 2+\sum_{i=1}^k\cost(B'_i) \leq
\max(\opt(G),\; \alpha\;\opt(G)-2)$, because
$3\leq\opt(B_i)\leq\cost(B'_i)\leq\alpha\;\opt(B_i)-2$ holds for two indices in $[k]$,
by definition of an \csplit. 
\end{proof}

\begin{lemma} \label{lem:pp:redcycle}
Let $C$ be an \redcycle\ of a 2NC graph $G$.
Let $B'_1$ be a 2-ECSS of $G/C$
such that $\cost(B'_1) \leq \max( \opt(G/C),\;
	\alpha\;\opt(G/C)-2)$.
Then
$E(C)\cup{E(B'_1)}$ is (the edge~set of)
a 2-ECSS of $G$ of cost $\leq \max( \opt(G),\; \alpha\;\opt(G)-2)$.
\end{lemma}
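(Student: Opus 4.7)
The plan is to follow the template of the preceding pre-processing lemmas (e.g., Lemma~\ref{lem:pp:csplit} and Lemma~\ref{lem:pp:zsplit}): first confirm that $E(C)\cup E(B'_1)$ is a 2-ECSS of $G$ via Lemma~\ref{lem:uncontract}, then establish a key inequality relating $\opt(G)$ to $\opt(G/C)$, and conclude by case analysis on the hypothesized max-bound for $\cost(B'_1)$.

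For the 2-ECSS property, since $C$ is an induced subgraph containing a 4-cycle, $G[C]$ is connected and $V(C)\subsetneq V(G)$, so Lemma~\ref{lem:uncontract} yields that $E(C)\cup E(B'_1)$ is connected with any bridges lying in $E(C)$. Every edge of the 4-cycle lies on the 4-cycle itself, and any possible chord $v_2v_4$ lies on a cycle together with the 4-cycle (here $v_1,v_3$ are the two nonadjacent degree-two nodes and $v_2,v_4$ are the other two nodes of $C$), so no edge of $E(C)$ is a bridge, and the subgraph is 2-edge-connected.

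For the cost bound, I will prove $\opt(G)\geq\opt(G/C)+2$. The two nonadjacent degree-two nodes $v_1,v_3$ force every 2-ECSS $T^*$ of $G$ to contain all four edges of the 4-cycle $\hat{C}$ (each of $v_1,v_3$ has only its two cycle-neighbors as neighbors in $G$). Contracting $V(C)$ in $T^*$ yields a 2-edge-connected spanning subgraph $T^*/C$ of $G/C$ of cost $\cost(T^*)-\cost(E(T^*)\cap E(C))\leq \opt(G)-2$. Combining with $\cost(E(\hat{C})\cup E(B'_1))=2+\cost(B'_1)$ and the two regimes of the max-bound on $\cost(B'_1)$: the case $\cost(B'_1)\leq\opt(G/C)$ gives at most $2+\opt(G/C)\leq\opt(G)$; the case $\cost(B'_1)\leq\alpha\,\opt(G/C)-2$ gives at most $\alpha\,\opt(G)-2\alpha$, which is $\leq\alpha\,\opt(G)-2$ since $\alpha\geq\frac{5}{3}>1$.

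The chief subtlety I expect to address concerns a possible chord $v_2v_4$ (necessarily a unit-edge, since $v_2$ and $v_4$ are each already saturated by a zero-edge of $\hat{C}$): if $E(C)$ includes such a chord, then $\cost(E(C))=3$ and the first case of the cost analysis no longer closes against $\opt(G)\geq\opt(G/C)+2$. My plan is to resolve this by taking the output as $E(\hat{C})\cup E(B'_1)$ rather than $E(C)\cup E(B'_1)$ (the former remains a 2-ECSS by the same bridge-free argument, and the redundant chord can safely be dropped), so that the cost is $2+\cost(B'_1)$ in both cases. This mirrors the style of Lemma~\ref{lem:pp:csplit}, which explicitly takes $\hat{C}$ to be a spanning cycle of cost two.
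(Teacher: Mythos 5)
Your proof is correct and follows essentially the same route as the paper's: Lemma~\ref{lem:uncontract} together with the fact that the cycle is 2EC gives 2-edge-connectivity, the two degree-two nodes force the four cycle edges into every 2-ECSS so that $\opt(G)\ge\opt(G/C)+2$, and the two regimes of the hypothesis close exactly as you compute. The one divergence is your chord discussion: the paper's proof simply treats $E(C)$ as the cost-two 4-cycle (it asserts that every 2-ECSS of $G$ contains all of $E(C)$ and that the output costs $2+\cost(B'_1)$) and never addresses a possible chord $v_2v_4$, which the definition of \redcycle\ does not literally exclude. Your observation is sound: such a chord is necessarily a unit-edge, it is not forced into a 2-ECSS, and for very small $G/C$ the literal conclusion with $E(C)$ could then fail, so some repair is genuinely needed. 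Your fix of returning $E(\hat{C})\cup E(B'_1)$ is valid (apply Lemma~\ref{lem:uncontract} in $G$ minus the chord, whose contraction is still $G/C$), though note it proves the conclusion with $\hat{C}$ in place of $C$, in the style of Lemma~\ref{lem:pp:csplit}; this is exactly what the post-processing needs. Alternatively, in the regime where the lemma is invoked ($|V(G)|\ge\Nmin$, hence $\opt(G/C)\ge3$), even the unmodified output survives the extra unit of cost, since then $3+\cost(B'_1)\le 1+\alpha\,\opt(G/C)\le \alpha\,\opt(G)-2\alpha+1\le\alpha\,\opt(G)-2$.
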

\begin{proof}
Note that $C$ is 2EC, so 
by Lemma~\ref{lem:uncontract},
the spanning subgraph $H'$ with edge~set
$E(C)\cup{E(B'_1)}$
is a 2-ECSS of $G$. 

Recall that an \redcycle\ contains two nodes of degree exactly 2. 
In particular, any 2-ECSS of $G$ will contain all edges of $E(C)$,
so $\opt(G)\geq2+\opt(G/C)$. 
Then, $\cost(H') = 2+\cost(B'_1) \leq
\max(\opt(G),\; \alpha\;\opt(G)-2)$. 
\end{proof}

\begin{lemma} \label{lem:pp:redgadget}
Let $C$ be an \redgadget\ of a 2NC graph $G$ where $|V(G)|\geq12$.
Let $B'_1$ be 2-ECSS of $G/C$
such that $\cost(B'_1) \leq \max( \opt(G/C),\;
	\alpha\;\opt(G/C)-2)$.
Then there exists $F\subseteq E(C)$ of cost $\leq5$ such that
$F\cup{E(B'_1)}$ is (the edge~set of)
a 2-ECSS of $G$ of cost $\leq \max( \opt(G),\; \alpha\;\opt(G)-2)$.
\end{lemma}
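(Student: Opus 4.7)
My plan is to exhibit a specific $F \subseteq E(C)$ of cost~$5$ that induces a 2-edge-connected subgraph on $V(C)$; a cut argument will then show that $F \cup E(B'_1)$ is a 2-ECSS of $G$, and the cost bound will follow from the key inequality $\opt(G) \geq \opt(G/C) + 3$.

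To establish $\opt(G) \geq \opt(G/C) + 3$, I take any 2-ECSS $M$ of $G$ and assume (WLOG, by adding any missing zero-edges of $E(C)$, which is free and preserves 2-edge-connectivity) that $M$ contains all four zero-edges of $E(C)$. These four zero-edges form a perfect matching on $V(C)$, so each of the eight nodes of $V(C)$ has exactly one zero-edge of $M \cap E(C)$ incident to it, contributing degree~$1$. The six non-attachment nodes $V_1 := V(C) - \{u_2, v_1\}$ have no neighbors in $V(G) - V(C)$ (since $u_2, v_1$ are the only attachments of $C$), so each requires degree $\geq 2$ entirely from $M \cap E(C)$, forcing at least one incident unit-edge. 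Since each unit-edge is incident to at most two nodes of $V_1$, at least $\lceil 6/2 \rceil = 3$ unit-edges lie in $M \cap E(C)$, so $\cost(M \cap E(C)) \geq 3$. Contracting $V(C)$ in $M$ yields a 2-ECSS of $G/C$ of cost $\cost(M) - \cost(M \cap E(C))$, so taking $M$ optimal gives $\opt(G/C) \leq \opt(G) - 3$.

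For the construction, let $y_1'$ denote the unique unit-edge of $C_1 - u_2$ with endpoints $p, p'$; by the definition of an \redgadget{}, $p$ and $p'$ are each adjacent to $C_2$ in $G$, so I pick crossings $c_p, c_{p'} \in E(C)$ joining $p, p'$ to $V(C_2)$. Define
\[ F := (E(C_1) - \{y_1'\}) \cup E(C_2) \cup \{c_p, c_{p'}\}. \]
Since the zero-edges of $E(C)$ already saturate every node of $V(C)$ via $C_1 \cup C_2$, no crossing in $E(C)$ can be a zero-edge; hence all crossings are unit-edges, and $\cost(F) = 1 + 2 + 2 = 5$. Now $E(C_1) - \{y_1'\}$ is a Hamiltonian path $P$ on $V(C_1)$ with endpoints $p, p'$, $E(C_2)$ is a 2-edge-connected 4-cycle $Q$ on $V(C_2)$, and $c_p, c_{p'}$ attach the two endpoints of $P$ to $V(Q)$. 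Every edge of $F$ then lies on a cycle---an edge of $P$ lies on the cycle that uses $c_p$, a portion of $Q$, and $c_{p'}$; the edge $c_p$ (similarly $c_{p'}$) lies on the cycle using $Q$ and $P$; and each edge of $Q$ lies on the 4-cycle $Q$ itself---so $F$ is 2-edge-connected on $V(C)$.

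To verify that $F \cup E(B'_1)$ is a 2-ECSS of $G$: for any $\emptyset \neq S \subsetneq V(G)$, either (i)~$V(C) \subseteq S$ or $V(C) \cap S = \emptyset$, in which case $\delta_{F \cup B'_1}(S)$ equals $\delta_{B'_1}(S')$ for a nonempty proper $S' \subset V(G/C)$ and has size $\geq 2$ because $B'_1$ is 2-edge-connected; or (ii)~$V(C)$ is split by $S$, and then $\delta_F(S \cap V(C)) \subseteq \delta_{F \cup B'_1}(S)$ already has size $\geq 2$ by the 2-edge-connectivity of $F$ on $V(C)$. For the cost bound, $|V(G/C)| = |V(G)| - 7 \geq 5$ forces $\opt(G/C) \geq 3$, so $\cost(B'_1) \leq \alpha\,\opt(G/C) - 2$, and
\[ \cost(F \cup E(B'_1)) \leq 5 + \alpha\,\opt(G/C) - 2 \leq \alpha(\opt(G) - 3) + 3 = \alpha\,\opt(G) - 2, \]
using $\alpha = \tfrac{5}{3}$ (so $3\alpha - 3 = 2$) and $\opt(G) \geq \opt(G/C) + 3$. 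The trickiest part will be the explicit 2-edge-connectivity verification of $F$ on $V(C)$ via the cycle argument, since that underpins the whole cut analysis.
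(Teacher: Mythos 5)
Your proposal is correct and follows essentially the same route as the paper: the same edge set $F=E(C_1)\cup E(C_2)\cup\{c_p,c_{p'}\}-\{y_1'\}$ of cost~$5$, the same lower bounds $\opt(G)\geq \opt(G/C)+3$ (your degree count on the six non-attachment nodes replaces the paper's count of $\geq 7$ edges of $C$, of which $\geq 3$ are unit) and $\opt(G/C)\geq 3$ from $|V(G/C)|\geq 5$, and the same final arithmetic. The only cosmetic differences are that you verify 2-edge-connectivity of $F\cup E(B'_1)$ by a direct cut argument where the paper invokes Lemma~\ref{lem:uncontract}, and that you fix $\alpha=\frac53$ in the last inequality, which extends to any $\alpha\geq\frac53$ since $3\alpha-3\geq 2$.
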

\begin{proof}
Let $F$ be the edge~set of a 2-ECSS of $C$ of minimum cost.
Then $\cost(F)\leq5$.
(To see this, consider the two disjoint $4$-cycles $C_1,C_2$ of $C$
and let $e=uv\in E(C_1)$ be a unit-edge such that $u$ and $v$ are
incident to edges $f_1,f_2$, respectively, such that both $f_1$ and
$f_2$ have an end~node in $C_2$; let
$F=E(C_1)\cup{E(C_2)}\cup\{f_1,f_2\}-\{e\}$.)
By Lemma~\ref{lem:uncontract}, the spanning subgraph $H'$ with edge
set $F \cup E(B'_1)$ is a 2-ECSS of $G$.

Observe that $\opt(G) \geq 3+\opt(G/C)$, because any 2-ECSS of $G$
has $\ge 7$ edges of $C$ (since $C$ has 8~nodes and exactly two attachments),
and $\ge3$ of these edges have unit cost.
Moreover, since $G/C$ has $\geq5$ nodes,
$3\leq \opt(G/C)\leq \cost(B'_1) \leq \alpha\;\opt(G/C)-2$.
Hence, $\cost(H') = 5 + \cost(B'_1) \le \alpha\;\opt(G)-2$. 
\end{proof}

\begin{theorem} \label{thm:preproc}
Suppose that there is an approximation algorithm that
given an instance $H$ of \wsMAP{}, finds a 2-ECSS of cost
$\leq \max( \opt(H),\; \alpha\;\opt(H)-2)$.
Then, given an instance $G$ of MAP, there is a polynomial-time
algorithm to find a 2-ECSS of cost $\leq \max( \opt(G),\;
\alpha\;\opt(G)-2)$.
\end{theorem}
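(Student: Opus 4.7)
The plan is a straightforward induction over the pre-processing procedure of step~(0). I would run step~(0) exhaustively to obtain a list $G_1,\dots,G_k$ of sub-instances, solve each terminal sub-instance, then reverse-engineer a 2-ECSS of $G$ by repeatedly applying Lemmas~\ref{lem:pp:cutnodes}--\ref{lem:pp:redgadget}. To bound the running time, I would introduce a non-negative integer potential, e.g.\ $\Phi(\mathcal{L}) := \sum_{G_i \in \mathcal{L}}\bigl(2|V(G_i)| + |E(G_i)|\bigr)$, and verify that each pre-processing step strictly decreases $\Phi$: handling a cut~node, \zsplit, \usplit, or \csplit\ replaces a block $G_i$ by its 2ec-blocks after possibly contracting a subgraph on $\ge 2$ nodes, so the total node count strictly drops; handling a pair of parallel edges drops $|E|$ by one; handling an \redcycle\ or \redgadget\ contracts a subgraph with $\ge 4$ nodes, again dropping $|V|$. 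Since $\Phi = O(n^2)$ initially, this terminates in polynomially many iterations, and each iteration runs in polynomial time by Lemma~\ref{lem:findobs}.

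Once the list is finalized, every sub-instance is either an instance of \wsMAP{} or satisfies $|V(G_i)| < \Nmin$. For each small $G_i$ I would invoke Lemma~\ref{lem:computeopt} to obtain an exact optimal 2-ECSS $H_i$ in $O(1)$ time, which trivially satisfies $\cost(H_i) = \opt(G_i) \leq \max(\opt(G_i),\,\alpha\,\opt(G_i)-2)$; for each \wsMAP{} sub-instance I would invoke the assumed approximation algorithm to obtain $H_i$ with the same guarantee.

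To reconstruct a 2-ECSS of $G$, I would perform the pre-processing steps in reverse order, maintaining the invariant that after undoing a step, each resulting sub-instance $G'$ on the current list has an associated 2-ECSS $H'$ with $\cost(H') \leq \max(\opt(G'),\,\alpha\,\opt(G')-2)$. The choice of reconstruction rule is dictated by the step being undone: Lemma~\ref{lem:pp:cutnodes} for cut-nodes, Lemma~\ref{lem:pp:paredges} for parallel edges, Lemmas~\ref{lem:pp:zsplit},~\ref{lem:pp:usplit},~\ref{lem:pp:csplit} for \zsplit, \usplit, and \csplit, and Lemmas~\ref{lem:pp:redcycle},~\ref{lem:pp:redgadget} for \redcycle\ and \redgadget. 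Each of these lemmas is tailored so that its hypothesis matches the invariant, and its conclusion re-establishes the invariant one level up; applying them in sequence yields the desired 2-ECSS $H$ of $G$ with $\cost(H) \leq \max(\opt(G),\,\alpha\,\opt(G)-2)$.

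The substantive combinatorial work is already encapsulated in the seven reconstruction lemmas, so the main obstacle for this theorem is almost purely bookkeeping: one must check that the intermediate sub-instances actually satisfy the hypotheses required by each lemma (e.g.\ 2NC-ness after contractions, and the $\opt(B_i)\ge 3$ conditions built into the definitions of \usplit\ and \csplit), and that the potential argument really translates into a polynomial bound on the number of iterations. Both points follow from the structural properties of the obstructions and of the contraction operation, so no new ideas beyond those already developed in Section~\ref{s:pre-proc} are needed.
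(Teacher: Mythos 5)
Your overall structure matches the paper's proof: run the pre-processing loop, solve the terminal sub-instances (Lemma~\ref{lem:computeopt} for the small ones, the assumed algorithm for the \wsMAP{} ones), and undo the reductions in reverse order via Lemmas~\ref{lem:pp:cutnodes}--\ref{lem:pp:redgadget}, which is exactly how the paper combines the costs. The gap is in your termination argument. Your potential $\Phi=\sum_i\bigl(2|V(G_i)|+|E(G_i)|\bigr)$ does \emph{not} strictly decrease, because your claim that ``the total node count strictly drops'' is false for the splitting operations. When a cut node $v$ of $G_i$ is handled, there is no contraction at all: $G_i$ is replaced by its 2ec-$v$-blocks, each of which contains a copy of $v$, so the total node count \emph{increases} by $k-1\ge1$ while the total edge count is unchanged (the edge sets of the blocks partition $E(G_i)$); hence $\Phi$ increases by $2(k-1)$. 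The same problem arises in step~(iii): contracting a \zsplit\ or \usplit\ removes one node, but splitting $G_i/C$ at the contracted node $\hv$ into its $k$ 2ec-$\hv$-blocks duplicates $\hv$ $k$ times, so for $k\ge3$ the node count again goes up and $\Phi$ can increase (e.g.\ $2(k-2)-1>0$). So as written your polynomial bound on the number of iterations does not follow.

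The fix is the paper's choice of potential, $\phi=\sum_i\bigl(|E(G_i)|+\#\text{cutnodes}(G_i)\bigr)$: handling a cut node leaves the total edge count unchanged but strictly decreases the number of cut nodes (the split vertex is no longer a cut node of any block, and no new cut nodes are created in the blocks), while each of the other operations deletes at least one edge and does not increase the cut-node count by more than it can afford (in step~(iii) the graph is 2NC and the resulting 2ec-$\hv$-blocks have no cut nodes; in step~(iv) only the contracted node can become a cut node, while at least four edges disappear). With that replacement, and with the hypothesis-checking you already flag (2NC-ness of the sub-instances when the lemmas are invoked, the $\opt(B_i)\ge3$ conditions built into the \usplit/\csplit\ definitions), your argument becomes the paper's proof.
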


\begin{proof}
Let $n$ and $m$ denote $|V(G)|$ and $|E(G)|$.
First, observe that there are at most $O(n+m)$ iterations
of the while-loop of the pre-processing algorithm (given in the box).
To see this, consider the ``potential function" $\phi$ given by the
sum over all graphs $G_i$ in the current list of
$|E(G_i)|+\#\text{cutnodes}(G_i)$ (i.e., sum of the number of edges
of $G_i$ and the number of cut~nodes of $G_i$).  Initially, $\phi\leq
m+n$; $\phi$ decreases (by one or more) in every iteration because
each of the ``operations" (labelled by (i),~(ii),~(iii) (a),(b),(c),
(iv) (a),(b)) causes $\phi$ to decrease; $\phi$ is $\ge0$
always.  Hence, the number of iterations is $\leq m+n$.
Clearly, each iteration can be implemented in polynomial time.

The upper bound on the cost of the 2-ECSS solution
follows from the previous results in this section, i.e.,
Lemmas~\ref{lem:findobs}--\ref{lem:pp:redgadget}.
\end{proof}
}

\section{ \label{s:bridge-cover}  Bridge~covering}
{
The results in this section are based on the prior results and
methods of \cite{cdgkn:map,dippel:thesis}, but the goal in these
previous papers is to obtain an approximation guarantee of $\frac74$
for MAP, whereas our goal is an approximation guarantee of $\frac53$.
Our credit invariant is presented in Section~\ref{s:credits} below,
and it  is based on the credit invariant in \cite{dippel:thesis}.

In this section and in Section~\ref{s:gluing}, we assume that
the input is an instance of \wsMAP{}.
For notational convenience, we denote the input by $G$.
Recall that $G$ is a simple, 2NC graph on $\ge \Nmin$ nodes,
and $G$ has no \zsplit, no \usplit, no \csplit, no \redcycle, and no \redgadget.
Recall that a 2ec-block is called small if it has $\le2$ unit-edges,
and is called large otherwise.
Since $G$ is 2NC and simple, a small 2ec-block is either
a 3-cycle with one zero-edge and two unit-edges,
or a 4-cycle with alternating zero-edges and unit-edges.

Each unit-edge $e$ of \DTWO\ starts with $\frac53$ tokens, and from
this, one unit is kept aside (to pay for $e$), and the other $\frac23$
is defined to be the credit of $e$.
Our overall goal is to find a 2-ECSS $H'$ of $G$ of cost $\leq
\frac53 \cost(\DTWO)$, and we keep $\frac23 \cost(\DTWO)$ from our
budget in the form of credit while using the rest of our budget for
``buying" the unit-edges of \DTWO. We use the credit for ``buying"
unit-edges that are added to our current graph during the bridge~covering
step or the gluing~step.
(In the gluing~step, we may ``sell" unit-edges of our current graph,
that is, we may permanently discard some unit-edges of our current graph;
thus, our overall budgeting scheme does not rely solely on credits.)

We use $H$ to denote the current graph of the bridge~covering~step;
initially, $H=\DTWO$.

The outcome of the bridge~covering step is stated in the following result.

\begin{proposition} \label{propo:bridgecover}
At the termination of the bridge~covering~step, $H$ is a bridgeless
2-edge~cover;
moreover, every small 2ec-block of $H$ has $\ge\frac43$ credits and
every large 2ec-block of $H$ has $\ge2$ credits.
The bridge~covering~step can be implemented in polynomial time.
\end{proposition}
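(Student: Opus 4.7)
The plan is to argue by induction on the number of bridges of $H$, starting from $H := \DTWO(G)$. For the base case, I would check that the credit invariant holds at $H = \DTWO(G)$: each 2ec-block $B$ of $\DTWO$ is a maximal 2EC subgraph, and since the zero-edges form a matching, the definitions of ``small'' and ``large'' (at most $2$ and at least $3$ unit-edges, respectively) immediately make the credit, computed at $\frac{2}{3}$ per unit-edge of $B$, equal to $\frac{4}{3}$ or at least $2$, respectively. That $H$ remains a 2-edge~cover throughout is automatic, since the algorithm only adds edges.

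The inductive step is one iteration of the main loop. While $H$ has a bridge, the ``find pseudo-ear'' subroutine (Proposition~\ref{propo:find-pseudo-ear}) identifies a path $B_0, b_0, B_1, \ldots, b_{k-1}, B_k$ in the block-bridge tree of $H$ together with an edge $f \in E(G) - E(H)$ joining endpoints in two different blocks of this path. Adding $f$ to $H$ closes a cycle through $f$, the bridges $b_0, \ldots, b_{k-1}$, and parts of $B_0, \ldots, B_k$; this merges these blocks and absorbs the bridges into a single new 2ec-block $\widetilde B$, strictly decreasing the number of bridges. I would then verify the credit invariant for $\widetilde B$ by pooling the credits of $B_0, \ldots, B_k$ and of each absorbed unit-bridge, subtracting $1$ to pay for $f$, and assigning the residue to $\widetilde B$. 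In the generic case $\widetilde B$ is large, and whenever $k \ge 2$, or some merged block is large, or some absorbed bridge is a unit-bridge, the pooled credit exceeds $2$ by a comfortable margin; for instance the all-small, all-zero-bridge case with $k \ge 2$ gives $(k+1)\cdot\frac{4}{3} - 1 \ge 3$.

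The main obstacle will be the remaining tight sub-cases: $k = 1$ with two small blocks joined by a single zero-bridge, and the ``self-ear'' case $B_0 = B_k$, in which only one block's credit is available. Here the proof has to invoke the structural guarantees built into \wsMAP{} by pre-processing --- in particular the absence of \usplit{}s, \csplit{}s, \redcycle{}s, and \redgadget{}s --- to argue that the find-pseudo-ear subroutine either returns a pseudo-ear with additional unit-bridges to absorb, or an edge $f$ that lands in a large block, so that the credit of $\widetilde B$ still reaches $2$. Eliminating \usplit{}s in pre-processing is precisely what bypasses the hardest case of \cite[Lemma~24]{cdgkn:map} and keeps this case analysis tractable. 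Finally, each iteration runs in polynomial time (a search on the block-bridge tree of $H$ together with a scan of $G$ for a suitable non-$H$ edge), and since every iteration strictly decreases the number of bridges of $H$, the loop terminates after $O(|V(G)|)$ iterations.
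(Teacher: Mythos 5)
There is a genuine structural gap in your inductive step. You assume that a pseudo-ear augmentation consists of a single edge $f\in E(G)-E(H)$ with both end nodes inside the same connected component $C_0$ of $H$ (joining two blocks on a path of the block-bridge tree of $C_0$). That need not exist: $G$ being 2NC only guarantees that the bridge $ru$ incident to the pendant block $R$ lies on a cycle of $G$, and every such cycle may pass through nodes of \emph{other} connected components of $H$. This is exactly why the paper defines a pseudo-ear as a sequence $R,f_1,C_1,f_2,\dots,f_{k-1},C_{k-1},f_k$ of $k$ new edges threading through $k-1$ distinct other components of $H$ before returning to $C_0$. Your scheme has no way to pay for these $k$ edges: the paper's credit invariant gives every connected component one \ccredit\ precisely so that the merged components $C_1,\dots,C_{k-1}$ pay for $k-1$ of the new edges, and the last unit is extracted from the witness path $Q-r$ in $C_0$ via \bcredit{}s of blocks on $Q$ or \ncredit{}s of black nodes (Lemma~\ref{lem:wpcredit}), with Proposition~\ref{propo:find-pseudo-ear} guaranteeing (using only the absence of \zsplit{}s and \usplit{}s) that a pseudo-ear with such a witness path can be found. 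Your per-block pooling also ignores the black cut nodes lying on paths of bridges, which are not in any 2ec-block and whose $\frac13$-per-unit-bridge \ncredit\ is essential in the two tight witness-path cases.

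Your base case and your handling of the remaining tight configurations are likewise off the paper's track in a way that matters. The initial credit assignment is not immediate: the paper's Lemma~\ref{lem:startcredit} must extract one \ccredit\ per component, and for a component consisting of exactly two small 2ec-blocks this only works because of the normalization~$(*)$ of \DTWO\ (every pendant 2ec-block incident to a zero-bridge is large, so each pendant small block is met by a unit-bridge donating $\frac13$ to its white end node). That same property~$(*)$, not the absence of \csplit{}s, \redcycle{}s or \redgadget{}s, is what rules out your problematic ``two small blocks joined by a zero-bridge'' picture; the paper explicitly uses those three obstructions only in the gluing step, while bridge covering relies on \zsplit\ and \usplit\ absence alone. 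So to repair the proposal you would need to (i) allow multi-edge, inter-component pseudo-ears, (ii) introduce a per-component credit to fund them, and (iii) import the normalization~$(*)$ of \DTWO\ together with the white/black node and witness-path analysis --- which is essentially reconstructing the paper's Lemmas~\ref{lem:startcredit},~\ref{lem:wpcredit} and Propositions~\ref{propo:find-pseudo-ear},~\ref{propo:bccredits}.
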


{
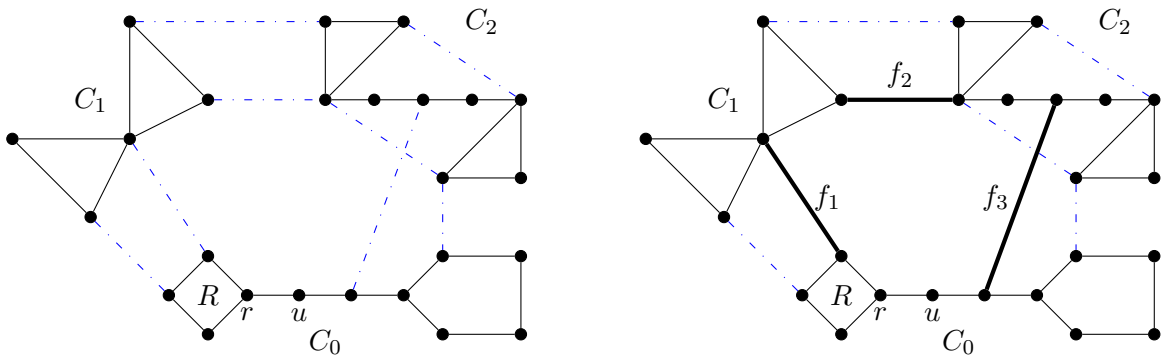
\begin{figure}[htb]
	\centering
	\begin{subfigure}{0.49\textwidth}
	\centering
    \begin{tikzpicture}[scale=0.52]
    	\begin{scope}[every node/.style=
    				 {circle, fill=black, draw, inner sep=0pt, 
    				  minimum size = 0.15cm}]

		\node[] (u1) at (-7, 2) {}; 
    		\node[] (u2) at (-5, 0) {}; 
		\node[] (u3) at (-4, 2) {}; 
    		\node[] (u4) at (-4, 5) {}; 
    		\node[] (u5) at (-2, 3) {}; 
            
    		\node[] (v1) at (-3, -2) {}; 
    		\node[] (v2) at (-2, -1) {}; 
    		\node[label={[label distance=1]270:$r$}]  (v3) at (-1, -2) {}; 
    		\node[] (v4) at (-2, -3) {}; 
            
    		\node[label={[label distance=1]270:$u$}] (p1) at (0.33, -2) {}; 
    		\node[] (p2) at (1.66, -2){}; 
            
    		\node[] (v5) at (3, -2) {}; 
		\node[] (v6) at (4, -1) {}; 
		\node[] (v7) at (6, -1) {}; 
		\node[] (v8) at (6, -3) {}; 
		\node[] (v9) at (4, -3) {}; 
			
		\node[] (w1) at (1, 3) {}; 
		\node[] (w2) at (1, 5) {}; 
		\node[] (w3) at (3, 5) {}; 
		\node[] (w4) at (6, 3) {};
		\node[] (w5) at (6, 1) {}; 
		\node[] (w6) at (4, 1) {};      
			
		\node[] (q1) at (2.25, 3) {};
		\node[] (q2) at (3.5, 3) {}; 
		\node[] (q3) at (4.75, 3){};        
        
    \end{scope}
    
    \begin{scope}[every edge/.style={draw=black}]
           			
		\path[] (u1) edge[] node {} (u2);  
		\path[] (u2) edge[] node {} (u3);  
		\path[] (u3) edge[] node {} (u1);
           	
		\path[] (u3) edge[] node {} (u4);    
		\path[] (u4) edge[] node {} (u5);  
		\path[] (u5) edge[] node {} (u3);  
           	
		\path[] (v1) edge[] node {} (v2);  
		\path[] (v2) edge[] node {} (v3);  
		\path[] (v3) edge[] node {} (v4);
		\path[] (v4) edge[] node {} (v1);
           	
		\path[] (v3) edge[] node {} (v5);
           	
		\path[] (v5) edge[] node {} (v6);
		\path[] (v6) edge[] node {} (v7);
		\path[] (v7) edge[] node {} (v8);
       	\path[] (v8) edge[] node {} (v9);
        	\path[] (v9) edge[] node {} (v5);
           	
        	\path[] (w1) edge[] node {} (w2);
        	\path[] (w2) edge[] node {} (w3);
       	\path[] (w3) edge[] node {} (w1);       	
           	
        	\path[] (w4) edge[] node {} (w5);
        	\path[] (w5) edge[] node {} (w6);
        	\path[] (w6) edge[] node {} (w4);
           	
        	\path[] (w1) edge[] node {} (w4); 
         	            	 
	\end{scope}
	 
	\begin{scope}[every edge/.style={draw=blue}]	         	            	 
         	            	 
		\path[loosely dashdotted] (u3) edge[] node{} (v2); 
        	\path[loosely dashdotted] (u5) edge[] node{} (w1); 
        	\path[loosely dashdotted] (p2) edge[] node{} (q2);
        	
        	\path[loosely dashdotted] (u2) edge[] node{} (v1);
        	\path[loosely dashdotted] (u4) edge[] node{} (w2); 
        	\path[loosely dashdotted] (w3) edge[] node{} (w4); 
        	\path[loosely dashdotted] (w6) edge[] node{} (w1);
       	\path[loosely dashdotted] (w6) edge[] node{} (v6);

	\end{scope}
        
    \begin{scope}[every node/.style={draw=none,rectangle}]

		\node (Rlabel) at (-2, -2) {$R$};
		\node (C0label) at (1, -3.2) {$C_0$}; 
		\node (C1label) at (-5, 3) {$C_1$}; 		
		\node (C2label) at (5, 5) {$C_2$}; 		

	\end{scope}
    \end{tikzpicture}
    \label{Bridge-Cov-Illa}
    \end{subfigure}    
    \hspace*{\fill}
	\begin{subfigure}{0.49\textwidth}
	\centering
    \begin{tikzpicture}[scale=0.52]
    	\begin{scope}[every node/.style=
    				 {circle, fill=black, draw, inner sep=0pt, 
    				  minimum size = 0.15cm}]

		\node[] (u1) at (-7, 2) {}; 
    		\node[] (u2) at (-5, 0) {}; 
		\node[] (u3) at (-4, 2) {}; 
    		\node[] (u4) at (-4, 5) {}; 
    		\node[] (u5) at (-2, 3) {}; 
            
    		\node[] (v1) at (-3, -2) {}; 
    		\node[] (v2) at (-2, -1) {}; 
    		\node[label={[label distance=1]270:$r$}]  (v3) at (-1, -2) {}; 
    		\node[] (v4) at (-2, -3) {}; 
            
    		\node[label={[label distance=1]270:$u$}] (p1) at (0.33, -2) {}; 
    		\node[] (p2) at (1.66, -2){}; 
            
    		\node[] (v5) at (3, -2) {}; 
		\node[] (v6) at (4, -1) {}; 
		\node[] (v7) at (6, -1) {}; 
		\node[] (v8) at (6, -3) {}; 
		\node[] (v9) at (4, -3) {}; 
			
		\node[] (w1) at (1, 3) {}; 
		\node[] (w2) at (1, 5) {}; 
		\node[] (w3) at (3, 5) {}; 
		\node[] (w4) at (6, 3) {};
		\node[] (w5) at (6, 1) {}; 
		\node[] (w6) at (4, 1) {};      
			
		\node[] (q1) at (2.25, 3) {};
		\node[] (q2) at (3.5, 3) {}; 
		\node[] (q3) at (4.75, 3){};        
        
    \end{scope}
    
    \begin{scope}[every edge/.style={draw=black}]
           			
		\path[] (u1) edge[] node {} (u2);  
		\path[] (u2) edge[] node {} (u3);  
		\path[] (u3) edge[] node {} (u1);
           	
		\path[] (u3) edge[] node {} (u4);    
		\path[] (u4) edge[] node {} (u5);  
		\path[] (u5) edge[] node {} (u3);  
           	
		\path[] (v1) edge[] node {} (v2);  
		\path[] (v2) edge[] node {} (v3);  
		\path[] (v3) edge[] node {} (v4);
		\path[] (v4) edge[] node {} (v1);
           	
		\path[] (v3) edge[] node {} (v5);
           	
		\path[] (v5) edge[] node {} (v6);
		\path[] (v6) edge[] node {} (v7);
		\path[] (v7) edge[] node {} (v8);
       	\path[] (v8) edge[] node {} (v9);
        	\path[] (v9) edge[] node {} (v5);
           	
        	\path[] (w1) edge[] node {} (w2);
        	\path[] (w2) edge[] node {} (w3);
       	\path[] (w3) edge[] node {} (w1);       	
           	
        	\path[] (w4) edge[] node {} (w5);
        	\path[] (w5) edge[] node {} (w6);
        	\path[] (w6) edge[] node {} (w4);
        	\path[] (w1) edge[] node {} (w4); 
         	            	  
        	\path[ultra thick] (u3) edge[] node[right]{$f_1$} (v2); 
        	\path[ultra thick] (u5) edge[] node[above]{$f_2$} (w1); 
        	\path[ultra thick] (p2) edge[] node[left]{$f_3$} (q2);
            
	\end{scope}

	\begin{scope}[every edge/.style={draw=blue}]
		
		\path[loosely dashdotted] (u2) edge[] node{} (v1);
        	\path[loosely dashdotted] (u4) edge[] node{} (w2); 
        	\path[loosely dashdotted] (w3) edge[] node{} (w4); 
        	\path[loosely dashdotted] (w6) edge[] node{} (w1);
       	\path[loosely dashdotted] (w6) edge[] node{} (v6);
       	
	\end{scope}
        
    \begin{scope}[every node/.style={draw=none,rectangle}]
           	
		\node (Rlabel) at (-2, -2) {$R$};
		\node (C0label) at (1, -3.2) {$C_0$}; 
		\node (C1label) at (-5, 3) {$C_1$}; 		
		\node (C2label) at (5, 5) {$C_2$}; 		
		
	\end{scope}
    \end{tikzpicture}
    \label{Bridge-Cov-Illb}
    \end{subfigure}
    \caption{
	Illustration of an iteration of our bridge-covering step. 
	Solid lines indicate edges of the graph $H$, and
	(blue) dash-dotted lines indicate edges of $E(G) - E(H)$. 
	The pseudo-ear $R,f_1,C_1,f_2,C_2,f_3$ covers
	the bridge $ru$ of $C_0$ (right subfigure).
	Thick lines indicate the edges $f_1,f_2,f_3$ of the pseudo-ear. 
	}
    \label{f:Bridge-Cov-Ill}
\end{figure}
}

A brief overview of the bridge~covering step follows: The goal is
to add ``new'' edges to $H$ to obtain a bridgeless 2-edge~cover,
and to pay for these ``new" edges from credits available in $H$
while preserving a credit invariant (stated below).  In each
iteration, we pick a connected component $C_0$ of $H$ such that
$C_0$ has a bridge, then we pick any pendant 2ec-block $R$ of $C_0$,
then we add a set of edges $\{f_1,\dots,f_k\}\subseteq E(G)-E(H)$
that ``covers" the unique bridge of $C_0$ incident to $R$ (possibly,
$k=1$).  Informally speaking, this step merges $k-1$ connected
components $C_1,C_2,\dots,C_{k-1}$ of $H$ with $C_0$ (see the
discussion below).  Each connected component of $H$ has one unit
of so-called \ccredit\ (by the credit invariant stated below), and
we take this credit from each of $C_1,C_2,\dots,C_{k-1}$ and use
that to pay for $k-1$ of the newly added edges.  The challenge is
to find one more unit of credit (since we added $k$ edges), and
this is the focus of our analysis given below.  See
Figure~\ref{f:Bridge-Cov-Ill}.

A detailed discussion of an iteration is presented in
Section~\ref{s:earaugment} below, after we define the notion of a
pseudo-ear; we refer to an iteration (of bridge~covering) as a
pseudo-ear augmentation.

Now, we start on the formal presentation and analysis.
By \cite[Section~5.1,Proposition~5.20]{cdgkn:map},
we may assume without loss of generality that \DTWO\ has the following properties:

\begin{tabular}{lp{0.75\textwidth}}
\label{p:propdtwo}
(*)\qquad & \DTWO\ contains all the zero-edges.
	Every pendant 2ec-block of \DTWO\ that is incident to a zero-bridge
	is a large 2ec-block.
\end{tabular}

Recall that $H$ denotes the current graph, and initially, $H=\DTWO$.
We call a node $v$ of $H$ a \textit{white} node if $v$ belongs to
a 2ec-block of $H$, otherwise, we call $v$ a \textit{black} node.
Observe that all edges of $H$ incident to a black node $v$ are
bridges of $H$, and $v$ is incident to $\ge2$ bridges of $H$.

It is convenient to define the following multi-graphs:
let $\wth$ be the multi-graph obtained from $H$ by contracting each
2ec-block $B_i$ of $H$ into a single node that we will denote by $B_i$
(thus, the notation $B_i$ refers to either a 2ec-block of $H$ or a node of $\wth$).
Observe that each connected component of $\wth$ is a tree (possibly,
an isolated node).  Similarly, let $\wtg$ be the multi-graph obtained
from $G$ by contracting each 2ec-block $B_i$ of $H$ into a single
node.

We call a node $v$ of the multigraph $\wth$ black
if it is the image of a black node of $H$,
otherwise, we call $v$ a white node.
Each 2ec-block of $H$ maps to a white node of $\wth$.
Each bridge of $H$ maps to a bridge of $\wth$.
Clearly, each black node of $\wth$ is incident to $\ge2$ bridges of $\wth$.

\subsection{Credit invariant} \label{s:credits}
We re-assign the credits of \DTWO\ such that the following credit
invariant holds for $H$ at the start/end of every iteration in the
bridge~covering step.
(Note that the credit invariant may ``break" inside an iteration,
while the algorithm is updating information, but this is not relevant
for our correctness proofs.)

For a black node $v$ of $H$, we use $\deg_H^{(1)}(v)$ to denote the
number of unit-bridges incident to $v$ in $H$.

\medskip
\noi
\fbox{ \begin{minipage}{0.9\textwidth}
\begin{itemize}
\item[] \textbf{Credit invariant for $H$}:

\item[(a)] each connected component {is assigned} at least one credit (called \ccredit);

\item[(b)] each connected component that is a small 2ec-block {is assigned}
$\frac13$ credits (called \bcredit);

\item[(c)] every other 2ec-block {is assigned} at least one credit (called \bcredit);

\item[(d)] each black node $v$ {is assigned} $\frac13 \deg_H^{(1)}(v)$ credits
(called \ncredit).
\end{itemize}
\end{minipage}
}
\medskip

Note that the four types of credit are distinct, and
the invariant gives lower bounds.
For example, a connected component that is a large 2ec-block has
one \ccredit\ and at least one \bcredit.

\begin{lemma} \label{lem:startcredit}
The initial credits of \DTWO\ can be re-assigned such that
(the initial) $H=\DTWO$ satisfies the credit invariant.
\end{lemma}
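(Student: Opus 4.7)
The plan is to show, for each connected component $C_0$ of $H=\DTWO$, that the $\tfrac{2}{3}$ credits per unit-edge of $C_0$ are enough to meet every ccredit, bcredit, and ncredit obligation that the invariant places on $C_0$. Summing across components then gives the desired re-assignment.

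Two easy cases come first. If $C_0$ is a single small 2ec-block, then $C_0$ has exactly two unit-edges, yielding $\tfrac{4}{3}$ credits; this matches exactly the $1+\tfrac{1}{3}$ demanded by parts (a) and (b). If $C_0$ is a single large 2ec-block, then $C_0$ has at least three unit-edges, yielding at least $2$ credits, matching the $1+1$ demanded by (a) and (c).

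The substantive case is when $C_0$ contains at least one bridge. The corresponding component of $\wth$ is then a tree; since each black node of $\wth$ has degree $\ge 2$, all its leaves are white nodes (i.e., contracted 2ec-blocks), and hence the number $t$ of 2ec-blocks of $C_0$ satisfies $t\ge 2$. Write $t=t_s+t_\ell$ for the counts of small and large 2ec-blocks of $C_0$, write $\beta_1$ for the number of unit-bridges of $C_0$, and set $e_B:=\sum_{v\text{ black}}\deg_H^{(1)}(v)$. Parts (a), (c), (d) together demand $1+t+\tfrac{1}{3}e_B$ credits from $C_0$, while the available credit is $\tfrac{2}{3}\!\left(\sum_i u_i+\beta_1\right)\ge \tfrac{2}{3}(2t_s+3t_\ell+\beta_1)$, using $u_i\ge 2$ for each small and $u_i\ge 3$ for each large 2ec-block. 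Clearing by $3$, the required inequality reduces to
\[
t_s+3t_\ell+(2\beta_1-e_B)\;\ge\;3,
\]
which holds immediately when $t_\ell\ge 1$ (since $3t_\ell\ge 3$ and $2\beta_1\ge e_B$), and also when $t_\ell=0$ and $t_s\ge 3$.

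The main obstacle is the remaining tight subcase $t_\ell=0$, $t_s=2$, and this is where property $(*)$ of $\DTWO$ is decisive. With $t=2$, both 2ec-blocks of $C_0$ are leaves of the tree, and hence both are pendant in $H$; since both are small, property $(*)$ forces each of their unique incident bridges to be a unit-bridge. Consequently each leaf contributes an endpoint of a unit-bridge that lies inside a 2ec-block rather than at a black node, so $e_B\le 2\beta_1-2$, i.e.\ $2\beta_1-e_B\ge 2$, closing out the remaining case of the inequality. Once the per-component budget is established, distributing credits to satisfy parts (a)--(d) is immediate.
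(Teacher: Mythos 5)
Your proof is correct and follows essentially the same route as the paper: the same decomposition by connected components, the same observation that components with bridges have at least two (white) leaf blocks, and the same use of property~$(*)$ to resolve the only tight case (a component whose only 2ec-blocks are two small pendant blocks). The only difference is presentational: you verify the invariant by an aggregate per-component counting inequality, $t_s+3t_\ell+(2\beta_1-e_B)\ge 3$, whereas the paper carries out an explicit local re-assignment (blocks keep their own credit, each unit-bridge gives $\frac13$ to each endpoint), but the underlying accounting is identical.
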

\begin{proof}
Each 2ec-block $B$ of \DTWO\ has $\frac23 \cost(B)$ credits;
in particular, a small 2ec-block has $\frac43$ credits,
and a large 2ec-block has $\ge2$ credits.
Each unit-bridge of \DTWO\ starts with $\frac23$ credits, and it
assigns $\frac13$ credits to each of its end~nodes.
{
The assignment of these credits to black nodes immediately satisfies
part~(d) of the credit~invariant.
However, this also assigns $\frac13$ credits to white end~nodes
of unit edges that we may use below.
}

Next, consider parts~(a), (b), (c) of the credit~invariant, i.e.,
the \ccredit{}s and the \bcredit{}s.
For each bridgeless connected component $C$ of $H$, we split its
credit of $\frac23\cost(C)$ among the \ccredit\ and the \bcredit,
keeping one unit for the \ccredit.

Now, consider any other connected
component $C$ of $H$.  If $C$ contains a large 2ec-block $B$, then
$B$ has $\ge2$ credits, and we take one unit of this credit for the
\ccredit\ of $C$ and leave the remaining credits as the \bcredit\
of $B$.  Otherwise, $C$ contains only small 2ec-blocks, and each
has $\frac43$ credits. If $C$ has at least three 2ec-blocks, then
we take $\frac13$ credits from three of its 2ec-blocks and keep that
as the \ccredit\ of $C$, while leaving $\ge1$ \bcredit\ with each
2ec-block.  If $C$ has exactly two (small) 2ec-blocks $B_1,B_2$,
then note that each is a pendant block, so by property~$(*)$ of
\DTWO\ (see page~\pageref{p:propdtwo}), each of $B_1,B_2$ is incident
to a unit-bridge of \DTWO, and moreover, the (white) end~node of the
unit-bridge in $B_i$ ($i\in[2]$) has $\frac13$ (newly assigned)
credits; thus, $B_1\cup B_2$ has $\frac{10}{3}$ credits, and we
take one credit for the \ccredit\ of $C$ while leaving $\ge1$
\bcredit\ with each of $B_1,B_2$.  Hence, $H$ satisfies parts~(a),
(b), (c), (d) of the credit invariant.
\end{proof}

\subsection{Analysis of a pseudo-ear augmentation} \label{s:earaugment}

In this subsection, our goal is to show that a so-called pseudo-ear
augmentation can be applied to $H$ whenever a connected component
of $H$ has a bridge, such that the cost of the newly added unit-edges
is paid from the credits released by the pseudo-ear augmentation,
and moreover, the credit invariant is preserved.

In the graph $H$, let $C_0$ be a connected component that has a
bridge, let $R$ be a pendant 2ec-block of $C_0$, and let $ru$ be
the unique bridge (of $C_0$) incident to $R$, where $r\in V(R)$.
See Figure~\ref{f:Bridge-Cov-Ill} for an illustration of the following
definition.

\begin{defn}
A \textit{pseudo-ear} of $H$ w.r.t.~$C_0$ starting at $R$ is a
sequence $R,f_1,C_1,f_2,C_2,$ $\dots,$ $f_{k-1},C_{k-1},f_k$, where
$C_0,C_1,\dots,C_{k-1}$ are distinct connected components of $H$,
$f_1,\dots,f_k\in E(G)-E(H)$, each $f_i$, $i\in[k-1]$, has one
end~node in $C_{i-1}$ and the other end~node in $C_i$, $f_1$ has
an end~node in $R$, and $f_k$ has one end~node in $C_{k-1}$ and one
end~node in $C_0-V(R)$.
The end~node of $f_k$ in $C_0-V(R)$ is called the \textit{head~node}
of the pseudo-ear.

Any shortest (w.r.t.\ the number of edges) path of $C_0$ between
$r$ and the head~node of the pseudo-ear is called the \textit{witness~path}
of the pseudo-ear.
\end{defn}

Our plan is to find a pseudo-ear (as above) such that for any
witness~path $Q$, there is at least one unit of credit in $Q-r$.
Let $R^{new}$ denote the 2ec-block that results from the addition
of the pseudo-ear; thus, $R^{new}$ contains $R\cup{Q}$. The \bcredit\
of $R$ is transferred to $R^{new}$; thus, $R^{new}$ satisfies
part~(c) of the credit invariant; see Proposition~\ref{propo:bccredits} below.
After we add the pseudo-ear to $H$, the credits of $Q-r$ are released
(they are no longer needed for preserving the credit invariant,
because $Q\cup{R}$ is merged into $R^{new}$). Informally speaking,
we use the credits released from $Q-r$ to pay for the cost of the
last unit-edge added by the pseudo-ear augmentation.

In the graph $\wtg$, let $\wtc_0$ denote the tree corresponding to
$C_0$ and let $\wtr$ denote the leaf of $\wtc_0$ corresponding to
$R$.  Let $\wtp$ be a shortest (w.r.t.\ the number of edges) path
of $\wtg - E(\wtc_0)$ that has one end~node at $\wtr$ and the other end~node
at another node of $\wtc_0$.  Then $\wtp$ corresponds to a pseudo-ear
$R,f_1,C_1,\dots,C_{k-1},f_k$; the sequence of edges of $E(\wtg)-E(\wth)$
of $\wtp$ corresponds to $f_1,\dots,f_k$ and the sequence of trees
$\wtc_1,\dots,\wtc_{k-1}$ of $\wtp$ corresponds to $C_1,\dots,C_{k-1}$.

It is easy to find a pseudo-ear such that any witness~path $Q$ has
$\ge2$ edges. To see this, observe that $G-u$ is connected (since
$G$ is 2NC); let $P$ be a shortest (w.r.t.\ the number of edges)
path between $R$ and $C_0-V(R)$ in $G-u$; then $P$ corresponds to
our desired pseudo-ear, and the head~node is the end~node of $P$
in $C_0-u-V(R)$. Clearly, any path of $C_0$ between $r$ and the
head~node has $\ge2$ edges, hence, any witness~path of the pseudo-ear
has $\ge2$ edges.

In each iteration (of bridge~covering), we compute a pseudo-ear
using a polynomial-time algorithm that is presented in
the proof of Proposition~\ref{propo:find-pseudo-ear}, see below.

The next lemma is used to lower bound the credit of a witness~path.

\begin{lemma} \label{lem:wpcredit}
Let $\Psi$ be a pseudo-ear of $H$ w.r.t.~$C_0$ starting at $R$,
let $Q$ be a witness~path of $\Psi$,
and let $ru$ be unique bridge of $C_0$ incident to $R$.
Suppose that $Q$ satisfies one of the following:
\begin{itemize}
\item[(a)] $Q$ contains a white node distinct from $r$, or
\item[(b)] $Q$ contains exactly one white node and $\ge3$ bridges, or
\item[(c)] $Q$ contains exactly one white node, exactly two bridges,
	and a black node $v$ such that $\deg_H^{(1)}(v)\ge2$.
\end{itemize}
Then $Q-r$ has at least one credit, and that credit is not needed
for the credit invariant of the graph resulting from the pseudo-ear
augmentation that adds $\Psi$ to $H$.
\end{lemma}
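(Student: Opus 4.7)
The plan is a case analysis, in each of the three cases exhibiting at least one unit of credit on $Q-r$ that becomes redundant after the pseudo-ear augmentation. Two simple observations drive the counting. First, since $R$ is a pendant 2ec-block of $C_0$ incident to the bridge $ru$, the component $C_0$ is not a single 2ec-block, and similarly any other 2ec-block $B$ of $H$ that meets $Q-r$ must lie inside $C_0$ and so cannot be alone in its component; by part~(c) of the credit invariant, such a $B$ (and $R$ itself) carries $\ge 1$ bcredit. Second, every black node $v$ has $\ge 2$ bridges incident, and since $M$ is a matching at most one of these is a zero-bridge, so $\deg_H^{(1)}(v) \ge 1$ and $v$ carries $\ge \tfrac13$ ncredit.

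For case~(a), let $w \in Q-r$ be white and let $B$ be its 2ec-block; then $B \ne R$ and by the first observation $B$ has bcredit $\ge 1$. Since $w$ lies on the cycle formed by $\Psi$ together with $Q$ and $B$ is internally 2-edge-connected, the maximality of 2ec-blocks forces all of $B$ to be absorbed into the new 2ec-block $R^{\text{new}}$, releasing $B$'s bcredit. For case~(b), the unique white node of $Q$ must be $r$ (since $r$ is on $Q$ and white), so every non-$r$ node of $Q$ is black; a path with $\ge 3$ bridges has $\ge 4$ nodes, hence $Q-r$ contains $\ge 3$ black nodes, each contributing $\ge \tfrac13$ ncredit by the second observation. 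When these black nodes are absorbed into $R^{\text{new}}$ they become white, releasing a total of $\ge 1$ unit of ncredit. For case~(c), $Q = r,v_1,v_2$ with $v_1,v_2$ both black; one of them (the distinguished $v$) has $\deg_H^{(1)}(v) \ge 2$ and so contributes $\ge \tfrac23$ ncredit, while the other still contributes $\ge \tfrac13$, again summing to $\ge 1$.

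It remains to check that this extracted unit is ``not needed'' by the invariant on the post-augmentation graph. After the augmentation, the component $C_0^{\text{new}}$ containing $R^{\text{new}}$ inherits $C_0$'s ccredit, supplying the new ccredit requirement. Similarly, $R^{\text{new}}$ inherits $R$'s bcredit, which is $\ge 1$ by the first observation (applied with $B=R$), so the new bcredit requirement for $R^{\text{new}}$ is met whether $R^{\text{new}}$ is large or small, alone or not. All 2ec-blocks and black nodes of $C_0^{\text{new}}$ that are not absorbed into $R^{\text{new}}$ retain exactly the credits they had before, and the $k-1$ ccredits from $C_1,\dots,C_{k-1}$ are consumed to pay for $k-1$ of the pseudo-ear edges in the broader bridge-covering argument; hence the $\ge 1$ unit released from $Q-r$ is genuinely surplus.

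The one delicate point is the first observation in case~(a): without invariant part~(c) promoting a 2ec-block that sits next to a bridge to bcredit $\ge 1$, a small 2ec-block $B$ would only contribute $\tfrac13$ bcredit and case~(a) would fail. Everything else is bookkeeping on top of this, plus the standard fact that a 2ec-block meeting the new cycle is entirely absorbed into $R^{\text{new}}$.
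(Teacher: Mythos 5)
Your proposal is correct and follows essentially the same route as the paper's proof: a case analysis that releases the bcredit of the 2ec-block containing the white node in case~(a) (via invariant part~(c), since that block lies in $C_0$ which has a bridge), and sums ncredits of the $\ge3$ black nodes in case~(b), respectively $\frac23+\frac13$ in case~(c), all of which become surplus because the relevant blocks and black nodes are absorbed into $R^{new}$. Your extra bookkeeping (why every black node has a unit-bridge since $M$ is a matching, and how the remaining invariant requirements are met after the augmentation) just makes explicit what the paper defers to Proposition~\ref{propo:bccredits}.
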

\begin{proof}
First, suppose $Q$ contains a white node $w$, $w\neq{r}$; then, the
2ec-block $B_w$ of $C_0$ that contains $w$ has $\ge1$ \bcredit, and
this credit can be released since $B_w \subsetneq R^{new}$.  Otherwise,
suppose that $Q$ has $\ge3$ bridges; then $Q-r$ has $\ge3$ black
nodes, and each black node is incident to at least one unit-bridge,
and so has $\ge\frac13$ \ncredit{}s; thus, $Q-r$ has $\ge1$ \ncredit,
and this credit can be released since $Q\subsetneq R^{new}$.
Otherwise, suppose that $Q$ has exactly two bridges, and one of the
black nodes $v$ in $Q-r$ has $\deg_H^{(1)}(v)\ge2$; then, $v$ has
$\ge\frac23$ \ncredit{}s; there is another black node in $Q-r$ and
that black node has $\ge\frac13$ \ncredit{}s; thus, $Q-r$ has $\ge1$
\ncredit, and this credit can be released since $Q\subsetneq R^{new}$.
\end{proof}

\begin{proposition} \label{propo:find-pseudo-ear}
There is a polynomial-time algorithm for finding a pseudo-ear (of
$H$ w.r.t.~$C_0$ starting at $R$ ) such that any witness~path $Q$
of the pseudo-ear satisfies one of the three conditions of
Lemma~\ref{lem:wpcredit}.
\end{proposition}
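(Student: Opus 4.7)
The plan is to reduce the task to a shortest-path computation in an auxiliary multigraph, followed by a short case analysis that leverages the \wsMAP{} hypotheses. Let $u$ denote the unique neighbour of $r$ outside $V(R)$ (the other endpoint of the bridge $ru$). Build a multigraph $\mathcal{A}$ from $G$ by contracting each connected component of $H$ except $C_0$ into a single node, and additionally contracting $V(R)$ into one node~$\rho$; retain the remaining vertices of $C_0$ (including $u$) as distinct vertices and discard any self-loops. Simple paths in $\mathcal{A}$ from $\rho$ to a vertex of $V(C_0)-V(R)$ are in bijection with pseudo-ears of $H$ w.r.t.\ $C_0$ starting at~$R$, and the endpoint of such a path in $V(C_0)-V(R)$ is the head~node of the pseudo-ear.

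First I would run a BFS from $\rho$ in $\mathcal{A}-u$; since $G$ is 2NC, $G-u$ is connected, so some vertex $h\in V(C_0)-V(R)-\{u\}$ is reached. Picking $h$ at minimum BFS distance yields a pseudo-ear $\Psi$ whose witness~path $Q$ in $C_0$ uses $ru$ followed by at least one further bridge, hence has length $\geq 2$. I would then verify Lemma~\ref{lem:wpcredit} by a short case split: if $h$ lies in a 2ec-block of $C_0$ distinct from $R$, then $h$ itself is a white node of $Q$ different from $r$ and~(a) holds; if $h$ is black but $Q$ enters a 2ec-block other than $R$, (a) again holds at that block's attachment on $Q$; if every non-$r$ node of $Q$ is black and $Q$ contains $\geq 3$ bridges, (b) holds; and if $Q=r,u,h$ with $\max(\deg_H^{(1)}(u),\deg_H^{(1)}(h))\geq 2$, (c) holds.

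The one remaining sub-case, which I expect to be the main obstacle, is $Q=r,u,h$ with both $u,h$ black and $\deg_H^{(1)}(u)=\deg_H^{(1)}(h)=1$. The plan is to show that such a configuration forces a \usplit{} in $G$, contradicting $G\in\wsMAP{}$. Since $u,h$ are black, each is incident only to bridges of $H$; having $\deg_H^{(1)} = 1$ forces each of $u,h$ to have exactly one incident unit-bridge and at least one incident zero-bridge, and by property~$(*)$ of \DTWO{} the 2ec-blocks reached through those zero-bridges are large (with $\opt \geq 3$). Contracting the unit-bridge among $\{ru,uh\}$ and analyzing the 2ec-block structure around the contracted vertex would then exhibit the defining pattern of a \usplit{} — two distinct 2ec-blocks of $\opt \geq 3$, each joined to the contracted vertex by a zero-edge — contradicting the absence of \usplit{}s in an instance of \wsMAP{}. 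The polynomial running time is immediate, since $\mathcal{A}$ has size $O(|V(G)|+|E(G)|)$ and BFS is linear in the size of $\mathcal{A}$.
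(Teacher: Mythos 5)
Your reduction to a shortest-path search in a contracted auxiliary graph, and the first three branches of your case split, follow essentially the same route as the paper (the paper searches in $G-E(C_0)-Z$ for a suitably chosen deletion set $Z$ with $|Z|\le2$). The gap is in the last sub-case. There $Q=r,u,h$ with $u,h$ black and $\deg_H^{(1)}(u)=\deg_H^{(1)}(h)=1$; since black nodes of a 2-edge cover with one incident unit-bridge have $H$-degree two, $h$ is exactly the second $H$-neighbour $w$ of $u$. You propose to rule this configuration out by exhibiting a \usplit{}, but a \usplit{} requires, by definition, that deleting the two end~nodes of a unit-edge \emph{disconnects} $G$, and nothing in your configuration forces $G-\{u,w\}$ (or $G-\{r,u\}$) to be disconnected. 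An instance of \wsMAP{} can perfectly well contain a black path $r,u,w,x$ of this type together with an edge of $E(G)-E(H)$ ending at $w$, in which case your BFS in $\mathcal{A}-u$ may legitimately return $h=w$ and no contradiction is available; the obstruction hypotheses are not there to show that the bad witness path cannot occur, but to show that a better pseudo-ear exists. (Note also that when $uw$ is a zero-edge, the unit-bridge among $\{ru,uw\}$ is $ru$, and contracting it is unrelated to the \zsplit{}/\usplit{} definitions, so your sketch does not even engage the right edge in that case.)

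The repair, which is what the paper's proof does, is algorithmic rather than a proof by contradiction about the configuration: when $u$ is black with $\deg_H^{(1)}(u)=1$ and its other $H$-neighbour $w$ is also black with $\deg_H^{(1)}(w)=1$, delete \emph{both} nodes, i.e., search for the pseudo-ear in $G-E(C_0)-\{u,w\}$, so that the head~node is forced beyond $w$ and every witness path has $\ge3$ bridges, giving condition~(b) of Lemma~\ref{lem:wpcredit}. What must then be proved is that $G-\{u,w\}$ still has a path between $r$ and $x$ (the other $H$-neighbour of $w$): if $uw$ is a zero-edge, a disconnection would make $uw$ a \zsplit{}; if $uw$ is a unit-edge, then $ru$ and $wx$ are zero-bridges and a disconnection would make $uw$ a \usplit{} (one checks $\opt(B_i)\ge3$ for the two 2ec-$\hv$-blocks containing the zero-edges $r\hv$ and $x\hv$, using property~$(*)$ of \DTWO{}). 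So the contradiction is applied to a hypothetical disconnection of $G-\{u,w\}$, not to the occurrence of the short witness path; as written, your proposal leaves this sub-case unproved.
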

\begin{proof}
We use some simple case analysis to construct a set of nodes $Z$
of $C_0-V(R)$ with $|Z|\leq2$ such that $G-Z$ is connected
and $C_0-V(R)-Z$ is nonempty.
Then there exists a pseudo-ear $\Psi$ with head~node in $C_0-V(R)-Z$,
and it can be found in polynomial time 
by computing a shortest (w.r.t.\ the number of edges) path
in the graph $G-E(C_0)-Z$
between $R$ and $V(C_0)-V(R)-Z$.
Our construction of $Z$ ensures that any witness~path of $\Psi$
satisfies one of the conditions of Lemma~\ref{lem:wpcredit}.

Let $ru$ be the unique bridge of $C_0$ incident to $R$.
Note that $C_0$ has another pendant 2ec-block besides $R$,
and each pendant 2ec-block of $H$ has $\ge3$ nodes, hence,
$C_0-V(R)-Z$ is nonempty for any node~set $Z$ with $|Z|\leq2$.

\begin{enumerate}[(a)]
\item
Suppose $u$ is a white node.
Then choose $Z:=\emptyset$.
For any pseudo-ear and any of its witness~paths $Q$, condition~(a)
of Lemma~\ref{lem:wpcredit} holds, since $Q$ contains $u$.

\item
Suppose $u$ is a black node and $\deg_H^{(1)}(u)\ge2$.
Then we choose $Z:=\{u\}$.
$G-Z$ is connected (since $G$ is 2NC), so there exists a pseudo-ear
with head~node in $C_0-V(R)-Z$, and any of its witness~paths satisfies
condition~(b) or condition~(c) of Lemma~\ref{lem:wpcredit}.

\item
Otherwise, $u$ is a black node and $\deg_H^{(1)}(u)=1$.
In this case, $\deg_H(u)=2$.
Let $w\neq{r}$ be the other neighbour of $u$ in $H$.

\begin{enumerate}[(1)]
\item
Suppose $w$ is a white node, or $w$ is a black node and $\deg_H^{(1)}(w)\ge2$,
Then we choose $Z:=\{u\}$.
$G-Z$ is connected (since $G$ is 2NC), so there exists a pseudo-ear
with head~node in $C_0-V(R)-Z$, and any of its witness~paths satisfies
condition~(a) or condition~(b) or condition~(c) of Lemma~\ref{lem:wpcredit}.

\item
Otherwise, $w$ is a black node and $\deg_H^{(1)}(w)=1$.

In this case, $\deg_H(w)=2$.
Let $x$ denote the other neighbour of $w$ (so, $x\not=u$).
We choose $Z:=\{u,w\}$.
(Note that $\Gamma_H(Z)=\{r,x\}$, so $C_0-Z$ has two connected
components, one contains $r$ and the other one contains $x$.)
Below, we show that $G-Z$ is has a path between $r$ and $x$ by using the fact that $G$
has no \zsplit\ and $G$ has no \usplit.
Hence, there exists a pseudo-ear with head~node in $C_0-V(R)-Z$,
and any of its witness~paths satisfies condition~(b) of Lemma~\ref{lem:wpcredit}.

\begin{enumerate}[(i)]
\item
Suppose $uw$ is a zero-edge.
If $G-Z$ is disconnected, then $uw$ would form a \zsplit;
but, $G$ is an instance of \wsMAP{} and it has no \zsplit. 
Hence, $G-Z$ is connected in this case.

\item
Suppose $uw$ is a unit-edge.
Then, $ru$ is a zero-bridge of $H$, and $wx$ (the other bridge incident
to $w$) is a zero-bridge of $H$.

Suppose $G-Z$ is disconnected and $r$ and $x$ are in different connected components of $G-Z$.
Then we claim that $uw$ would form a \usplit\ (this is verified below).
Since $G$ is an instance of \wsMAP{}, it has no \usplit. 
Therefore, $G-Z$ has a path between $r$ and $x$, hence,
there exists a pseudo-ear with head~node in $C_0-V(R)-Z$.

To verify the claim, consider the graph $G/\{u,w\}$ and let $\hv$
denote the contracted node. $G/\{u,w\}$ has a 2ec-$\hv$-block $B_1$
that contains the zero-edge $r\hv$ and has $\opt(B_1)\ge3$, and
$G/\{u,w\}$ has another 2ec-$\hv$-block $B_2$ that contains the
zero-edge $x\hv$ and has $\opt(B_2)\ge3$.
\big(Remark: For $i\in[2]$, observe that $B_i$ has $\ge4$ nodes;
if $B_i$ has $\ge5$ nodes then $\opt(B_i)\ge3$; if $B_i$ has 4~nodes,
then $B_i$ contains a pendant 2ec-block $B_{i,0}$ of $C_0$ that is
incident to a zero-bridge of $C_0$; observe that $B_{i,0}$ has
3~nodes and has $\ge3$ unit-edges by property~$(*)$ of \DTWO\ (see
page~\pageref{p:propdtwo}),
therefore, $B_i$ is a 2EC graph on four nodes with exactly one zero-edge,
hence, $\opt(B_i)\ge3$.\big)
\end{enumerate}
\end{enumerate}
\end{enumerate}
\end{proof}

\begin{proposition} \label{propo:bccredits}
Suppose that $H$ satisfies the credit~invariant, and
a pseudo-ear augmentation is applied to $H$.
Then the resulting graph $H^{new}$ satisfies the credit invariant.
\end{proposition}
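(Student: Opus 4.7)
The plan is to verify the credit invariant for $H^{new}$ by accounting for every credit released and every credit requirement that appears after the pseudo-ear augmentation. The total new cost to absorb is $k$, for the $k$ unit-edges $f_1,\ldots,f_k$ added to $H$. Structurally, the augmentation merges the components $C_0, C_1, \ldots, C_{k-1}$ of $H$ into a single component of $H^{new}$ and forms a new 2ec-block $R^{new}$ that absorbs $R$ together with the 2ec-blocks and black nodes lying on the pseudo-ear cycle: those on the witness path $Q$ in $C_0$, and those on the corresponding transit path through each $C_i$, $i\in[k-1]$. All other 2ec-blocks and black nodes of $C_0, \ldots, C_{k-1}$ survive unchanged in $H^{new}$.

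To pay for the $k$ new unit-edges I will tap two disjoint pools of credit. First, the \ccredit{}s of $C_1,\ldots,C_{k-1}$ are no longer required in $H^{new}$, since these components have merged into one, yielding $k-1$ released credits. Second, by Proposition~\ref{propo:find-pseudo-ear} the pseudo-ear can be chosen so that its witness path $Q$ satisfies one of the three conditions of Lemma~\ref{lem:wpcredit}, which releases one further credit from $Q-r$ (a \bcredit{} of a 2ec-block on $Q$ distinct from $R$, or else $\ge 1$ \ncredit{} from black nodes on $Q$). Together these release $\ge k$ credits, which pay for $f_1,\ldots,f_k$.

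The remaining credit requirements of $H^{new}$ are all satisfied by reassigning existing credits. The merged component takes the \ccredit{} of $C_0$, satisfying part~(a). The new 2ec-block $R^{new}$ inherits the \bcredit{} of $R$: since $C_0$ contains the bridge $ru$ we have $R\subsetneq C_0$, so by part~(c) applied in $H$ the block $R$ carried $\ge 1$ \bcredit{}, which suffices for $R^{new}$ under either part~(b) or part~(c). Every surviving 2ec-block keeps its own \bcredit{}, and every surviving black node $v$ keeps its original \ncredit{}; this suffices because $\deg_{H^{new}}^{(1)}(v)\le \deg_H^{(1)}(v)$ (the pseudo-ear can only convert unit-bridges at $v$ into non-bridges, never create new ones).

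I expect the main obstacle to be the structural step: precisely identifying which 2ec-blocks and black nodes of $C_1,\ldots,C_{k-1}$ are absorbed into $R^{new}$ versus which persist as separate pieces of the merged component, and arguing that every credit I claim is ``released'' is genuinely not needed by the invariant on $H^{new}$. Once this is pinned down, the rest reduces to a straightforward tally that combines the credit invariant for $H$ with the guaranteed credit from $Q-r$ supplied by Proposition~\ref{propo:find-pseudo-ear} and Lemma~\ref{lem:wpcredit}.
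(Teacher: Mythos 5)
Your proposal is correct and follows essentially the same route as the paper's proof: pay for the $k$ new unit-edges with the \ccredit{}s of $C_1,\dots,C_{k-1}$ plus the one credit in $Q-r$ guaranteed by Proposition~\ref{propo:find-pseudo-ear} and Lemma~\ref{lem:wpcredit}, transfer the \bcredit\ of $R$ to $R^{new}$, keep the \ccredit\ of $C_0$ for the merged component, and observe that all remaining \bcredit{}s and \ncredit{}s persist (with $\deg^{(1)}$ only decreasing at surviving black nodes). Your write-up simply makes explicit the bookkeeping that the paper summarizes as ``all other credits stay the same.''
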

\begin{proof}
We use the notation given above (including $C_0, R, ru, r$).
Let $R,f_1,C_1,f_2,C_2,\dots,f_k$ be the pseudo-ear used in an
iteration, let $v$ be the head~node, and let $Q$ be a witness~path.
Let $R^{new}$ denote the 2ec-block of $H^{new}$ that contains $R$.

For each of the connected components $C_i, i\in[k-1],$ let $s_i$
denote the end~node of $f_i$ in $C_i$, let $t_i$ denote the end~node
of $f_{i+1}$ in $C_i$ (possibly, $s_i=t_i$), and let $P_i$ denote
a shortest (w.r.t.\ the number of edges) path of $C_i$ between $s_i$
and $t_i$. Let $P_0\supseteq Q$ be a path of $C_0$ between $v$ and
the end~node of $f_1$ in $R$.
Let $\augC$ be the cycle $P_0,f_1,P_1,\dots,P_{k-1},f_k$.  Observe
that $\augC$, as well as every 2ec-block of $H$ incident to $\augC$,
is merged into $R^{new}$.

As mentioned above, the \bcredit\ of $R$ is taken to be the \bcredit\
of $R^{new}$; the \ccredit{}s of $C_1,\dots,C_{k-1}$ and the credit
of $Q-r$ are used to pay for $f_1,\dots,f_k$.  All other credits
stay the same. It can be verified that the credit~invariant holds
for $H^{new}$.
\end{proof}

\begin{proof} (of Proposition~\ref{propo:bridgecover})
The proof follows from Lemmas~\ref{lem:startcredit},~\ref{lem:wpcredit},
and Propositions~\ref{propo:find-pseudo-ear},~\ref{propo:bccredits},
and the preceding discussion.

Each iteration, i.e., each pseudo-ear augmentation, can be implemented
in polynomial time, and the number of iterations is $\leq |E(\DTWO)|$.

At the termination of bridge~covering, each connected component of
$H$ is a 2ec-block that has one \ccredit\ and either one \bcredit,
or (in the case of a small 2ec-block) $\frac13$ \bcredit{}s.  By
summing the two types of credit, it follows that each small 2ec-block
has $\frac43$ credits and each large 2ec-block has $\ge2$ credits.
\end{proof}
}

\section{ \label{s:gluing}  The gluing step}
{
In this section, we focus on the gluing step, and we assume that
the input is an instance of \wsMAP{}.
For notational convenience, we denote the input by $G$.
Recall that $G$ is a simple, 2NC graph on $\ge \Nmin$ nodes,
and $G$ has no \zsplit, no \usplit, no \csplit, no \redcycle, and no \redgadget.
(In this section, we use all the properties of $G$ except the
absence of \usplit{}s.)

There are important differences between our gluing step and the
gluing step of \cite{cdgkn:map}. Our gluing step (and overall
algorithm) beats the $\frac74$ approximation threshold because our
pre-processing step eliminates the \csplit\ obstruction and the
\redgadget\ obstruction (these obstructions are not relevant to
other parts of our algorithm).
In an appendix (see Section~\ref{s:appendix}), we present instances
$G$ of MAP that contain \csplit{}s (respectively, \redgadget{}s)
and contain none of the other six obstructions such that
$\opt(G)/\cost(\DTWO(G))\approx\frac74$; informally speaking, our
gluing step, applied to an instance $G$ of \wsMAP{}, finds a 2-ECSS
of cost $\leq\frac53\cost(\DTWO(G))$, but this property need not
hold for other instances of MAP (that are not ``well structured").

We use $H$ to denote the current graph of the gluing~step.  At the
start of the gluing~step, $H$ is a simple, bridgeless graph of
minimum degree two; thus, each connected component of $H$ is 2EC;
clearly, the 2ec-blocks of $H$ correspond to the connected components of $H$.
Recall that a 2ec-block of $H$ is called small if it has $\le2$
unit-edges, and is called large otherwise.
Observe that a small 2ec-block of $H$ is either a 3-cycle with one
zero-edge and two unit-edges, or a 4-cycle with alternating zero-edges
and unit-edges.

The following result summarizes this section:

\begin{proposition} \label{propo:gluing}
At the termination of the bridge-covering step,
let $H$ denote the bridgeless 2-edge~cover computed by the algorithm
and suppose that 
each small 2ec-block of $H$ has $\frac43$ credits and
each large 2ec-block of $H$ has $\ge2$ credits.
Let $\gamma$ denote $\credit(H)$.
Assume that $H$ contains all zero-edges.
Then the gluing step augments $H$ to a 2-ECSS $H'$ of $G$
(by adding edges and deleting edges)
such that $\cost(H') \leq \cost(H) + \gamma - 2$.
The gluing~step can be implemented in polynomial time.
\end{proposition}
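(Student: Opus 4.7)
The plan is to merge the 2ec-blocks of $H$ into a single 2-ECSS by working with an auxiliary multigraph $D$ whose nodes are the 2ec-blocks $B_1,\dots,B_t$ of $H$ and whose edges are the members of $E(G)\setminus E(H)$ that join two distinct blocks. Since $G$ is 2NC and $V(H)=V(G)$, the graph $D$ is connected, and Lemma~\ref{lem:uncontract} implies that any 2-ECSS of $D$ lifts, via the blocks $B_i$, to a 2-ECSS of $G$. I would compute a suitable 2-ECSS (or ear-decomposition) of $D$ and use it both to select which edges of $E(G)\setminus E(H)$ to add to $H$ and, simultaneously, to decide which unit-edges of small 2ec-blocks to \emph{swap out} of $H$.

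To bound the cost, I would fix a root 2ec-block $B_0$ (a large one if any exists) and build $H'$ incrementally: at each step, one block $B_i$ is attached to the current partial 2ec-block via a short ear in $D$, and the attachment is charged to $\credit(B_i)$. A large block has $\ge 2$ credits, which is enough to pay for the two new unit-edges required to preserve 2-edge~connectivity after the merge. A small block has only $\frac{4}{3}$ credits, so I would simultaneously delete one of its two unit-edges while inserting two attachment edges; the resulting net change of $+1$ unit-edge is paid for by the $\frac{4}{3}$ credits of $B_i$, with a remaining $\frac{1}{3}$ credit deposited at the merged super-block (useful if a later merge attaches another small block there). The additive $-2$ in the bound comes from the root: $B_0$ pays no attachment cost and contributes its entire credit to the slack, with an additional saving of $\ge \frac{2}{3}$ arising from a second unit-edge swap executed when the final ear closes back onto $B_0$ (or from the excess credit of a large root).

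The main obstacle is showing that these unit-edge swaps always preserve 2-edge~connectivity and can be realised globally consistently. A small 2ec-block is a 3-cycle or a 4-cycle, so after removing one of its unit-edges there is essentially a unique way to restore 2-edge~connectivity using its remaining zero-edge(s) and two external edges; this routing is obstructed precisely by configurations such as \csplit\ and \redgadget, and the absence of these obstructions in any instance of \wsMAP{} is exactly what makes the swap available (this is where the preprocessing shows its teeth, and where the auxiliary Lemmas~\ref{lem:smallblocks:a} and~\ref{lem:Daux} flagged earlier in the paper are used). Given the obstructions are absent, a finite case analysis over the possible placements of external edges on each small cycle guarantees a valid swap; polynomial-time implementation then follows by a standard algorithm for computing a low-cost 2-ECSS (or ear-decomposition) in $D$, combined with a greedy assignment of swap choices to the small blocks and a final accounting of credits across the ears.
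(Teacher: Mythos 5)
Your high-level accounting (credits pay for added edges; a small 2ec-block compensates for its credit deficit by swapping out one of its unit-edges; a final slack of $2$ yields the $-2$) matches the paper in spirit, but the core of the proof is missing. You assume that a globally precomputed 2-ECSS / ear-decomposition of the block graph $D$ can be combined with a ``greedy assignment of swap choices,'' and that ``a finite case analysis over the possible placements of external edges on each small cycle guarantees a valid swap.'' That is exactly the step that fails. A swap at a small block $\sba$ requires the two attachment edges to land on the two nodes of a swappable pair of $\sba$ (otherwise deleting a unit-edge of $\sba$ destroys 2-edge connectivity; cf.\ Lemma~\ref{lem:verify2ec} --- note also that Lemma~\ref{lem:uncontract} lifts a 2-ECSS of $D$ only when the blocks are kept intact, so it does not by itself justify the swaps). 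But a swappable pair may be \emph{bad}: all of its neighbours outside $\sba$ may lie in one specific block $B$, in which case $\sba$ can only be merged into that particular $B$, not attached by an arbitrary ear to whatever super-block your incremental process has built; and $B$ may itself be small with the same restriction. Excluding a deadlock of two small blocks pointing only at each other is precisely the content of Lemma~\ref{lem:Daux} (and it needs the absence of \redgadget{}s), and exploiting it forces a specific order of operations: the paper first exhausts merges along \emph{good} swappable pairs (using the absence of \csplit{}s to find the cycle in $\wtg$), then builds the auxiliary digraph $D^{aux}$ and merges either a red node into a green node or three red nodes along a directed path, and only then merges the remaining, all-large blocks via cycles of $\wtg$. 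The order matters because a good pair can become bad after a merge (never the reverse), so no plan fixed in advance --- root, ear decomposition and swap choices --- is guaranteed to remain executable; your scheme has no mechanism ensuring that a small block whose pairs are all bad is attached exactly to its unique admissible target (or merged in the three-small-block pattern with the appropriate accounting).

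A second, smaller gap is the root accounting: the extra saving of $\geq\frac23$ from ``a second unit-edge swap executed when the final ear closes back onto $B_0$'' when the root is small is unsubstantiated (a small block has only two unit-edges, and at most one swap is available per merge, with the 4-cycle variant adding the diagonal when two unit-edges are removed). The paper avoids any root case analysis by maintaining the invariant that every merge creates a \emph{large} block carrying $\geq 2$ surplus credits, so the single block at termination has $\geq 2$ credits and $\cost(H')\leq\cost(H)+\gamma-2$ follows directly.
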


Our gluing step applies a number of iterations.
Each iteration picks two or more 2ec-blocks of $H$, and merges them
into a new large 2ec-block by adding some unit-edges and possibly
deleting some unit-edges
such that the following invariant holds for $H$ at the
start/end of every iteration of the gluing step.

\medskip
\noi
\fbox{ \begin{minipage}{0.9\textwidth}
\begin{itemize}
  \item[] \textbf{Invariants for the gluing step}:

\item[$\bullet$]
$H$ is a simple, bridgeless graph of minimum degree two
(hence, the 2ec-blocks of $H$ correspond to the connected components of $H$);

\item[$\bullet$]
(credit invariant)
each small 2ec-block of $H$ has $\frac43$ credits and
each large 2ec-block of $H$ has $\ge2$ credits.

\end{itemize}
\end{minipage}
}
\medskip

It is convenient to define the following multi-graph:
let $\wtg$ be the multi-graph obtained from $G$ by contracting each 2ec-block
$B_i$ of $H$ into a single node that we will denote by $B_i$
(thus, the notation $B_i$ refers to either a 2ec-block of $H$ or a node of $\wtg$).
Observe that $\wtg$ is 2EC.
We call a node of $\wtg$ small (respectively, large) if the corresponding
2ec-block of $H$ is small (respectively, large).
The gluing~step ``operates'' on $G$ and never refers to $\wtg$;
but, for our discussions and analysis, it is convenient to refer to $\wtg$.
(Note that $\wtg$ changes in each iteration, since the current graph
$H$ changes in each iteration.)

Suppose that $\wtg$ has $\ge2$ nodes and has no small nodes.
Then, we pick any (large) node $\wtv$ of $\wtg$.
Since $\wtg$ is 2EC, it has a cycle $\wtc$ incident to $\wtv$.
Let $|\wtc|$ denote the number of edges of $\wtc$;
note that $|\wtc|\ge2$.
Our iteration adds to $H$ the unit-edges corresponding to $\wtc$.
The credit available in $H$ for the 2ec-blocks incident to $\wtc$
is $\ge 2|\wtc|$ and the cost of the augmentation is $|\wtc|$;
hence, we have surplus credit of $2|\wtc|-|\wtc|\ge 2$.
The surplus credit is given to the new large 2ec-block.
Clearly, the credit invariant is preserved.

In general, small nodes may be present in $\wtg$.  If we apply the
above scheme and find a cycle $\wtc$ incident only to small nodes with
$|\wtc|\leq5$, then we fail to maintain the credit invariant (since
only $|\wtc|/3$ credits are available for the new large 2ec-block).
Consider a special case when $\wtg$ has a small node $\sba$ that has
a unique neighbour $B$ and $B$ is large; clearly, there are $\ge2$
parallel edges between $\sba$ and $B$.
Below, we show that $\sba$ and $B$ can be merged to form a new large
2ec-block using an augmentation of net cost one, rather
than two, by deleting one or more unit-edges of $\sba$
(see Lemma~\ref{lem:verify2ec});
then we have surplus credit $\geq2$ for the new large 2ec-block.
For example, if $\sba$ is a 3-cycle of $H$, then there exists a unit-edge
$uw$ of $\sba$ such that $G$ has edges $uv_1$ and $wv_2$ where
$v_1,v_2\in{B}$ (see Lemma~\ref{lem:smallblocks:b}); so the
augmentation adds the unit-edges $uv_1$ and $wv_2$ to $H$ and
discards $uw$ from $H$.
Figure~\ref{f:gluing-example} shows the working of the gluing step on an example.

{
\input{Figures/figure-gluing-step.tex}
}

We present key definitions and results on small 2ec-blocks in
Section~\ref{s:smallblocks}.  Our algorithm for the gluing step and
pseudo-code are presented in Section~\ref{s:gluingalg}.

\subsection{ \label{s:smallblocks} Analysis of small 2ec-blocks}
{
In this subsection, we focus on the small 2ec-blocks of $H$ and we
present the definitions and results that underlie our algorithm for
the gluing step.
See Figure~\ref{f:swaps-examples}(a),(b)
for illustrations of the following discussion.
Recall that $G$ has $\ge\Nmin$ nodes.

{
\begin{figure}[htb]
\subcaptionbox{
	\label{f:Swap-Ex1}
	(\textbf{a})
	The 2ec-block $B_1$ has a swappable edge $uv$, and
	the 2ec-block $B_2$ has a swappable pair $\{x, y\}$.
}[0.45\textwidth]{
	\begin{tikzpicture}[scale=0.45]
    	\begin{scope}[every node/.style={circle, fill=black, draw, 
    				  inner sep=0pt, minimum size = 0.15cm}]    
            
		\draw[very thick,dotted] (0,0) circle (2cm);

		\node[draw=none,fill=none] (lt) at (-1,1) {};
		\node[draw=none,fill=none] (lb) at (-1,-1) {};
        	\node[draw=none,fill=none] (rt) at (1,1) {};
        	\node[draw=none,fill=none] (rb) at (1,-1) {};
            
        	\node (u1) at (-8.3, 0) {}; 
        	\node[label={[label distance=1]90:$u$}] (u2) at (-5, 2) {}; 
        	\node[label={[label distance=1]270:$v$}] (u3) at (-5, -2) {}; 
        
        	\node[label={[label distance=1]90:$x$}] (x1) at (7, 2) {}; 
        	\node[] (x2) at (9, 0) {}; 
        	\node[label={[label distance=1]270:$y$}] (x3) at (7, -2) {}; 
        	\node[] (x4) at (5, 0) {}; 
        
	\end{scope}
          
	\begin{scope}[every edge/.style={draw=black}]
           			
		\path[ultra thick] (u1) edge[] node {} (u2);  
		\path[ultra thick] (u2) edge[] node {} (u3); 
		\path[ultra thick, dashed] (u3) edge[] node {} (u1);  
			
		\path[ultra thick, dashed] (x1) edge[] node {} (x2);  
		\path[ultra thick] (x2) edge[] node {} (x3); 
		\path[ultra thick, dashed] (x3) edge[] node {} (x4);  
		\path[ultra thick] (x4) edge[] node {} (x1); 
            
	\end{scope}
	
	\begin{scope}[every edge/.style={draw=blue}]
	
		\path[loosely dashdotted] (u2) edge[bend left = 15] node {} (lt);
		\path[loosely dashdotted] (u3) edge[bend right = 15] node {} (lb);
		\path[loosely dashdotted] (x1) edge[bend right = 15] node {} (rt); 
		\path[loosely dashdotted] (x3) edge[bend left = 15] node {} (rb); 
		
		\path[loosely dashdotted] (x2) edge[] node {} (x4); 
	
	\end{scope}
        
	\begin{scope}[every node/.style={draw=none,rectangle}]
		\node (B0label) at (0,0) {$B_0$};

		\node (B1label) at (-6.0,0) {$B_1$};
		\node (B2label) at  (8.5,-1.5) {$B_2$};

    \end{scope}
    \end{tikzpicture}
}
\hspace*{\fill}
\subcaptionbox{
    \label{f:Swap-Ex2}
	(\textbf{b})
	The 2ec-block $B_1$ has two swappable edges: $uv$ is good, and $vw$ is bad. 
	The swappable pair $\{x,y\}$ of the 2ec-block $B_2$ is good.
}[0.45\textwidth]{
	\begin{tikzpicture}[scale=0.4]
    	\begin{scope}[every node/.style={circle, fill=black, draw, 
    				  inner sep=0pt, minimum size = 0.15cm }]
        
		\node[label={[label distance=1]45:$u$}]  (u) at (-6, 4) {}; 
		\node[label={[label distance=1]45:$v$}] (v) at (-2, 3) {}; 
		\node[label={[label distance=1]225:$w$}]  (w) at (-5, 0) {}; 
            
		\node[] (w1) at (1, 5) {}; 
		\node[] (w2) at (3, 5) {}; 
		\node[] (w3) at (6, 2) {}; 
		\node[] (w4) at (6, 0) {}; 
		\node[] (w5) at (4, 0) {}; 
		\node[] (w6) at (1, 3) {}; 
		
		\node[] (z1) at (0, -2) {}; 
        	\node[label={[label distance=1]45:$x$}] (x) at (2, -4) {}; 
        	\node[] (z2) at (0, -6) {}; 
        	\node[label={[label distance=1]135:$y$}] (y) at (-2, -4) {}; 
        
	\end{scope}
	
	\begin{scope}[every edge/.style={draw=black}]
        		
		\path[ultra thick] (u) edge[] node {} (v); 
		\path[ultra thick] (v) edge[] node {} (w);
		\path[ultra thick, dashed] (w) edge[] node {} (u);  

		\path[ultra thick] (w1) edge[] node {} (w2);  
		\path[ultra thick, dashed] (w2) edge[] node {} (w3);
		\path[ultra thick] (w3) edge[] node {} (w4);
		\path[ultra thick] (w4) edge[] node {} (w5);
		\path[ultra thick, dashed] (w5) edge[] node {} (w6);
		\path[ultra thick] (w6) edge[] node {} (w1);
		
		\path[ultra thick, dashed] (z1) edge[] node {} (x); 
		\path[ultra thick] (x) edge[] node {} (z2);
		\path[ultra thick, dashed] (z2) edge[] node {} (y);  
        	\path[ultra thick] (y) edge[] node {} (z1);	
        	
	\end{scope}
	
	\begin{scope}[every edge/.style={draw=blue}]
	
		\path[loosely dashdotted] (v) edge[] node {} (w6); 
		\path[loosely dashdotted] (w) edge[] node {} (w5); 
		
		\path[loosely dashdotted] (y) edge[bend left = 90] node {} (u); 
		\path[loosely dashdotted] (x) edge[bend right = 60] node {} (w4); 
		
		\path[loosely dashdotted] (z1) edge[] node {} (z2); 

	\end{scope}
	
	\begin{scope}[every node/.style={draw=none,rectangle}]

		\node (B1label) at (-4.2,2.3) {$B_1$};
		\node (B2label) at (2,-5.5) {$B_2$};
		\node (B3label) at (3.35, 2.5) {$B_3$};

    \end{scope}
	
	\end{tikzpicture}
}
\caption{
	\label{f:swaps-examples}
	Illustrations of swappable edges and swappable pairs of small 2ec-blocks. 
	Solid lines indicate unit-edges of $H$, 
	dashed lines indicate zero-edges of $H$, 
	and (blue) dash-dotted lines indicate edges of $E(G) - E(H)$. 
}
\end{figure}
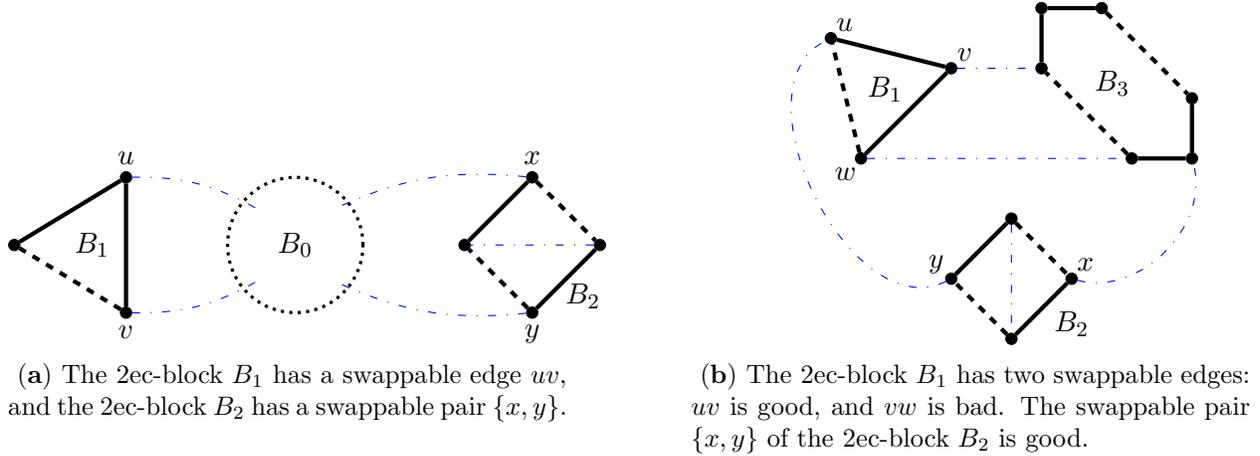
}

\begin{defn} \label{d:sw-edge}
Let $\sba$ be a small 2ec-block of $H$.
A unit-edge $uw$ of $\sba$ is called \textit{swappable}
if both $u$ and $w$ are attachments of $\sba$ in $G$ (that is,
$G$ has an edge $ux$ where $x\in V(G)-\sba$ and
$G$ has an edge $wy$ where $y\in V(G)-\sba$).
\end{defn}

\begin{defn} \label{d:sw-pair}
Let $\sba$ be a small 2ec-block of $H$.
A pair of nodes $\{u,w\}$ of $\sba$ is called a \textit{swappable~pair}
if either
(i)~$uw$ is a swappable~edge of $\sba$, or
(ii)~$u,w$ are not adjacent in $\sba$ (note that $\sba$ is
a 4-cycle in this case) and the other two nodes of $\sba$ are adjacent in $G$
(that is, $E(G)-E(H)$ has a ``diagonal edge'' between the other two nodes of $\sba$).
\end{defn}

\begin{defn} \label{d:sw-goodbad}
Let $\sba$ be a small 2ec-block of $H$.
A swappable~pair $\{u,w\}$ of $\sba$ is called \textit{good}
if there are distinct 2ec-blocks $B_u$ and $B_w$ ($\sba\not=B_u\not=B_w\not=\sba$)
such that
$G$ has an edge $ux$ where $x\in B_u$ and
$G$ has an edge $wy$ where $y\in B_w$;
otherwise, $\{u,w\}$ is called a \textit{bad} swappable~pair of $\sba$.
A good (respectively, bad) swappable~edge of $\sba$ is defined similarly.
\end{defn}

\authremark{~Observe that each iteration merges two or more 2ec-blocks
of $H$ (see the discussion following Proposition~\ref{propo:gluing}).
Consider a small 2ec-block $\sba$ of $H$ that stays unchanged over
several iterations. After one of these iterations, a swappable~pair
$\{u,w\}$ of $\sba$ may change from good to bad, but $\{u,w\}$ cannot
change from bad to good.}

{
\begin{lemma} \label{lem:smallblocks:a}
Let $\sba$ be a small 2ec-block of $H$.
If $\sba$ is adjacent (in $G$) to a unique 2ec-block $B$,
then $B$ is large.
(That is, if there is 2ec-block $B$ such that $\Gamma_G(V(\sba)) \subseteq V(B)$,
then $B$ is large.)
\end{lemma}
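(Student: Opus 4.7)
The plan is to argue by contradiction: suppose $B$ is also small, and derive a contradiction by exhibiting $G[V(B)]$ as an \csplit{} of $G$, which is forbidden in instances of \wsMAP{}. So I would take $C := G[V(B)]$ and verify in turn the clauses of Definition~\ref{d:csplit}.

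First I would dispatch the ``structural'' clauses. Since $B$ is small, $|V(B)|\in\{3,4\}$ and the cycle $B$ itself is a spanning cycle of $G[V(B)]$ of cost two, so $G[V(B)]$ is an induced 2NC subgraph containing the required spanning cycle. The cut $\delta_G(V(B))$ contains no zero-edges because $H$ contains every zero-edge of $G$ (a standing assumption of this section) and $B$, as a connected component of the bridgeless $H$, has all its $H$-edges inside $V(B)$.

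Second, I would verify that $G-V(B)$ has at least two connected components. The hypothesis $\Gamma_G(V(\sba))\subseteq V(B)$ makes $V(\sba)$ a component of $G-V(B)$; moreover $|V(G)|\ge\Nmin$ and $|V(\sba)|+|V(B)|\le 8$ guarantee that $V(G)-V(\sba)-V(B)$ is non-empty, contributing at least one further component.

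The third and most delicate step is the 2ec-$\hv$-block condition in $G/B$. The key sublemma is that every non-$V(\sba)$ component $C$ of $G-V(B)$ satisfies $|V(C)|\ge 3$: any $x \in V(G)-V(\sba)-V(B)$ lies in some 2ec-block $B_x$ of $H$ that is disjoint from $V(\sba)\cup V(B)$, and because $H$ is a 2-edge cover, $x$ has $\ge 2$ $H$-neighbours in $V(B_x)\subseteq V(G)-V(\sba)-V(B)$; hence $G-V(\sba)-V(B)$ has minimum degree $\ge 2$, forcing each of its components to have $\ge 3$ nodes. Therefore every non-$V(\sba)$ 2ec-$\hv$-block $D = \{\hv\}\cup V(C)$ of $G/B$ has $|V(D)|\ge 4$, and an edge count (a 2-ECSS needs $\ge |V(D)|$ edges, $\hv$ is in no zero-edge, and the zero-edges among the remaining nodes form a matching) gives $\opt(D)\ge 3$. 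The analogous count for $D_1:=\{\hv\}\cup V(\sba)$, which has $4$ or $5$ nodes and only the $1$ or $2$ zero-edges of $\sba$, also yields $\opt(D_1)\ge 3$. All clauses of Definition~\ref{d:csplit} then hold, contradicting the absence of \csplit{}s in $G$; the main obstacle is precisely this last step, in particular the minimum-degree-two argument, which simultaneously uses the 2-edge-cover structure of $H$ and the hypothesis $\Gamma_G(V(\sba))\subseteq V(B)$.
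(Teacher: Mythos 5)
Your proposal is correct and follows essentially the same route as the paper: assume $B$ is small and exhibit it as an \csplit{}, using $|V(G)|\ge\Nmin$, the fact that all zero-edges lie in $H$ (so $\delta_G(V(B))$ has only unit-edges), and the observation that every component of $G-V(B)$ other than $V(\sba)$ has $\ge3$ nodes, which forces $\opt\ge3$ for each 2ec-$\hv$-block of $G/B$. The only cosmetic difference is that you derive the $\ge3$-node bound from a minimum-degree-two count inside $G-V(\sba)-V(B)$, while the paper notes that each such component contains an entire 2ec-block of $H$; both rest on the same 2-edge-cover structure of $H$.
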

\begin{proof}
This follows from the absence of \csplit{}s in $G$.
In more detail, suppose that $B$ is small.
Then, $G-V(B)$ has $\ge2$ connected components, where one connected component
is $\sba$ and another connected component is in the nonempty subgraph $G-V(B)-V(\sba)$.
Then $B$ would satisfy the definition of an \csplit, see Definition~\ref{d:csplit}.
\big(To verify this, note that $|V(G)|\ge\Nmin$, and the cut
$\delta(V(B))$ consists of unit-edges since it is a subset of
$E(G)-E(H)$ (all zero-edges are in $H$); moreover, each connected~component
of $G-V(B)$ contains a 2ec-block (that has $\ge3$ nodes), hence,
$G/B$, with $\hv$ denoting the contracted node, has two (or more)
2ec-$\hv$-blocks $\hB_1, \hB_2$, such that for $i\in[2]$, either
$\hB_i$ has $\ge5$ nodes and so has $\opt(\hB_i)\ge3$, or $\hB_i$
has $4$~nodes and $\le1$ zero-edges and so has $\opt(\hB_i)\ge3$.\big)
Since $G$ has no \csplit{}s, we have a contradiction.
\end{proof}

\begin{lemma} \label{lem:smallblocks:b}
Let $\sba$ be a small 2ec-block of $H$.
Then $\sba$ has at least one swappable pair.
Moreover,
if $\sba$ is a 3-cycle, then
$\sba$ has at least one swappable edge.
\end{lemma}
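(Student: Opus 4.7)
The plan is to case-analyze on whether $\sba$ is a 3-cycle or a 4-cycle, and within each case on the set of attachments of $\sba$ in $G$, using the absence of \zsplit{}s and \redcycle{}s from the definition of \wsMAP{} to rule out the degenerate configurations.

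I would start by noting that since $|V(\sba)| \le 4 < \Nmin \le |V(G)|$, the set $V(\sba)$ is a proper subset of $V(G)$; since $G$ is 2NC, $\sba$ must have at least two attachments (a single attachment would be a cut~node of $G$). Recall that a small 2ec-block is either a 3-cycle with one zero-edge and two unit-edges, or a 4-cycle with alternating zero-edges and unit-edges.

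For the 3-cycle case, label $\sba$ with unit-edges $ab, ac$ and zero-edge $bc$. If the attachment set contains $\{a,b\}$ or $\{a,c\}$, then $ab$ or $ac$ is a swappable unit-edge, and hence a swappable pair; this handles every attachment set of size $\ge 2$ except $\{b,c\}$. If the attachments are exactly $\{b,c\}$, then $a$ has no neighbour outside $\sba$, so $G-\{b,c\}$ separates $a$ from the remaining $\ge \Nmin - 3$ nodes, making the zero-edge $bc$ a \zsplit, a contradiction.

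For the 4-cycle case, label $\sba$ as the cycle $a,b,c,d,a$ with unit-edges $ab, cd$ and zero-edges $bc, da$. If the attachment set contains both endpoints of $ab$ or of $cd$, that unit-edge is swappable. Otherwise the attachment set has size exactly two, and must consist either of a zero-edge pair ($\{a,d\}$ or $\{b,c\}$) or of a diagonal pair ($\{a,c\}$ or $\{b,d\}$). In the zero-edge case, the two non-attachment nodes have all their $G$-neighbours inside $\sba$, so removing the attachment pair disconnects them from the $\ge \Nmin - 4$ other nodes of $G$, making the corresponding zero-edge a \zsplit. In the diagonal case, say attachments $\{a,c\}$: if $bd \in E(G)$, then $\{a,c\}$ satisfies clause~(ii) of Definition~\ref{d:sw-pair} and is a swappable pair; otherwise, $b$ and $d$ each have all their $G$-neighbours in $\{a,c\}$, hence each has degree exactly two in $G$, and they are nonadjacent in $G$, while $G[\{a,b,c,d\}]$ contains the 4-cycle $a,b,c,d,a$ of cost two, so $G[\{a,b,c,d\}]$ is an \redcycle, a contradiction.

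The main obstacle will be confirming the \redcycle\ invocation in the final sub-case: one must check that even if $G$ happens to contain the other diagonal edge $ac$, the induced subgraph $G[\{a,b,c,d\}]$ still contains the 4-cycle $a,b,c,d,a$ of cost two and still contains the nonadjacent degree-two pair $\{b,d\}$, so the \redcycle\ definition remains satisfied and the contradiction goes through regardless.
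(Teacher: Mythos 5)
Your proposal is correct and follows essentially the same route as the paper's proof: use 2-node-connectivity to get at least two attachments, rule out the case where the attachments are exactly the endpoints of a zero-edge via the absence of \zsplit{}s, and handle the diagonal-attachment case of the 4-cycle via clause~(ii) of the swappable-pair definition together with the absence of \redcycle{}s. Your extra check that a possible chord between the two attachments does not spoil the \redcycle\ definition is a sound (and slightly more careful) elaboration of the same argument, not a different approach.
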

\begin{proof}
$\sba$ has $\ge2$ attachments, since $G$ is 2NC.
If $\sba$ has $\ge3$ attachments, then $\sba$ has a unit-edge $f$ that
contains two distinct attachments, and $f$ is a swappable edge of
$\sba$. Now, suppose that $\sba$ has exactly 2~attachments $u,w$.
If $f=uw$ is a zero-edge of $\sba$, then $f$ would form a \zsplit\
of $G$, whereas an instance of \wsMAP{} has no \zsplit{}s.
Thus, either $\sba$ has a unit-edge $f$ between the two attachments $u,w$
(clearly, $f$ is a swappable edge of $\sba$),
or $u$ and $w$ are not adjacent in $\sba$ (then, $\sba$ is a 4-cycle).
Consider the latter case.
Let $v_1$ and $v_2$ be the other two nodes of $\sba$.
If $v_1$ and $v_2$ are not adjacent in $G$,
then $\deg_G(v_1)=\deg_G(v_2)=2$.
Then $\sba$ would form an \redcycle\ of $G$, whereas an instance of
\wsMAP{} has no \redcycle{}s.
Otherwise, if $v_1v_2\in E(G)$, then $\{u,w\}$ is a swappable pair of $\sba$.
\end{proof}

\begin{lemma} \label{lem:smallblocks:c}
Let $\sba$ be a small 2ec-block of $H$.
If $\sba$ is a 3-cycle, and $\sba$ is adjacent (in $G$) to at least two
other 2ec-blocks, then it has a good swappable edge.
\end{lemma}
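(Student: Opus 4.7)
The plan is a short case analysis based on which nodes of $\sba$ have neighbours outside $V(\sba)$. Label the nodes of the $3$-cycle $\sba$ as $u_1,u_2,u_3$, so that $u_1u_2$ is the unique zero-edge and $u_1u_3,u_2u_3$ are the two unit-edges. (Since $M$ is a matching, a $3$-cycle contains at most one zero-edge, and a small $3$-cycle of $H$ must contain exactly one.) For each $i\in\{1,2,3\}$, let $T_i$ denote the collection of 2ec-blocks of $H$, distinct from $\sba$, that contain some neighbour of $u_i$ in $G$. The hypothesis of the lemma is then $|T_1\cup T_2\cup T_3|\ge 2$.

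The key observation I would record first is that a swappable unit-edge $u_iu_j$ of $\sba$ is good exactly when $|T_i\cup T_j|\ge 2$: this is immediate from Definition~\ref{d:sw-goodbad}, since if $T_i\cup T_j\subseteq\{B\}$ for a single block $B$ then every choice of the required attachments of $u_i,u_j$ lies in $B$, while if two distinct blocks appear in $T_i\cup T_j$ we can place one attachment in each. With this in hand, invoke Lemma~\ref{lem:smallblocks:b} to obtain a swappable unit-edge of $\sba$; since the only unit-edges are $u_1u_3$ and $u_2u_3$, this edge touches $u_3$, so $T_3\ne\emptyset$. Assume without loss of generality that $u_1u_3$ is swappable, so $T_1$ and $T_3$ are both nonempty. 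If $u_2$ is not an attachment, then $T_2=\emptyset$ and so $T_1\cup T_3=T_1\cup T_2\cup T_3$ already has size $\ge 2$, making $u_1u_3$ good. Otherwise $T_1,T_2,T_3$ are all nonempty and $u_2u_3$ is also swappable; if both swappable edges were bad, the observation would give $|T_1\cup T_3|=|T_2\cup T_3|=1$, and since $T_3$ is a common nonempty subset this would force $T_1=T_2=T_3=\{B\}$ for a single block $B$, contradicting $|T_1\cup T_2\cup T_3|\ge 2$.

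There is no real obstacle: all the structural content has already been absorbed into Lemma~\ref{lem:smallblocks:b}, whose proof is where the absence of \zsplit{}s and \redcycle{}s in an instance of \wsMAP{} is exploited. No additional appeal to the absence of the other obstructions (\csplit{}s, \redgadget{}s, \usplit{}s) should be needed here, and the only mildly delicate point is the characterization of good swappable edges in terms of $|T_i\cup T_j|$, which is routine bookkeeping.
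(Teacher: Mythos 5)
Your proof is correct and takes essentially the same route as the paper's: both arguments hinge on the node of $\sba$ incident to the two unit-edges being an attachment (a consequence of the absence of \zsplit{}s, which you import via Lemma~\ref{lem:smallblocks:b}), and then on distributing two distinct adjacent 2ec-blocks between that node and another attachment of $\sba$. Your $T_i$ bookkeeping merely makes explicit the block-selection step that the paper's proof states tersely.
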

\begin{proof}
$\sba$ has $\ge2$ attachments, since $G$ is 2NC.
Let $u$ be the node of $\sba$ that is incident to two unit-edges of
$\sba$ and let $vw$ be the zero-edge of $\sba$.  Then $u$ must be an
attachment of $\sba$ (otherwise, $vw$ would form a \zsplit\ of $G$).
Since $\sba$ is adjacent (in $G$) to at least two other 2ec-blocks
(and $\sba$ has $\ge2$ attachments),
there exists another attachment of $\sba$, say $w$, and there exist
distinct 2ec-blocks, say $B_u$ and $B_w$, such that
$u$ is adjacent to $B_u$ and
$w$ is adjacent to $B_w$, where $\sba\not=B_u\not=B_w\not=\sba$.
Then $uw$ is a good swappable edge of $\sba$.
\end{proof}
}

Suppose that the current graph $H$ has no good swappable~pairs,
that is, for every small 2ec-block $\sba$ of $H$, every swappable pair
of $\sba$ is bad.
To ``merge away" the remaining small 2ec-blocks of $H$, we construct
the following auxiliary digraph $D^{aux}$: there is a node for each
2ec-block of $H$, and we call the nodes corresponding to the small
2ec-blocks the \textit{red} nodes, and the other nodes the
\textit{green} nodes; for each small 2ec-block $\sba$ of $H$ and each
of its swappable~pairs $\{u,w\}$, $D^{aux}$ has an arc $(\sba,B)$ where
$B$ corresponds to the unique 2ec-block $B$ of $H$ such that
$\Gamma_G(\{u,w\})\subseteq V(B)\cup{V(\sba)}$. Observe that each red
node of $D^{aux}$ has at least one outgoing arc. See Figure~\ref{f:Daux-Ex}.

{
\begin{figure}[htb]
	\centering
	\begin{subfigure}{0.49\textwidth}
	\centering
    \begin{tikzpicture}[scale=0.40]
    	\begin{scope}[every node/.style={circle, fill=black, draw, inner sep=0pt,
            minimum size = 0.15cm
            }]
            
            \node[] (u1) at (-7, 1) {}; 
            \node[] (u2) at (-5, 1) {};
            \node[] (u3) at (-5, -1) {}; 
            \node[] (u4) at (-7, -1) {}; 
            
            \node[] (v1) at (-3, 1) {}; 
            \node[] (v2) at (-1, 1) {};
            \node[] (v3) at (-1, -1) {}; 
            \node[] (v4) at (-3, -1) {}; 
            
            \node[] (x1) at (1, 1) {}; 
            \node[] (x2) at (3, 1) {};
            \node[] (x3) at (3, -1) {}; 
            \node[] (x4) at (1, -1) {}; 
            
            \node[] (w1) at (5, 1){}; 
            \node[] (w2) at (6, 2){}; 
            \node[] (w3) at (7, 1){};
            \node[] (w4) at (7, -1){}; 
            \node[] (w5) at (6, -2){}; 
            \node[] (w6) at (5, -1){}; 
        
        \end{scope}
        \begin{scope}[every edge/.style={draw=black}]
        		
        		\path[ultra thick] (u1) edge[] node {} (u2);
        		\path[ultra thick, dashed] (u2) edge[] node {} (u3);
        		\path[ultra thick] (u3) edge[] node {} (u4); 
        		\path[ultra thick, dashed] (u4) edge[] node {} (u1); 
        		
        		\path[ultra thick, dashed] (v1) edge[] node {} (v2);
        		\path[ultra thick] (v2) edge[] node {} (v3);
        		\path[ultra thick, dashed] (v3) edge[] node {} (v4); 
        		\path[ultra thick] (v4) edge[] node {} (v1); 
        		
			\path[ultra thick, dashed] (x1) edge[] node {} (x2);
        		\path[ultra thick] (x2) edge[] node {} (x3);
        		\path[ultra thick, dashed] (x3) edge[] node {} (x4); 
        		\path[ultra thick] (x4) edge[] node {} (x1);          		
        		
        		\path[ultra thick] (w1) edge[] node {} (w2);  
        		\path[ultra thick, dashed] (w2) edge[] node {} (w3);
        		\path[ultra thick] (w3) edge[] node {} (w4);
        `	\path[ultra thick] (w4) edge[] node {} (w5);
        		\path[ultra thick, dashed] (w5) edge[] node {} (w6);
        		\path[ultra thick] (w6) edge[] node {} (w1);
        		
        	\end{scope}
        	\begin{scope}[every edge/.style={draw=blue}]
        		
        		\path[loosely dashdotted] (u4) edge[bend right = 30] 
        			node {} (x3); 
        		\path[loosely dashdotted] (u3) edge[bend right = 30] 
        			node {} (x3); 
        		\path[loosely dashdotted] (u2) edge[bend left = 20] 
        			node{} (w2); 
        		\path[loosely dashdotted] (x3) edge[bend right = 30] 
        			node{} (w5); 
        			
        		\path[loosely dashdotted] (x1) edge[] node {} (v2); 
        		\path[loosely dashdotted] (x4) edge[] node {} (v2); 
        		\path[loosely dashdotted] (v1) edge[] node {} (u2); 
        		\path[loosely dashdotted] (v4) edge[] node {} (u2); 
            
        \end{scope}
        	\begin{scope}[every node/.style={draw=none,rectangle}]

			\node (B1label) at (2, 0) {$B_1$};
			\node (B2label) at (-2, 0) {$B_2$};
			\node (B3label) at (-6, 0) {$B_3$};
			\node (B0label) at (6, 0) {$B_0$};

    		\end{scope}
        
    \end{tikzpicture}
    \label{Daux-Exa}
    \end{subfigure}
    \hspace*{\fill}
    \begin{subfigure}{0.49\textwidth}
    \centering
       \begin{tikzpicture}[scale=0.52]
       	\begin{scope}[every node/.style={circle, fill=black, draw, inner sep=0pt,
            minimum size = 0.25cm
            }]
            
            \node[label={[label distance=1]90:$B_3$}] (B3) at (-3, 0) {}; 
            \node[label={[label distance=1]90:$B_2$}] (B2) at (-1, 0) {}; 
            \node[label={[label distance=1]90:$B_1$}] (B1) at (1, 0) {}; 
            \node[label={[label distance=1]90:$B_0$}] (B0) at (3, 0) {}; 
         \end{scope}
         
         \draw [->] (B1) edge (B2) (B2) edge (B3) (B3) edge[bend right = 60] (B1);
    		\end{tikzpicture}
    \label{Daux-Exb}
    \end{subfigure}
    \caption{
	An illustration of the auxiliary digraph $D^{aux}$ (right subfigure). 
	Solid lines indicate unit-edges of $H$, 
	dashed lines indicate zero-edges of $H$ 
	and (blue) dash-dotted lines indicate edges of $E(G) - E(H)$. 
	}
    \label{f:Daux-Ex}
\end{figure}
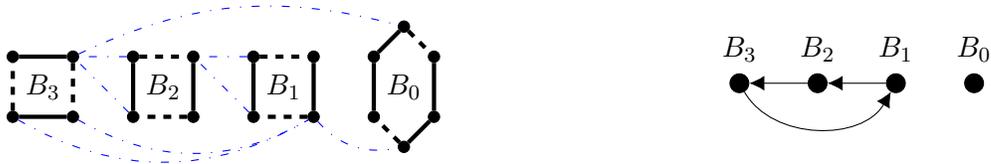
}

{
\begin{lemma} \label{lem:Daux}
Suppose that there exist no good swappable pairs.
Then, $D^{aux}$ does not have a pair of red nodes $\sba_1, \sba_2$
such that $(\sba_1,\sba_2)$ is the unique outgoing arc of $\sba_1$ and
          $(\sba_2,\sba_1)$ is the unique outgoing arc of $\sba_2$
(that is,
if $D^{aux}$ has a directed 2-cycle $C$ on the red nodes,
then one of the red nodes incident to $C$ has $\ge2$ outgoing arcs).
\end{lemma}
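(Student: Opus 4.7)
The plan is to argue by contradiction: assume such a pair of red nodes $\sba_1,\sba_2$ exists in $D^{aux}$, and show that $V(\sba_1)\cup V(\sba_2)$ must induce a \redgadget{} in $G$, contradicting $G\in\wsMAP{}$. First, Lemma~\ref{lem:smallblocks:a} applied to $\sba_1$ forces it to be adjacent to some 2ec-block $B'\neq\sba_2$ (since the small block $\sba_2$ cannot be its unique 2ec-block neighbour). Lemma~\ref{lem:smallblocks:c} then rules out $\sba_1$ being a 3-cycle (else $\sba_1$, being adjacent to at least two 2ec-blocks, would have a good swappable edge). Hence $\sba_1$ (and symmetrically $\sba_2$) is a 4-cycle with alternating zero- and unit-edges; I label $\sba_1=a_1b_1c_1d_1$ with unit-edges $a_1b_1,c_1d_1$ and zero-edges $b_1c_1,d_1a_1$, and set $C:=V(\sba_1)\cup V(\sba_2)$.

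The core is a careful case analysis of which nodes of $\sba_1$ have a neighbour in $V(G)-V(C)$; call these the \emph{external attachments} of $\sba_1$. Any swappable pair of $\sba_1$ containing an external attachment must yield a contradiction, since its outside-neighbourhood either spans two distinct 2ec-blocks (a good swappable pair, ruled out) or targets a 2ec-block other than $\sba_2$ (contradicting uniqueness of the arc $(\sba_1,\sba_2)$). Enumerating the three possible adjacency patterns for two external attachments in the 4-cycle $\sba_1$ (unit-edge adjacent, zero-edge adjacent, or diagonally opposite), and in each pattern checking all four candidate swappable pairs of $\sba_1$, I conclude that either a forbidden swappable pair arises or no swappable pair of $\sba_1$ can produce an arc to $\sba_2$ at all; hence $\sba_1$ has exactly one external attachment, which after relabelling I call $a_1$. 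The same style of reasoning forces $b_1$ (the unit-edge neighbour of $a_1$) to have no neighbour outside $\sba_1$ at all, and forces $b_1d_1\notin E(G)$ (else $\{a_1,c_1\}$ is a case~(ii) swappable pair whose outside-neighbourhood spans $B'$ plus something else).

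The main obstacle, and the place where the \zsplit{} hypothesis becomes essential, is to show that both $c_1$ and $d_1$ are adjacent in $G$ to $\sba_2$, matching the final clause of Definition~\ref{d:redgadget}. For $c_1$: with $\deg_G(b_1)=2$ in hand (its only neighbours are $a_1,c_1$, since $b_1$ is not an attachment and $b_1d_1\notin E(G)$), if $c_1$ had no neighbour outside $\sba_1$ then $\{b_1,c_1\}$ would be an isolated component of $G-\{d_1,a_1\}$, making the zero-edge $d_1a_1$ a \zsplit{}, contradicting $G\in\wsMAP{}$; since $a_1$ is the unique external attachment of $\sba_1$, the outside-$\sba_1$ neighbour of $c_1$ must lie in $\sba_2$. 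For $d_1$: since the arc $(\sba_1,\sba_2)$ is assumed to exist, some swappable pair of $\sba_1$ must produce it, and a re-check of the four candidate pairs under the constraints derived above shows that every pair producing this arc forces $d_1$ to be an attachment of $\sba_1$, whose outside-$\sba_1$ neighbour again lies in $\sba_2$. The symmetric analysis applied to $\sba_2$ yields $c_2,d_2$ both adjacent to $\sba_1$. Combined with $|V(C)|=8<\Nmin\le|V(G)|$, this matches every clause of Definition~\ref{d:redgadget}: two disjoint 4-cycles of cost~$2$, exactly two attachments $\{a_1,a_2\}$ of $V(C)$, and both endpoints of the unique unit-edge $c_id_i$ of $\sba_i-a_i$ adjacent to $\sba_{3-i}$. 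Hence $V(C)$ induces an \redgadget{} in $G$, the desired contradiction.
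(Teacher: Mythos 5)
Your proposal is correct and lands on the same contradiction as the paper: under the stated hypotheses, $V(\sba_1)\cup V(\sba_2)$ induces an \redgadget, contradicting $G$ being an instance of \wsMAP{}. The internal route differs in a useful way. The paper argues entirely through the good/bad swappable-pair dichotomy (its Claims~\ref{cl:a} and~\ref{cl:b}): it first shows the arc-generating pair is disjoint from the attachments of $\sba_1\cup\sba_2$, then rules out the diagonal (case~(ii) of Definition~\ref{d:sw-pair}) pair and the 3-cycle, and gets ``both ends of the unit-edge attach to $\sba_2$'' for free because the arc-generating pair is then a swappable \emph{edge}. You instead invoke Lemmas~\ref{lem:smallblocks:a} and~\ref{lem:smallblocks:c} up front to force the 4-cycle structure, and then use the no-\zsplit{} hypothesis directly: if $c_1$ had no neighbour outside $\sba_1$, deleting the ends of the zero-edge $d_1a_1$ would isolate $\{b_1,c_1\}$. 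This \zsplit{} argument is a genuine addition; it cleanly disposes of the configuration in which the arc to $\sba_2$ is produced by a diagonal pair whose $\sba_2$-neighbours all come from one node (a configuration the paper's case analysis treats rather tersely), so your argument is arguably more robust at that point.

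One step is loosely justified, though harmlessly so: your reason for $b_1d_1\notin E(G)$ does not cover the sub-case where $c_1$ has no neighbour outside $\sba_1$ and $a_1$ is adjacent to both $B'$ and $\sba_2$; there the pair $\{a_1,c_1\}$ is neither good (its partner $c_1$ attaches to no other block, cf.\ Definition~\ref{d:sw-goodbad}) nor does it determine a unique target block, so no forbidden arc arises. Fortunately the claim is dispensable: your \zsplit{} argument needs only that $b_1$ and $c_1$ have no neighbours outside $\sba_1$ (deleting $\{a_1,d_1\}$ isolates $\{b_1,c_1\}$ whether or not the chord $b_1d_1$ is present), and Definition~\ref{d:redgadget} tolerates chords of the 4-cycles, so the final matching with the \redgadget{} definition is unaffected. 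With that step either repaired or simply deleted, and with the asserted case-enumerations written out (they do check out), the proof is complete.
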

\begin{proof}
Suppose that $\sba_1$ and $\sba_2$ are red nodes of $D^{aux}$, and
$\sba_1,\sba_2,\sba_1$ is a directed 2-cycle of $D^{aux}$.
For the sake of contradiction, assume that $D^{aux}$ has exactly
one arc outgoing from each of $\sba_1$ and $\sba_2$.
Clearly, both $\sba_1$ and $\sba_2$ are small 2ec-blocks of the current~graph $H$.
Moreover, $\sba_1$ has a bad swappable pair $\{u_1,w_1\}$ and all
neighbours of $\{u_1,w_1\}$ (in $G$) are in $\sba_1 \cup \sba_2$.

The proof is completed via a few claims and their proofs.
Let $\att$ denote the set of attachments of $(\sba_1 \cup \sba_2)$. 

\begin{claim} \label{cl:a}
$|\att|\ge2$,
$\att\cap{V(\sba_1)}$ is nonempty, and
$\att$ is disjoint from $\{u_1,w_1\}$.
\end{claim}

Since $G$ is 2NC and $G-(\sba_1\cup \sba_2)$ is non-empty, $\att$ has $\geq2$
nodes. Consider $\att\cap{V(\sba_1)}$, the set of attachments of
$(\sba_1\cup{\sba_2})$ that are in $\sba_1$.  By Lemma~\ref{lem:smallblocks:a},
$\att\cap{V(\sba_1)}$ is non-empty (otherwise, all neighbours of $\sba_1$
(in $G$) would be in $\sba_2$, so $\sba_2$ would form an \csplit\ of $G$,
and this can be verified using the arguments in the proof of
Lemma~\ref{lem:smallblocks:a}). 
Observe that $\att$ is disjoint from $\{u_1,w_1\}$;
otherwise, if say $u_1\in \att$, then $G$ has an edge between $u_1$
and a node of $G-(\sba_1\cup \sba_2)$ as well as an edge between $w_1$
and $\sba_2$, hence, $\{u_1,w_1\}$ would be a good swappable pair of
$\sba_1$, and this would contradict the hypothesis of the lemma.
Thus, Claim~\ref{cl:a} is proved.

\begin{claim} \label{cl:b}
$\sba_1$ is a 4-cycle, a unit-edge $e_1$ of $\sba_1$ is a bad swappable
edge such that all neighbours (in $G$) of the two end~nodes of $e_1$
are in $\sba_1\cup{\sba_2}$; moreover, $\att$ contains exactly one node
of $\sba_1$ and that node is not incident to $e_1$.
\end{claim}

The proof of Claim~\ref{cl:b} examines a few cases.
There are two cases for the swappable pair $\{u_1,w_1\}$:
\begin{itemize}
\item[(i)]
$u_1,w_1$ are not adjacent in $\sba_1$, so $\sba_1$ is a 4-cycle,
and the other two nodes of $\sba_1$ are adjacent in $G$ (i.e., $E(G)-E(H)$
has a ``diagonal edge'' between the other two nodes of $\sba_1$), or
\item[(ii)]
$\sba_1$ has a unit-edge between $u_1$ and $w_1$.
\end{itemize}
Consider case~(i).
Let $v_1$ be a node of $(V(\sba_1)-\{u_1,w_1\})\cap\att$.
Then, the unit-edge $f_1$ of $\sba_1$ incident to $v_1$ is a
good swappable edge (because $v_1$ has a neighbour in $G-(\sba_1\cup
\sba_2)$ and the other end~node of $f_1$ has a neighbour in $\sba_2$).
This contradicts the hypothesis of the lemma.
Hence, case~(i) cannot occur.
Now, consider case~(ii).
Clearly, $e_1=u_1w_1$ is a bad swappable edge of $\sba_1$,
and (by Definition~\ref{d:sw-goodbad})
all neighbours (in $G$) of $u_1$ and $w_1$ are in $\sba_1\cup{\sba_2}$.
One possibility is that $\att$ contains exactly one node $v_1$ of
$V(\sba_1)-\{u_1,w_1\}$.
Then $\sba_1$ cannot be a 3-cycle (otherwise, the unit-edge of $\sba_1$
incident to $v_1$ would be a good swappable edge, and this would
contradict the hypothesis of the lemma).
Thus, $\sba_1$ is a 4-cycle such that $\att$ contains exactly one node
of $\sba_1$ and that node is not incident to the unit-edge $e_1=u_1w_1$.
One more case has to be examined to complete the proof of the claim.
Suppose that $\sba_1$ is a 4-cycle, and both nodes of
$V(\sba_1)-\{u_1,w_1\}$ are in $\att$. Then the unit-edge $f_1$ of
$\sba_1$ between those two nodes would be a swappable edge of $\sba_1$.
Clearly, $f_1$ cannot be a good swappable edge, since that would contradict
the hypothesis of the lemma. Hence, $f_1$ would be a bad swappable
edge, and there would exist another 2ec-block $B$ of $H$
($\sba_1\not=B\not=\sba_2$) such that $B\cup{\sba_1}$ contains all neighbours
(in $G$) of the end~nodes of $f_1$. Then, $D^{aux}$ would have the
arc $(\sba_1,B)$, and this contradicts the assumption that $(\sba_1,\sba_2)$
is the unique outgoing arc of $\sba_1$.

Similar properties hold for $\sba_2$ as well; that is,
$\sba_2$ is a 4-cycle, a unit-edge $e_2$ of $\sba_2$ is a bad swappable
edge such that all neighbours (in $G$) of the two end~nodes of
$e_2$ are in $\sba_1\cup{\sba_2}$; moreover, $\att$ contains exactly one node
of $\sba_2$ and that node is not incident to $e_2$.
Clearly, the subgraph of $G$ induced by $\sba_1\cup{\sba_2}$ forms an
\redgadget. This is a contradiction, since $G$ is an instance of
\wsMAP{} so $G$ contains no \redgadget.
\end{proof}
}

By the above lemma, $D^{aux}$ either has an arc $(\sba,B)$ from a red
node $\sba$ to a green node $B$, or it has a directed path $\sba_1,\sba_2,\sba_3$
on three red nodes.
In both cases, we can apply a merge step to obtain a new large 2ec-block
(i.e., a green node) while preserving the credit invariant.  
More details are presented in the next subsection.

}

\subsection{ \label{s:gluingalg} Algorithm for the gluing step}
{
In this subsection, we explain the working of the algorithm for the
gluing step, based on the results in the previous subsection, and
then we present pseudo-code for this algorithm.

Consider any small 2ec-block $\sba$ that has a good swappable~pair
$\{u,w\}$ such that $u$ is adjacent (in $G$) to another 2ec-block
$B_u$, and $w$ is adjacent (in $G$) to another 2ec-block $B_w$, and
$\sba\not=B_u\not=B_w\not=\sba$.  Observe that $G-V(\sba)$ is connected,
otherwise, $\sba$ would be an \csplit\ of $G$ (the arguments in the
proof of Lemma~\ref{lem:smallblocks:a} can be used to verify this
statement).  Hence, $\wtg-\sba$ has a path between $B_u$ and $B_w$;
adding the edges $\sba{}B_u$ and $\sba{}B_w$ to this path gives a cycle $\wtc$
of $\wtg$.
We merge the 2ec-blocks incident to $\wtc$ into a new large 2ec-block
by adding the unit-edges corresponding $\wtc$ to $H$.  Moreover,
if $uw\in{E(\sba)}$, then we discard $uw$ from $H$, otherwise, $\sba$ is
a 4-cycle (with two zero-edges) and $E(G)-E(H)$ has a unit-edge $f$
between the two nodes of $\sba{}-\{u,w\}$, and in this case, we add the
edge $f$ to $H$ and we discard the two unit-edges of $\sba$ from $H$.
The credit available in $H$ for $\wtc$ is $\ge \frac43 |\wtc|$ and
the net cost of the augmentation is $|\wtc|-1$; hence, we have
surplus credit of $\frac13 |\wtc| + 1\ge 2$ (since $|\wtc|\ge3$).
The surplus credit is given to the new large 2ec-block.

The gluing step applies the above iteration until there are no good
swappable~pairs in the current graph $H$.  Then the auxiliary digraph
$D^{aux}$ is constructed.  By Lemma~\ref{lem:Daux}, $D^{aux}$ has
either (i)~an arc $(\sba,B)$ from a red node $\sba$ to a green node $B$,
or (ii)~a directed path $\sba_1,\sba_2,\sba_3$ on three red nodes.

In the first case, $\sba$ is a small 2ec-block, $B$ is a large 2ec-block,
and $\sba$ has a swappable~pair $\{u,w\}$ such that $\sba\cup{B}$ contains
all neighbours (in $G$) of $\{u,w\}$. We merge $\sba$ and
$B$ into a new large 2ec-block as follows.  We add two unit-edges
between $\sba$ and $B$ to $H$ (one edge is incident to $u$ and the
other edge is incident to $w$).  Moreover, if $uw\in{E(\sba)}$, then
we discard $uw$ from $H$, otherwise, $\sba$ is a 4-cycle (with two
zero-edges) and $E(G)-E(H)$ has a unit-edge $f$ between the two
nodes of $\sba{}-\{u,w\}$, and in this case, we add the edge $f$ to $H$
and we discard the two unit-edges of $\sba$ from $H$.
The credit available in $H$ for $\sba\cup{B}$ is $\ge \frac43+2$ and
the net cost of the augmentation is one; hence, we have surplus
credit of $\frac13+2\ge 2$. The surplus credit is given to the new
large 2ec-block.
Consider the second case.
Then $\sba_1,\sba_2,\sba_3$ are small 2ec-blocks such that $\sba_1$ has a
swappable~pair $u_1w_1$ such that $\Gamma_G(\{u_1,w_1\})\subseteq
V(\sba_1)\cup{V(\sba_2)}$, and $\sba_2$ has a swappable~pair $u_2w_2$ such
that $\Gamma_G(\{u_2,w_2\})\subseteq V(\sba_2)\cup{V(\sba_3)}$.  We add
two unit-edges between $\sba_1$ and $\sba_2$ to $H$ (one edge is incident
to $u_1$ and the other edge is incident to $w_1$), and then we
either discard one unit-edge from $H$ (if $u_1w_1\in{E(\sba_1)}$) or
we add another edge to $H$ and discard two unit-edges of $\sba_1$ from
$H$ (if $u_1w_1\not\in{E(\sba_1)}$).
We apply a similar augmentation to $\sba_2$ and $\sba_3$ using the
swappable~pair $\{u_2,w_2\}$.  The credit available in $H$ for
$\sba_1\cup{\sba_2}\cup{\sba_3}$ is $\ge 3(\frac43)=4$ and the net cost of
the augmentation is two; hence, we have surplus credit of $\ge 4-2$.
The surplus credit is given to the new large 2ec-block.

By repeatedly applying the above iteration (that merges red nodes
of $D^{aux}$ into green nodes), we obtain a current graph $H$ that
has no small 2ec-blocks. As discussed above, the merge step is
straightforward when all 2ec-blocks of $H$ are large.

\begin{lemma} \label{lem:verify2ec}
After every merge step, the subgraph $B^{new}$ constructed by that
step (that is a so-called large 2ec-block) is 2EC.
\end{lemma}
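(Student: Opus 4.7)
The plan is to exhibit $B^{new}$ as 2-edge-connected in each of the four variants of the merge step by combining the 2EC structure of the constituent 2ec-blocks with repeated applications of Proposition~\ref{propo:2ecdiscard}.

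First I will handle the base variant in which the cycle $\wtc$ in $\wtg$ passes only through blocks that remain unmodified (all large, or with no edges of $H$ swapped out). Here $B^{new}$ is the union of the 2EC subgraphs $B_0,\dots,B_{k-1}$ (the 2ec-blocks of $H$ touched by $\wtc$) together with the $k$ unit cross-edges of $\wtc$. Since these cross-edges form a cycle of length $k$ in the contracted multigraph, for any two nodes one can construct two edge-disjoint paths by traversing $\wtc$ in the two directions and choosing internal paths inside each visited block (which exist because each block is 2EC); hence $B^{new}$ is 2EC.

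Next, for the variant that merges a small 2ec-block $\sba$ via a good swappable pair $\{u,w\}$, I will start from the pre-modification graph (the base-variant graph above, which is 2EC) and remove or replace the relevant edges of $\sba$. If $uw \in E(\sba)$, then $B^{new}$ is obtained by deleting the single edge $uw$; the two edge-disjoint $u,w$-paths required by Proposition~\ref{propo:2ecdiscard} are (i)~the Hamiltonian $u,w$-path of $\sba-uw$, which exists because $\sba$ is a $3$- or $4$-cycle, and (ii)~the $u,w$-path obtained by following $\wtc$ through the blocks $B_u,\dots,B_w$. These are edge-disjoint because path~(i) uses only $E(\sba)$ while path~(ii) avoids $E(\sba)$. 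If $uw \notin E(\sba)$ (so $\sba$ is a 4-cycle with non-adjacent $u,w$, and the modification adds the diagonal $f=xy$ and deletes the two unit-edges $ux,wy$), I will perform the modification in three stages (add $f$; delete $ux$; delete $wy$) and invoke Proposition~\ref{propo:2ecdiscard} at each stage. The key observation is that after all modifications the remaining edges of $\sba$ still form a Hamiltonian $u,w$-path $u\!-\!y\!-\!x\!-\!w$ (consisting of the two zero-edges of $\sba$ together with $f$), which plays the same role as path~(i).

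The two remaining variants, arising from the auxiliary digraph $D^{aux}$, reduce to the above. In the red-to-green arc case, the ``cycle'' $\wtc$ degenerates to two parallel cross-edges between $\sba$ and $B$ (one incident to $u$, one to $w$), and the pre-modification graph $\sba\cup B$ together with these two cross-edges is 2EC by a direct check using 2EC of $\sba$ and $B$; the small-block swap argument of the previous paragraph then applies verbatim. In the red--red--red directed-path case, two copies of the small-block argument are applied in sequence, one for $(\sba_1,\sba_2)$ and one for $(\sba_2,\sba_3)$, starting from the pre-modification graph $\sba_1\cup\sba_2\cup\sba_3$ plus the four cross-edges, which is 2EC because its contracted multigraph has two parallel edges between each consecutive pair. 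The main obstacle is the 4-cycle sub-case above: the edit is not a single deletion but a combined addition-and-deletion, so I must sequence the operations carefully and verify the hypothesis of Proposition~\ref{propo:2ecdiscard} at each intermediate step, which is where the explicit description of the surviving Hamiltonian $u,w$-path in $\sba$ is essential.
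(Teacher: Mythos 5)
Your proposal is correct and follows essentially the same route as the paper's proof: add the cycle edges first, then discard the swapped unit-edges one at a time, certifying each deletion via Proposition~\ref{propo:2ecdiscard} with one $u,w$-path (or $v_1,w$- and $v_2,u$-paths in the 4-cycle/diagonal case) inside the small block and a second, edge-disjoint path routed through $\wtc$ and the other blocks. The only cosmetic difference is that you enumerate the $D^{aux}$ variants separately, while the paper subsumes them by allowing $\wtc$ to be a cycle of $\wtg$ with $k\ge1$ (including two parallel edges), so no new argument is needed there.
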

\begin{proof}
Adding the edge~set of a cycle $\wtc$ (of $\wtg$) to the current~graph
$H$, call it $H^{prev}$, creates a 2EC subgraph $B^{new}$ that
contains all the 2ec-blocks $B_0,\dots,B_k$ (of $H^{prev}$) incident
to $\wtc$; note that $k$ is $\ge1$.

If the merge step discards a swappable edge $uw$ of say $B_0$, then
note that $\wtc$ contains two edges incident to $B_0$, one
incident to $u$ and one incident to $w$.  
Clearly, the resulting graph $H^{new}$ has two edge-disjoint $u,w$
paths (one is in $B_0$ and the other is in
$E(\wtc)\cup{E(B_1)}\cup\dots\cup{E(B_k)}$), hence, by
Proposition~\ref{propo:2ecdiscard}, $B^{new}$ is 2EC.

Suppose that the merge step applies the augmentation for a swappable
pair $\{u,w\}$ of $B_0$ such that $uw\not\in E(B_0)\subsetneq E(H^{prev})$;
note that $\wtc$ contains two edges incident to $B_0$, one
incident to $u$ and one incident to $w$.  
Clearly, $B_0$ is a 4-cycle; denote it by $u,v_1,w,v_2,u$.
Let $uv_1$ and $wv_2$ be zero-edges.
By the definition of a swappable pair, $v_1v_2\in E(G)-E(H^{prev})$.
Assume that the augmentation first adds $\{v_1v_2\}\cup E(\wtc)$
to $H^{prev}$, and then (sequentially) discards the unit-edges $v_1w$
and $v_2u$. The resulting graph $H^{new}$ has two edge-disjoint
$v_1,w$ paths (one is $v_1,v_2,w$ and the other is in
$E(\wtc)\cup{E(B_1)}\cup\dots\cup{E(B_k)}$), and has two edge-disjoint
$v_2,u$ paths (one is $v_2,v_1,u$ and the other is in
$E(\wtc)\cup{E(B_1)}\cup\dots\cup{E(B_k)}$), hence, by
Proposition~\ref{propo:2ecdiscard}, $B^{new}$ is 2EC.
\end{proof}

\newenvironment{absolutelynopagebreak}
  {\par\nobreak\vfil\penalty0\vfilneg
   \vtop\bgroup}
  {\par\xdef\tpd{\the\prevdepth}\egroup
   \prevdepth=\tpd}
\begin{absolutelynopagebreak}
{
{
\medskip
\noi
\fbox{ \begin{minipage}{\textwidth}

\textbf{Gluing Algorithm (outline)}
\begin{itemize}
\item[(1)]
\textbf{while} there exists a small 2ec-block of $H$
that has a good swappable pair
\begin{quote}
let $\sba$ be a small 2ec-block of $H$ that 
has a good swappable pair $\{u,w\}$;
\\
find a cycle $\wt{C}=\sba,B_1,B_2,\dots,B_k,\sba$ of $\wt{G}$,
where $k\ge2$, $B_1=B_u$, $B_k=B_w$,
$\sba,B_u,B_w$ are distinct and the 2ec-block $B_u$ (respectively,
$B_w$) is adjacent in $G$ to $u$ (respectively, $w$);
then, apply the augmentation that merges  $\sba$,$B_1$,\dots,$B_k$
into a single large 2ec-block using the swappable pair $\{u,w\}$
(such that the credit invariant is preserved);
\end{quote}

\item[(2)]
(every swappable~pair of each remaining small 2ec-block of $H$ is bad)
\begin{itemize}
\item[(2.a)]
construct the auxiliary digraph $D^{aux}$;

\item[(2.b)]
\textbf{while} $D^{aux}$ has a red~node
\begin{quote}
apply a valid augmentation by either merging three red~nodes
using two bad swappable pairs (such that the credit invariant is
preserved), or
merging a red~node with a green~node using a bad swappable pair
(such that the credit invariant is preserved), and then update
$D^{aux}$ appropriately;
\end{quote}
\end{itemize}

\item[(3)]
(every remaining 2ec-block of $H$ is large and has $\ge2$ credits)
\\
\textbf{while} $H$ has two~or~more 2ec-blocks
\begin{quote}
apply a valid augmentation via a cycle $\wt{C}$ of $\wt{G}$
(such that the credit invariant is preserved);
\end{quote}

\item[(4)]
\textbf{stop} ($H$ forms a single large 2ec-block that has $\ge2$ credits).

\end{itemize}
\end{minipage}
}
\medskip
}
}
\end{absolutelynopagebreak}

\medskip
\begin{proof}
(of {Proposition}~\ref{propo:gluing})
The proof follows from
Lemmas~\ref{lem:smallblocks:a},
\ref{lem:smallblocks:b},
\ref{lem:smallblocks:c},
\ref{lem:Daux},
\ref{lem:verify2ec},
and the preceding discussion.
At the termination of the gluing~step,
let $H'$ denote the current~graph;
$H'$ is a 2-ECSS of $G$ and it has $\ge2$ credits;
hence, $\cost(H')$ satisfies the claimed upper~bound.

Each merge~step can be implemented in polynomial time, and the
number of merge~steps is $O(|V(G)|)$, hence, the gluing~step can
be implemented in polynomial time.
\end{proof}

\medskip
\begin{proof} \label{prf:approxbydtwo}
(of {Theorem}~\ref{thm:approxbydtwo})
The proof follows from
{Proposition}~\ref{propo:bridgecover} (on the bridge~covering~step)
and
{Proposition}~\ref{propo:gluing} (on the gluing~step).
These two results imply that the algorithm runs in polynomial~time.

Let $H'$ denote the 2-ECSS of $G'$ computed at the termination of
the gluing~step, and let $H^{(0)}$ denote the current~graph at the
termination of the bridge~covering~step.
By {Propositions}~\ref{propo:bridgecover} and~\ref{propo:gluing},
$\cost(H') \leq \cost(H^{(0)})+\credit(H^{(0)}) - 2
	\leq \cost(\DTWO(G')) + \credit(\DTWO(G')) - 2 = \frac53 \cost(\DTWO(G')) - 2$.
\end{proof}
}

}

\clearpage
\section{ \label{s:appendix}  Appendix: Examples for \csplit\ and \redgadget}
{
In this section, we present instances $G$ of MAP that contain either
the \csplit\ obstruction or the \redgadget\ obstruction (and none
of the other six obstructions) such that
$\opt(G)/\cost(\DTWO(G))\approx\frac74$; each of these instances
has multiple copies of the relevant obstruction; one of these
instances is based on an instance given in \cite[Section~7.1]{cdgkn:map}.
We present another instance on 12~nodes that has one copy of the
\csplit\ obstruction such that our gluing step fails (that is, no
valid augmentation can be applied, see the pseudo-code in
Section~\ref{s:gluing}).

\input{Figures/figures-S34R8.tex}

The graph $G:=G^{(1)}$ of the first instance
has 12~nodes $u_i,v_i,w_i,x_i$ for $i\in[3]$,
six zero-edges $v_iw_i, u_ix_i$ for $i\in[3]$,
and 13~unit-edges:
$u_iv_i, w_ix_i$ for $i\in[3]$,
$v_1x_1$, $u_1u_2$, $w_1w_2$, $u_2x_3$, $v_2x_3$, $w_2v_3$, $u_3w_3$;
see the graph in Figure~\ref{f:S34R8-example}(a).
Let $H=\DTWO(G^{(1)})$ consist of the three 4-cycles of cost two,
$B_i=u_i,v_i,w_i,x_i,u_i$ for $i\in[3]$.
Observe that $B_2$ forms an \csplit{} of $G^{(1)}$.
Although $e=u_2v_2$ is a good swappable edge of the 2ec-block $B_2$
of $H$, there exists no augmenting cycle $\wtc$ in the graph
$\wtg=\widetilde{G^{(1)}}$ that allows $e$ to be discarded.
Although the 2ec-block $B_1$ of $H$
has a bad swappable pair $\{u_1,w_1\}$ and the 2ec-block $B_3$ of $H$
has a bad swappable pair $\{v_3,x_3\}$,
there is no valid augmentation that preserves the credit invariant,
see the pseudo-code of Section~\ref{s:gluing}.

The graph $G:=G_k^{(2)}$ of the second instance consists of
$k$ copies $J_1,\dots,J_k$ of a gadget subgraph $J$,
$B_0=w_1,\dots,w_6,w_1$, which is a 6-cycle of cost~three,
and two unit-edges between each $J_i$ and $B_0$.
The gadget subgraph $J$ consists of 8 nodes $v_1,\dots,v_8$ and 11 edges;
there are four zero-edges $v_1v_4,v_2v_3,v_5v_8,v_6v_7$, and
seven unit-edges $v_1v_2,v_1v_7,v_2v_5,v_3v_4,v_3v_8,v_5v_6,v_7v_8$;
see the subgraph induced by the nodes $v_1,\dots,v_8$ in
Figure~\ref{f:S34R8-example}(b); observe that 8~of the 11~edges form two
disjoint 4-cycles (namely, $v_1,v_2,v_3,v_4,v_1$ and $v_5,v_6,v_7,v_8,v_5$)
and the other three edges are $v_2v_5$, $v_3v_8$, and $v_1v_7$.
$G_k^{(2)}$ has two unit-edges between each copy of the gadget subgraph $J_i$
($i=1,\dots,k$) and $B_0$; these two edges are incident to the nodes
$v_1$ and $v_3$ of $J_i$ (see the illustration in
Figure~\ref{f:S34R8-example}(b)) and to the nodes $w_1$ and $w_4$ of $B_0$.
Observe that the subgraph of $G_k^{(2)}$ consisting of $B_0$ and the two
disjoint 4-cycles
of each copy of the gadget subgraph is a 2-edge~cover of $G_k^{(2)}$
of cost $4k+3$.
Hence, $\cost(\DTWO(G_k^{(2)}))\leq 4k+3$.
Moreover, $\opt(G_k^{(2)}) \geq 7k+3$, see \cite[Proposition~28]{cdgkn:map}.

The graph $G:=G_k^{(3)}$ of the third instance consists of
$k$ copies $L_1,\dots,L_k$ of an \redgadget{} obstruction $L$,
$B_0=w_1,\dots,w_6,w_1$, which is a 6-cycle of cost~three,
and two unit-edges between each $L_i$ and $B_0$.
The subgraph $L$ consists of 8 nodes $v_1,\dots,v_8$ and 11 edges;
there are four zero-edges $v_1v_4,v_2v_3,v_5v_8,v_6v_7$, and
seven unit-edges $v_1v_2,v_1v_5,v_2v_8,v_3v_4,v_4v_6,v_5v_6,v_7v_8$;
see the subgraph induced by the nodes $v_1,\dots,v_8$ in
Figure~\ref{f:S34R8-example}(c); observe that 8~of the 11~edges form two
disjoint 4-cycles (namely, $v_1,v_2,v_3,v_4,v_1$ and $v_5,v_6,v_7,v_8,v_5$)
and the other three edges are $v_1v_5$, $v_2v_8$, and $v_4v_6$.
$G_k^{(3)}$ has two unit-edges between each copy of the gadget subgraph $L_i$
($i=1,\dots,k$) and $B_0$; these two edges are incident to the nodes
$v_4$ and $v_8$ of $L_i$ (see the illustration in
Figure~\ref{f:S34R8-example}(c)) and to the nodes $w_1$ and $w_4$ of $B_0$.
Observe that the subgraph of $G_k^{(3)}$ consisting of $B_0$ and the two
disjoint 4-cycles of
each copy of the gadget subgraph is a 2-edge~cover of $G_k^{(3)}$
of cost $4k+3$.
Hence, $\cost(\DTWO(G_k^{(3)}))\leq 4k+3$.
Moreover, $\opt(G_k^{(3)}) \geq 7k+3$;
this holds because any 2-ECSS of $G$ that contains all the zero-edges
induces a connected subgraph of minimum degree two on the node-set
$V(L)$ of each copy of $L$, and such a subgraph of $L$ has cost
$\geq5$; hence, an optimal 2-ECSS of $G$ contains $\geq5$ of the
unit-edges of $L_i$ as well as the two unit-edges between $L_i$ and
$B_0$, for each $i\in[k]$.

}

\medskip
\noi\textbf{Acknowledgments}: We are grateful to several colleagues for
their careful reading of preliminary drafts and for their comments.

{

\bibliographystyle{abbrv}
\bibliography{map53}

}

\end{document}